\theoremstyle{thmstyleone}%
\newtheorem{theorem}{Theorem}[section]% meant for sectionwise numbers
\newtheorem{lemma}[theorem]{Lemma}% 
\theoremstyle{thmstyletwo}%
\theoremstyle{thmstylethree}%
\newtheorem{definition}{Definition}[section]%
\newcolumntype{L}[1]{>{\raggedright\let\newline\\\arraybackslash\hspace{0pt}}m{#1}}
\newcolumntype{C}{>{\centering\arraybackslash}X}
\newcolumntype{s}{>{\centering\arraybackslash\hsize=\hsize}X}
\newcolumntype{R}[1]{>{\raggedleft\let\newline\\\arraybackslash\hspace{0pt}}m{#1}}
\newcommand{\n}{\nonumber \\ }
\renewcommand{\d}{\mathrm{d}}
\newcommand{\Bm}{{\vec{m}}}
\newcommand{\Bw}{{\vec{\omega}}}
\newcommand{\Bt}{{\vec{\theta}}}
\newcommand{\N}{{\mathbb{N}}}
\newcommand{\R}{{\mathbb{R}}}
\newcommand{\Q}{{\mathbb{Q}}}
\newcommand{\Z}{{\mathbb{Z}}}
\newcommand{\C}{{\mathbb{C}}}
\newcommand{\T}{{\mathbb{T}}}
\newcommand{\QCA}{\mathrm{QCA}}
\newcommand{\inv}{\mathrm{inv}}
\def\bra#1{\langle#1|}
\def\ket#1{|#1\rangle}
\def\qexp#1#2{\bra{#2}#1\ket{#2}}
\def\cexp#1{\langle#1\rangle}
\def\map#1#2{\left[#1,#2\right]}
\def\ho#1#2{\pi_0\left[#1,#2\right]}
\def\tr#1{\mathrm{Tr}\left[#1\right]}
\begin{document}

\title{Topological Phases of Many-Body Localized Systems: Beyond Eigenstate Order}

\author{David M. Long}
\email{dmlong@stanford.edu}
\affiliation{Condensed Matter Theory Center and Joint Quantum Institute,\\Department of Physics, The University of Maryland, College Park, Maryland 20742, USA}
\affiliation{Department of Physics, Stanford University, Stanford, California 94305, USA}

\author{Dominic V. Else}
\affiliation{Perimeter Institute for Theoretical Physics, Waterloo, Ontario N2L 2Y5, Canada}

\date{\today}

\begin{abstract}
    Many-body localization (MBL) lends remarkable robustness to nonequilibrium phases of matter.
    Such phases can show topological and symmetry breaking order in their ground and excited states, but they may also belong to an \emph{anomalous localized topological phase} (ALT phase).
    All eigenstates in an ALT phase are trivial, in that they can be deformed to product states, but the entire Hamiltonian cannot be deformed to a trivial localized model without going through a delocalization transition.
    Using a correspondence between MBL phases with short-ranged entanglement and locality preserving unitaries---called quantum cellular automata (QCA)---we reduce the classification of ALT phases to that of QCA.
    This method extends to periodically (Floquet) and quasiperiodically driven ALT phases, and captures anomalous Floquet phases within the same framework as static phases.
    We considerably develop the study of the topology of QCA, allowing us to classify static and driven ALT phases in low dimensions.
    The QCA framework further generalizes to include symmetry-enriched ALT phases (SALT phases)---which we also classify in low dimensions---and provides a large class of soluble models suitable for realization in quantum simulators.
    In systematizing the study of ALT phases, we both greatly extend the classification of interacting nonequilibrium systems and clarify a confusion in the literature which implicitly equates nontrivial Hamiltonians with nontrivial ground states.
\end{abstract}

\maketitle

\section{Introduction}
\label{sec:intro}

    Many-body localization (MBL)~\cite{Anderson1958,Gornyi2005,Basko2006,Oganesyan2007,Pal2010,Huse2014fullyMBL,Schreiber2015,Smith2016,Imbrie2016MBL,deRoeck2023rigorous,Sierant2024review} can protect robust, topological phases even in systems which are far from equilibrium~\cite{Huse2013order,Parameswaran2018MBLsymm}. Localized systems can support spontaneous symmetry breaking~\cite{Huse2013order,Pekker2014eigord,Kjall2014,Parameswaran2018MBLsymm}, topological order~\cite{Wen2017review,Bauer2013TO,Slagle2015SPT,Chan2020SPT,Wahl2020mblto,Venn2022TO2d}, and nontrivial phases with external driving~\cite{Rudner2013,Po2016QCA,Titum2016,Nathan2019AFI,Zhang2021U1,Nathan2021hierarchy,Else2019longlived,Friedman2022qpedge,Dumitrescu2022,Long2021ALTP,Nathan2021pump,Sacha2015,Khemani2016,Else2016,vonKeyserlingk2016,Choi2017,Khemani2019brief,Else2020,Mi2021}. In fact, any Hamiltonian built out of commuting terms, of the kind frequently employed to study topological order~\cite{Kitaev2003anyons,Levin2005levinwen,Chamon2005fracton,Bravyi2011fracton,Haah2011code,Levin2012spt,Walker2012walkerwang,Burnell20143fermi}, becomes robustly localized by choosing strongly disordered coefficients for the commuting terms~\cite{Wahl2020mblto}.

    \begin{figure}
        \centering
        \includegraphics[width=\linewidth]{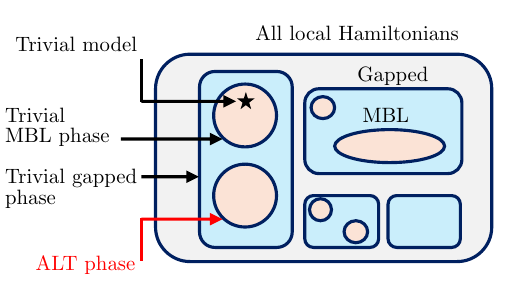}
        \caption{The topological classification of gapped or MBL Hamiltonians aims to identify the path connected components of gapped or MBL models (respectively), viewed as a subspace of all local Hamiltonians. MBL Hamiltonians are a subspace of gapped Hamiltonians. The path component (in either classification) containing a specified trivial model is called the trivial phase. An anomalous localized topological phase (ALT phase) is a nontrivial MBL phase with no eigenstate order, and so in particular no ground state order, and thus is contained inside the trivial gapped phase. There are also nontrivial gapped phases with no MBL Hamiltonians.}
        \label{fig:MBLGapped}
    \end{figure}
    
    Two MBL systems that can be deformed into one another without going through a delocalization transition belong to the same \emph{localized phase}~\cite{Titum2016,Po2016QCA,Roy2017loops,Nathan2019AFI,Friedman2022qpedge,Lapierre2022TLI,Long2021ALTP}. Two Hamiltonians certainly belong to distinct localized phases when they support distinct \emph{eigenstate order}~\cite{Huse2013order,Pekker2014eigord,Kjall2014,Parameswaran2018MBLsymm}---their eigenstates cannot be deformed into one another without their correlation length diverging. Indeed, the lack of a delocalization transition throughout a deformation of the Hamiltonian implies \emph{all} eigenstates retain only short-ranged correlations.
    
    However, in this article, we argue eigenstate order is not the right concept to obtain a general understanding of localized topological phases. For one thing, not all ground state topological phases are compatible with MBL. More importantly, there exist \emph{anomalous localized topological phases}\footnote{The label ``anomalous'' descends from Refs.~\cite{Rudner2013,Titum2016}. Strictly, only the edges of these phases are anomalous in a similar sense to what is usually meant for topological phases of matter.} (ALT phases)~\cite{Rudner2013,Titum2016,Po2016QCA,Roy2017loops,Long2021ALTP,Lapierre2022TLI}. ALT phases have \emph{no} eigenstate order, but nonetheless cannot be deformed to a trivial model---in which the commuting terms have disjoint support~\cite{Hastings2013torus}---without going through a delocalization transition.

    The situation is illustrated in \autoref{fig:MBLGapped}, with the example eigenstate being the ground state. Within the space of all local Hamiltonians, we identify those with a finite energy gap above the ground state. The topological classification of ground states aims to enumerate and describe all connected components (phases) of this space of gapped Hamiltonians. The phase containing a specified trivial model, consisting of a sum of single-site operators, is called the trivial phase. Among these gapped models, some are actually MBL. The topological classification of MBL phases, in complete analogy to gapped phases, aims to enumerate and describe all the connected components of this space of MBL Hamiltonians. However, there is not just one MBL phase within each gapped phase. There can be many, or none at all. Some or all of the nontrivial MBL phases which are contained within the trivial gapped phase are ALT phases.
    
    The existence of ALT phases in driven systems is recognized~\cite{Rudner2013,Nathan2017magnet,Po2016QCA,Roy2017loops,Long2021ALTP,Long2022layer,Nathan2021hierarchy,Zhang2021U1,Zhang2022SPTentangler}, but in static systems, ALT phases have only been discussed in the single-particle context~\cite{Lapierre2022TLI}. Nonetheless, we propose that the static interacting models of Refs.~\cite{Walker2012walkerwang,Burnell20143fermi,Shirley2022semionQCA,Haah20233d,Fidkowski2023pumping} in fact represent ALT phases upon disordering the Hamiltonian.
    
    \begin{figure}
        \centering
        \includegraphics{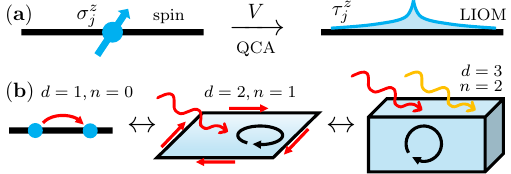}
        \caption{(a)~Many-body localization is characterized by commuting LIOMs, \(\tau^z_j\)~\cite{Huse2014fullyMBL}. The LIOMs of an MBL model with short-ranged entanglement can be prepared from single-site operators \(\sigma^z_j\) by a QCA, \(\tau^z_j = V \sigma^z_j V^\dagger\). ALT phases cannot be deformed back to a trivial model without delocalizing, even though they have no eigenstate order. (b)~Multi-tone driven ALT phases correspond to loops or tori of QCA. The \(\Omega\)-spectrum property \(\QCA_{d-1} \simeq \Omega \QCA_{d}\)~\eqref{eqn:Omega_spectrum} implies that \(n\)-tone-driven ALT phases in \(d\) dimensions correspond to \((d-1)\)-dimensional phases with \(n-1\) drives. The translation QCA in \(d=1\)~\cite{Gross2012QCA} corresponds with the anomalous Floquet phase in \(d=2\)~\cite{Po2016QCA}, which in turn corresponds with a two-tone-driven phase in \(d=3\) with circulating surface states.}
        \label{fig:ALT-QCA}
    \end{figure}
    
    This article systematizes the study of ALT phases in strongly interacting systems. Specifically, we focus on localized topological phases satisfying a condition we refer to as \emph{short-ranged entanglement}. We argue that instead of eigenstate order, the right way to characterize \emph{all} topological localized phases with short-ranged entanglement is via  \emph{quantum cellular automata} (QCA)~\cite{Gross2012QCA,Farrelly2020qca,Arrighi2019qcareview,Haah20233d}---the locality preserving unitaries which map physical degrees of freedom to local integrals of motion (LIOMs) [\autoref{fig:ALT-QCA}(a)] (\autoref{sec:MBLphases}).
    
    As QCA in spin systems have been classified in one~\cite{Gross2012QCA} and two~\cite{Freedman2020higherd} dimensions, this correspondence allows for a classification of ALT phases in low dimensional systems. 
    The same formalism extends to periodically (Floquet~\cite{Floquet1883}) and quasiperiodically driven ALT phases~\cite{Long2021ALTP}, which are classified by loops or tori of QCA, respectively. 
    We argue that loops of QCA are completely classified through a picture of \emph{pumping} lower-dimensional QCA to their boundary [\autoref{fig:ALT-QCA}(b)].
    More technically, we conjecture that QCA form a mathematical structure known as an \emph{\(\Omega\)-spectrum}~\cite{Kitaev2015SPT,Xiong2018minimalist,Gaiotto2019gencohomology}.
    This has wide-reaching implications for the classification of driven ALT phases, and in particular allows us to classify all \(d\)-dimensional ALT phases with \(n\) incommensurate driving frequencies such that \(d-n \leq 2\) (\autoref{sec:Classification}).

    Our methods extend immediately to symmetry-enriched ALT phases (SALT phases), which correspond to symmetric QCA~\cite{Gong2020MPU,Zhang2021U1,Zhang2022SPTentangler,Zhang2023note}.
    The \(\Omega\)-spectrum proposal provides a conjecture for the classification of SALT phases in terms of a \emph{generalized cohomology theory}~\cite{Kitaev2015SPT,Xiong2018minimalist,Gaiotto2019gencohomology}.
    With this perspective, we compute the classification of \(n\)-fold loops of symmetric QCA (that is, the \(n\)th homotopy group) for any group \(G\) and \(d-n \leq 1\). 
    This generalizes existing partial classifications for certain \(G\) and \(d=1\), \(n=0\)~\cite{Gong2020MPU,Zhang2021U1,Zhang2022SPTentangler,Zhang2023note}. 
    For these low dimensions, this calculation is not dependent on the \(\Omega\)-spectrum conjecture always being true (\autoref{sec:SymmetryEnrichment}).
    
    Finally, the pumping picture for QCA loops naturally provides explicit soluble models of driven ALT phases. 
    These models reveal the \emph{anomalous edge dynamics} of driven ALT phases, which cannot be realized in a short-range entangled MBL model of the same dimension as the edge.
    Similarly to short-ranged entangled phases of states, this anomalous edge dynamics is the generic signature of an ALT phase. Being finite depth circuits, these models are well suited to realization in current quantum simulators~\cite{Eckardt2017,Gross2017,Kjaergaard2020}  (\autoref{sec:Models}).

\section{Anomalous localized topological phases}
\label{sec:MBLphases}

    \subsection{Localized phases}

    We aim to classify \emph{many-body localized} phases. We use a strong notion of localization, namely the existence of a complete set of local integrals of motion (LIOMs). A localized phase is then defined as the set of LIOM Hamiltonians which can be deformed into one another while still possessing a complete set of LIOMs.
    
    More formally, a Hamiltonian \(H\) is many-body localized when it possesses a complete set of commuting LIOMs \(\tau^z_j\) such that~\cite{Huse2014fullyMBL}
    \begin{equation}
        H = \sum_j h_j \tau^z_j + \sum_{jk} h_{jk} \tau^z_j \tau^z_k + \cdots.
        \label{eqn:MBL_H}
    \end{equation}
    Here, \(j\) indexes LIOMs, ``\(\cdots\)'' indicates higher-body interactions, and \(h_{jk\ldots l}\) are real numbers which decay with the maximum distance between any of \(\tau^z_j, \ldots, \tau^z_l\). For the sake of simplicity, we consider local degrees of freedom on the integer lattice in \(d\) dimensions, \(\Z^d\).
    
    Equation~\eqref{eqn:MBL_H} identifies MBL Hamiltonians with commuting models, which are independently used as exactly solvable models of topological order~\cite{Kitaev2003anyons,Levin2005levinwen,Chamon2005fracton,Bravyi2011fracton,Haah2011code} and symmetry-protected topological (SPT) phases~\cite{Wen2017review,Chen20111dspt,Chen20112DSPT,Chen2012spt,Levin2012spt,Chen2013SPT,Burnell20143fermi,Chen2014decorated}. In the MBL context, the LIOMs may have exponential tails.
    
    MBL arising from strong random disorder is believed to be asymptotically unstable for \(d>1\)~\cite{deRoeck2017avalanche}, and doubt has arisen regarding the LIOM description even for \(d=1\)~\cite{Suntajs2020,Schulz2020,Sels2021obstruction,Sels2023dilute,Morningstar2022avalanche,Sels2021bathavalnce,Sierant2024review}. The case of correlated (including quasiperiodic) disorder remains open~\cite{Khemani2017qpuniv,Agrawal2020qp,Agrawal2022qphighd,Tu2023avalanche,Tu2024nee}. Regardless, the Hamiltonian~\eqref{eqn:MBL_H} continues to function as an effective model for astronomical prethermal timescales~\cite{Suntajs2020,Morningstar2022avalanche,Sels2021bathavalnce,Long2023phenomenology,deRoeck2023rigorous,Sierant2024review}. Indeed, the timescale for topological response in the Hamiltonian~\eqref{eqn:MBL_H} (which is finite in the LIOM model~\cite{Long2021ALTP,Lapierre2022TLI,Po2016QCA,Nathan2021pump,Nathan2019AFI,Friedman2022qpedge,Dumitrescu2022}) is parametrically separate from the thermalization time (which diverges exponentially as perturbations away from the LIOM model are reduced~\cite{Sierant2024review}). This vast parametric separation of time scales justifies treating the topological classification of LIOM Hamiltonians and the eventual thermalization process as separate problems. This article addresses the former. We will henceforth assume the stability of MBL, with the understanding that in some cases this may be an approximate statement applying to an effective model.
    
    We say two MBL Hamiltonians \(H_0\) and \(H_1\) are in the same \emph{localized topological phase} if there is a continuous path of MBL Hamiltonians \(H_t\) (\(t \in [0,1]\)) which connects them. We identify the trivial phase as the phase which contains an MBL Hamiltonian with LIOMs supported on single sites (\autoref{fig:MBLGapped}). We denote these trivial LIOMs \(\sigma^z_j\), even when the local Hilbert space dimension is not 2.
    
    We seek to construct a continuous path of \emph{locality preserving} unitaries \(V_t\) such that \(V_t\) conjugates the LIOMs of \(H_0\) into LIOMs of \(H_t\).
    A unitary \(V\) is called locality preserving if \(V A V^\dagger\) is localized near the support of \(A\) for all local operators \(A\).
    In a physical context, this means \(V A V^\dagger\) has some operator decomposition where the norm of a term in the decomposition decays with its distance from the support of \(A\). Any finite-time evolution under a local Hamiltonian is locality preserving in this sense.
    Mathematically, it is convenient to deal with \emph{strictly} locality preserving unitaries, where the support of \(V A V^\dagger\) grows by a finite distance \(r\). This includes all finite depth local quantum circuits.
    Locality-preserving unitaries are also called quantum cellular automata (QCA), and we will use this terminology in the rest of the paper.
    
    Ignoring, for the moment, long-range resonances in the spectrum of \(H_t\), the path of QCA \(V_t\) we need may be constructed from the adiabatic evolution of the many-body eigenstates~\cite{Berry1993friction,Jarzynski1995geometric,Hastings2010localityquantumsystems,Claeys2019floquetengineer,Pandey2020AGP,Gopalakrishnan2015lowfreq}.
    Including long-range resonances prevents an adiabatic construction of \(V_t\)~\cite{Khemani2015adiabatic}, but it is expected that sufficiently dilute resonances can be incorporated such that there is still a \(V_1\) that conjugates LIOMs of \(H_0\) into those of \(H_1\) and which is continuously connected to the identity. This is essentially the statement of the stability of MBL, as in Ref.~\cite{Imbrie2016MBL}.
    Thus, if \(H_0\) and \(H_1\) are in the same localized phase, their LIOMs are related by a continuous path of QCA.
    Conversely, if the LIOMs of \(H_0\) and \(H_1\) are related by a path of QCA, it is straightforward to find a path between \(H_0\) and \(H_1\). Indeed, just set \(H_t = V_{2t} H_0 V_{2t}^\dagger\) for \(t \in [0,1/2]\) and \(H_t = 2(1-t) V_1 H_0 V_1^\dagger + (2t-1) H_1\) for \(t\in[1/2,1]\). The latter is localized as \(V_1 H_0 V_1^\dagger\) and \(H_1\) commute.

    An MBL Hamiltonian is said to have eigenstate order if any of its eigenstates cannot be continuously deformed to a product state~\cite{Huse2013order}. A familiar example is the toric code Hamiltonian with disordered coefficients~\cite{Kitaev2003anyons}. This model is explicitly constructed from commuting terms---the LIOMs---and is well known to possess topological order.

    The remainder of this article will restrict to MBL models with \emph{short-ranged entanglement}. In the context of ground state phases, short-ranged entanglement means that all excitations on top of the state are topologically trivial. That is, there are local unitaries which can remove the excitations. We extend this notion to MBL models similarly. We say an MBL model has short-ranged entanglement if, for each LIOM \(\tau^z_j\), there is a local unitary \(\tau^x_j\) which can create and destroy excitations of \(\tau^z_j\) without affecting other LIOMs \(\tau^z_{k \neq j}\).
    We note that this property implies all eigenstates are short-range entangled in the usual sense. An excitation on top of any eigenstate can be removed by applying \(\tau^x_j\) to all the excited LIOMs.
    
    An example of a nontrivial short-ranged-entangled MBL model is the nontrivial fixed point of the Kitaev Majorana chain~\cite{Kitaev2001majorana}. 
    In contrast, while the toric code has LIOMs (the plaquette and star operators), the short-ranged entanglement condition is not met: the anyons can only be annihilated in pairs, and an isolated anyon cannot be removed locally~\cite{Kitaev2003anyons}. Similarly, symmetry breaking phases are not short-ranged entangled, as domain walls cannot be annihilated locally.

    \subsection{Anomalous phases}

    Not all MBL phases are characterized by eigenstate order, as we describe in this section. Rather, the full classification of short-ranged entangled MBL phases can be obtained from the path-connected components of the space of quantum cellular automata in the same dimension, \(\pi_0(\QCA_d)\).
    
    First, we note that eigenstate order, despite being a statement about individual states, is a robust property of the entire MBL Hamiltonian. This is an immediate consequence of the existence of the path of unitaries \(V_t\) connecting LIOMs of Hamiltonians \(H_0\) and \(H_1\) in the same localized phase. Indeed, if \(\ket{\psi_0}\) was an eigenstate of \(H_0\), then \(\ket{\psi_t} = V_t \ket{\psi_0}\) is a path of short-range correlated states demonstrating that \(\ket{\psi_0}\) can be continuously deformed into an eigenstate \(\ket{\psi_1}\) of \(H_1\). Thus, \(H_0\) and \(H_1\) have the same eigenstate order. In contrapositive, Hamiltonians with distinct eigenstate order cannot be in the same localized phase.
    
    However, the converse is not true: even if they have the same eigenstate order, it need not be possible to deform \(H_0\) to \(H_1\). This phenomenon is exemplified by ALT phases, which have \emph{no} eigenstate order, but which cannot be deformed to a trivial MBL model with LIOMs supported on a single lattice site.
    
    An example of a nontrivial ALT phase is the three-dimensional 3-fermion Walker-Wang model~\cite{Walker2012walkerwang,Burnell20143fermi,Haah20233d}. In the absence of symmetry, this model is believed to have no eigenstate order~\cite{Burnell20143fermi,Fidkowski2020beyond}, but nonetheless it supports topologically ordered surface states protected by localization. 
    Phrased in the language of MBL, \cite[Theorem II.4]{Haah20233d} showed that the 3-fermion model, and, in fact, any MBL Hamiltonian \(H\) with short-ranged entanglement, can be prepared from a trivial model by a QCA. That is, there is a QCA \(V\) such that \( V^\dagger H V \) has single-site LIOMs \(\sigma^z_i\). (We review the proof in Appendix~\ref{app:ALTtoQCA}.)
    Note that, given an MBL Hamiltonian $H$, the choice of QCA $V$ is not unique. If we replace $V$ by $VP$, where $P$ is any QCA that maps $\sigma_i^z$ to a linear combination of products of $\sigma_i^z$'s (we refer to such QCA $P$ as \emph{generalized permutations}, see Appendix~\ref{app:ALTtoQCA}), then $V^\dagger H V$ also has $\{ \sigma_i^z \}$ as LIOMs.

    Moreover, using the property that paths of non-degenerate MBL Hamiltonians induce paths of unitaries that relate their LIOMs, as described above, it follows that if $H$ and $H'$ are in the same short-range entangled MBL phase, then there exists a QCA $V'$ which can be continuously connected to $V$ such that $(V')^\dagger H' V'$ has $\{ \sigma_i^z \}$ as LIOMs.
    We see that every short-range entangled MBL Hamiltonian can be prepared from $\{ \sigma_i^z \}$ by a QCA, and those Hamiltonians in the same phase can be prepared by QCA in the same connected component.
    It follows that short-range entangled MBL phases can be classified by connected components of QCA (modulo generalized permutations).

    For the purposes of the classification, it is convenient to allow \emph{stabilization} by the stacking (tensor product) of additional ancillae. The physical picture is that, for any system, there is an environment of additional degrees of freedom, and the classification of ALT phases (or MBL phases more generally) should be stable to the presence of such ancillae. Formally, the QCA \(V_0\) and \(V_0 \otimes \mathbbm{1}\) are said to be stably equivalent, and are considered to prepare Hamiltonians in the same phase~\cite{Gross2012QCA,Hastings2013torus,Piroli2020qcatn,Gong2020MPU}. The space of stable equivalence classes of \(d\)-dimensional QCA is denoted \(\QCA_d\), and its connected components \(\pi_0(\QCA_d)\) form an Abelian group under tensor products of QCA~\cite{Gross2012QCA,Freedman2022group}. Short-range entangled localized phases are classified by the quotient of this group by generalized permutation QCA.
    
    The QCA formalism also lets us straightforwardly identify which phases are anomalous. For any simultaneous eigenstate \(\ket{0}\) of all \(\sigma^z_j\), \(V \ket{0}\) is an eigenstate of \(H\). This defines a map \(e_0(V) = V\ket{0}\) from \(\QCA_d\) to the space of invertible states~\cite{Kitaev2013SRE} (short-range entangled states) in \(d\)-dimensions, \(\inv_d\). Our earlier observation that MBL Hamiltonians in the same phase have the same eigenstate order may be improved to the statement that there is an induced homomorphism from the classification of QCA to the classification of invertible states
    \begin{equation}
        e^*_0: \pi_0(\QCA_d) \to \pi_0(\inv_d),
    \end{equation}
    where addition of states is again defined by stacking.
    The crucial observation is that the homomorphism \(e_0^*\) is, in general, neither surjective nor injective. The lack of surjectivity means that not all invertible states can arise as eigenstates of commuting models (for instance, chiral states~\cite{Kitaev2006honeycomb,Kapustin2020hall}). The lack of injectivity (even accounting for equivalence by permutations) precisely means that nontrivial ALT phases exist. Indeed, ALT phases may be identified as the nontrivial phases in the kernel of \(e^*_0\).

    Ultimately, this homomorphism is just a mathematical formalization of the situation illustrated in \autoref{fig:MBLGapped}. MBL phases are a subspace of gapped phases (and similarly, MBL phases with short-ranged entanglement are a subspace of gapped phases with short-ranged entanglement). The inclusion map from MBL phases to gapped phases induces a homomorphism between the classifications thereof, which is just to say that each MBL phase belongs to a definite gapped phase and the stacking operation on both sides of the map is the same. Some gapped phases have no MBL representatives, so that the homomorphism is not surjective. Further, distinct MBL phases might belong to the same gapped phase, so that the homomorphism is not injective.

\subsection{Driven phases}
\label{sec:DrivenPhases}

    Most examples of ALT phases in the literature are driven systems~\cite{Titum2016,Po2016QCA,Roy2017loops,Nathan2019AFI,Friedman2022qpedge,Lapierre2022TLI,Long2021ALTP}. 
    We specialize to the case of multi-tone time dependence~\cite{Ho1983,Else2019longlived,Long2022QPMBL} (which includes Floquet driving as a special case), such that the Hamiltonian can be parametrized as
    \begin{equation}
        H(t) = H(\theta_1(t), \ldots, \theta_n(t)),
    \end{equation}
    where \(\theta_k(t) = \omega_k t\) is a drive phase defined modulo \(2 \pi\). 
    We show that short-range entangled MBL phases with \(n\)-tones can be classified using the higher homotopy groups of the space of QCA, denoted \(\pi_n(\QCA_d)\).
    
    It is convenient to assemble the drive phases into a vector \(\Bt_t = (\theta_1(t),\ldots, \theta_n(t))\). Then we say that the system is MBL with short-ranged entanglement if it has a locality preserving \emph{generalized Floquet decomposition} (\(\hbar =1\))~\cite{Else2019longlived,Long2022QPMBL}
    \begin{equation}
        U(t) = V(\Bt_t) \exp\bigg[ -i t \bigg( \sum_j h_j \sigma^z_j + \cdots \bigg) \bigg] V(\Bt_0)^\dagger.
        \label{eqn:generalized_Floquet}
    \end{equation}
    Here, \(U(t)\) is the time evolution operator, the Floquet Hamiltonian \(H_F = \sum_j h_j \sigma^z_j + \cdots\) is trivially MBL, and the micromotion operator \(V(\Bt)\) is a family of QCA parametrized by \(\Bt\). Thus, in complete analogy with the static case, we have a correspondence between multi-tone-driven localized phases and parametrized families of QCA, \(V(\Bt)\), up to continuous deformations. That is, homotopy classes of continuous maps from the \(n\)-dimensional torus to \(\QCA_d\). This reduces to the \(n\)th homotopy group \(\pi_n(\QCA_d)\) when ignoring \emph{weak invariants}~\cite{Kitaev2009periodic,Roy2017periodic} corresponding to low dimensional cross sections of the torus. Collapsing some of these cross-sections to a point reduces the torus to a sphere, and the \(n\)th homotopy group precisely corresponds to families \(V(\Bt)\) parametrized by a variable on a sphere, up to continuous deformations.
    
    Once more, there is an induced homomorphism
    \begin{equation}
        e^*_0 : \pi_n(\QCA_d) \to \pi_n(\inv_d).
    \end{equation}
    The group \(\pi_n(\inv_d)\) can be interpreted as the classification of individual quasienergy states (states for which time evolution synchronizes with the drive). The map \(e_0^*\) identifies the eigenstate order associated to individual quasienergy states, and ALT phases are again identified as the nontrivial elements of \(\ker e^*_0\).
    
    As in the case of static systems, the parametrization by QCA is not unique, so that the classification of ALT phases is properly a quotient of \(\ker e^*_0\). Further details are provided in Appendix~\ref{app:ALTtoQCA}.

\section{Classification}
\label{sec:Classification}

    The classification of static ALT phases in \(d\) dimensions is determined by the classification of QCA---\(\pi_0(\QCA_d)\). There has been great recent progress in classifying QCA in spin systems---a complete classification now exists in one~\cite{Gross2012QCA} and two~\cite{Freedman2020higherd} dimensions, and several nontrivial phases are known in three dimensions~\cite{Haah20233d,Shirley2022semionQCA,Fidkowski2023pumping}. The classification of ALT phases inherits this progress.
    
    In this section, we use these known facts about the classification of QCA, and argue for new results regarding the higher homotopy groups of QCA, to compute the classification of both static and driven MBL phases in low dimensions. 
    Namely, we propose and motivate a conjecture that QCA form an \(\Omega\)-spectrum, which can be viewed as a bulk-boundary correspondence between driven MBL phases and phases with one less drive in one lower dimension (\autoref{sec:OmegaSpectrum}).
    This conjecture is supported by an explicit construction mapping QCAs in \(d-1\) dimensions to loops of QCAs (driven ALT phases) in one higher dimension, called the \emph{swindle map} (\autoref{sec:Swindle}).
    Assuming the \(\Omega\)-spectrum conjecture is true, we calculate the complete classification of \(n\)-tone-driven ALT phases in \(d\)-dimensions with \(d-n \leq 2\) (\autoref{sec:ComputeClassification}).

    \subsection{Loops and \texorpdfstring{\(\Omega\)}{Omega}-spectra}
    \label{sec:OmegaSpectrum}
    
    In terms of the correspondence between MBL phases and QCA in \autoref{sec:MBLphases}, driven systems can be understood as relating to loops or tori of QCA. Higher homotopy groups of QCA have received little attention in the literature (though, Ref.~\cite{Roy2017loops} deals with essentially the same context in a different langauge), and these will be the subject of most of our new mathematical results. In particular, we propose that QCA form a mathematical object known as an \(\Omega\)-spectrum~\cite{Kitaev2006honeycomb,Kitaev2013SRE,Kitaev2015SPT,Xiong2018minimalist,Gaiotto2019gencohomology}, which in more plain language means that there is a bulk-boundary correspondence between \(n\)-tone driven phases in \(d\) dimensions and \((n-1)\)-tone driven phases in \(d-1\) dimensions.
    
    We motivate this proposal by recalling that, for Floquet systems \((n=1)\), there is another correspondence, distinct to that of \autoref{sec:DrivenPhases}, known between localized phases and QCA. Reference~\cite{Po2016QCA} showed that a periodically driven localized phase can be classified by its action on the edge when placed on open boundary conditions. This action is described by a QCA of one dimension lower, \(\pi_0(\QCA_{d-1})\). On the other hand, both our construction and Ref.~\cite{Roy2017loops} show that Floquet localized phases should be classified by \(\pi_1(\QCA_d)\). We propose that these groups are isomorphic,
    \begin{equation}
        \pi_0(\QCA_{d-1}) \cong \pi_1(\QCA_d).
        \label{eqn:pi1_pi0}
    \end{equation}
    
    The relationship~\eqref{eqn:pi1_pi0} is reminiscent of a widely held conjecture that invertible states form an \(\Omega\)-spectrum~\cite{Kitaev2006honeycomb,Kitaev2013SRE,Kitaev2015SPT,Xiong2018minimalist,Gaiotto2019gencohomology}. 
    In topology, an \(\Omega\)-spectrum refers to a sequence of topological spaces \((Y_d)_{d \in \Z}\) together with (weak) homotopy equivalences (that is, continuous maps which induce isomorphisms on all homotopy groups)
    \begin{equation}
        Y_{d-1} \simeq \Omega Y_{d}
    \end{equation}
    where \(\Omega Y_{d}\) is the space of loops in \(Y_d\) which begin and end at a specified base point. That is, \(\Omega Y_d\) is the space of maps \(\gamma:[0,1] \to Y_d\) such that \(\gamma(0)=\gamma(1) = *_d \in Y_d\) is the base point of \(Y_d\). Thus, continuous paths in \(\Omega Y_d\) correspond to homotopies of loops, and so \(\pi_0(Y_{d-1}) \cong \pi_0(\Omega Y_d) \cong \pi_1(Y_d)\). In fact, a similar equation holds for all higher homotopy groups, \(\pi_{n-1}(Y_{d-1}) = \pi_{n}(Y_d)\). Iterating this equation shows that the homotopy groups \(\pi_n(Y_d)\) only depend on \(q = d-n\). Thus, knowing all the connected components \(\pi_0(Y_d)\) already provides the classification of all higher homotopy groups.
    
    \(\Omega\)-spectra play a special role in topology, as they serve to define \emph{generalized cohomology theories}~\cite{Kitaev2013SRE,Kitaev2015SPT,Gaiotto2019gencohomology,Xiong2018minimalist}. Given the prevalence of cohomology classifications in the study of invertible topological states, it should be plausible that \emph{generalized} cohomology may find application in this problem, as indeed it does. 
    Explicitly, the \(d\)th generalized cohomology group of a space \(X\) with coefficients in the \(\Omega\)-spectrum \((Y_d)_{d\in \Z}\) is defined to be
    \begin{equation}\label{eqn:GeneralizedCohomology}
        h^d_Y(X) = \ho{X}{Y_d}
    \end{equation}
    (where \(\map{X}{Y_d}\) is the space of continuous functions from \(X\) to \(Y_d\)).
    
    The hypothesis that invertible states form an \(\Omega\)-spectrum implies immediately that \(\pi_n(\inv_d) = \pi_{n-1}(\inv_{d-1})\). Further, it is believed that the classification of symmetry-enriched invertible states with symmetry group \(G\) are completely classified by the generalized cohomology of the classifying space \(BG\) [i.e.,~\(h^d_{\inv}(BG)\)]~\cite{Kitaev2013SRE,Kitaev2015SPT,Gaiotto2019gencohomology,Xiong2018minimalist}.

    We conjecture that the spaces of quantum cellular automata on the integer lattice \(\Z^d\) stabilized by adding ancillae, \(\QCA_d\), form an \(\Omega\)-spectrum,
    \begin{equation}
        \QCA_{d-1} \simeq \Omega \QCA_d \implies
        \pi_{n-1}(\QCA_{d-1}) \cong \pi_n(\QCA_d).
        \label{eqn:Omega_spectrum}
    \end{equation}
    The base point for the loops in \(\Omega \QCA_d\) is the identity.
    For \(d < 0\), we define \(\QCA_d = \Omega^{-d} \QCA_0\), making the negative degrees of this statement trivial.
    
    We unravel the physical interpretation of Eq.~\eqref{eqn:Omega_spectrum} in terms of a bulk-boundary correspondence through a picture of \emph{pumping}~\cite{Kitaev2013SRE,Gaiotto2019gencohomology,Xiong2018minimalist}. Consider a loop \(\gamma \in \Omega\QCA_d\). If the path \(\gamma(t)\) is differentiable, it can be realized by Hamiltonian evolution under some \(H(t)\). Then we can restrict \(\gamma\) to open boundary conditions by taking the partial trace of \(H(t)\) outside of the specified boundaries. The path of QCA generated by this partial trace, \(\tilde{\gamma}(t)\), reproduces the action of \(\gamma(t)\) far from the boundary. In particular, \(\tilde{\gamma}(1)\) acts as the identity on any local operator which is supported far from the boundaries. However, on the \((d-1)\)-dimensional edge, \(\tilde{\gamma}(1)\) may have a nontrivial action described by some \(v \in \QCA_{d-1}\)~\cite{Po2016QCA}. We say that \(\gamma\) pumps \(v\) across the system. Equation~\eqref{eqn:Omega_spectrum} hypothesizes that the loop space \(\Omega\QCA_d\) is characterized (up to continuous deformations) entirely by the pumped QCA \(v\).
    
    Indeed, continuous deformations of the bulk loop \(\gamma\) produce continuous deformations of \(v\)~\cite{Harper2017bosonsandfermions}, so the topological invariants of \(v\) may also be considered invariants for \(\gamma\). The difficult steps needed to prove Eq.~\eqref{eqn:Omega_spectrum} are to show the following:
    \begin{enumerate}
        \item\label{itm:vOccurs} all \(v \in \QCA_{d-1}\) can occur as edge actions of some \(\gamma\in \Omega\QCA_d\), and
        \item\label{itm:HomotopyEquiv} every topological feature of the loop space is captured by the edge action.
    \end{enumerate}

    \subsection{The swindle map}
    \label{sec:Swindle}
    
    The \emph{swindle map} \(S: \QCA_{d-1} \to \Omega \QCA_{d}\) demonstrates that \ref{itm:vOccurs} is true. Given some \(v \in \QCA_{d-1}\), the swindle \(S(v,t) \in \Omega\QCA_d\) pumps \(v\) to the edge.

    \begin{figure}
        \centering
        \includegraphics{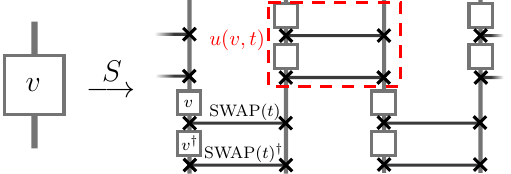}
        \caption{The swindle map \(S: \QCA_{d-1} \to \Omega \QCA_{d}\) maps a QCA \(v\) to a loop of QCA \(S(v,t)\) in one higher dimension. \(S(v,t)\) is a finite depth circuit built from \(v\), \(v^\dagger\) (squares), and a path from the identity to a swap gate, \(\mathrm{SWAP}(t)\) (line with crosses).}
        \label{fig:swindle}
    \end{figure}
    
    We can construct the swindle through a finite depth quantum circuit as follows~\cite{Kitaev2013SRE,Arrighi2011localizability,Arrighi2019qcareview,Farrelly2020qca} (\autoref{fig:swindle}). Let \(\mathrm{SWAP}_{x,x+1}(t)\) be a path of finite depth circuits from the identity to the swap gate between two \((d-1)\)-dimensional lattices.
    Then the path of QCA on a two-layer system
    \begin{equation}
        u_{x,x+1}(v,t) := 
        v_x  \mathrm{SWAP}_{x,x+1}(t) v_x^\dagger \mathrm{SWAP}_{x,x+1}(t)^\dagger 
        \label{eqn:uxxp1}
    \end{equation}
    interpolates between the identity and \(v_{x} v_{x+1}^\dagger\), where \(v_x\) acts as \(v\) on the \((d-1)\)-dimensional surface \(\{x\}\times \Z^{d-1}\). Indeed, \(u_{x,x+1}\) is a finite depth circuit, as we see by replacing the gates in the first \(\mathrm{SWAP}\) by their conjugation by \(v_x\). This operation preserves locality, so the result is still a finite depth circuit. Finally, the swindle \(S(v,t)\) is the brickwork circuit of \(u(v,t)\) gates shown in \autoref{fig:swindle}.
    
    The swindle \(S(v,t)\) demonstrates the pumping of the QCA \(v\). On an length \(L\) (even) system, the first layer of \(u\) gates implements the QCA \(v_0 v_1^{\dagger} v_2 v_3^\dagger \ldots v_L^\dagger\). The second layer of \(u\) gates cancels all the \(v\) factors in the bulk of the system, but leaves dangling unitaries \(v_0 v_L^\dagger\) acting on the left and right boundaries. The unitary \(v\) has been pumped to the left of the system, leaving behind its inverse \(v^\dagger\) on the right, thus proving \ref{itm:vOccurs}.
    
    The more precise statement capturing both \ref{itm:vOccurs} and \ref{itm:HomotopyEquiv} is that there is a homotopy equivalence between \(\QCA_{d-1}\) and \(\Omega \QCA_d\). That is, a map which---up to smooth deformations---has an inverse. 
    We propose that the swindle \(S\) is such an equivalence, and can sketch a proof as follows. Indeed, the homotopy inverse can be constructed through the pumping action of the loop \(\gamma \in \Omega\QCA_d\). Restrict \(\gamma\) to open boundaries (with a fixed choice of boundary for all \(\gamma\)), and define \(\bar{S}(\gamma) = v\), where \(v\in \QCA_{d-1}\) is the edge action induced by the restricted loop. It is straightforward to show that \(\bar{S}\) can be chosen such that \(\bar{S} S(v) = v\). Showing that \(S \bar{S}\) can be continuously deformed to the identity is more involved, but we present an explicit deformation in Appendix~\ref{app:Omega_spectrum}.
    
    The more formal argument that \(S\) is a homotopy equivalence for strictly local QCA is presented in Appendix~\ref{app:Omega_spectrum}, assuming that any loop of QCA can be restricted to open boundary conditions.
    Making this argument rigorous requires verifying that the restriction to open boundaries always exists, which we leave for future work.
    (Our previous argument involving a Hamiltonian evolution is, rigorously, unsatisfying because it allows for exponential tails in the QCA to develop parallel to the boundary for \(d>1\). Physically, this is not a problem, but it is an obstacle for proofs.)

    \subsection{Computing the classification}
    \label{sec:ComputeClassification}
    
    The \(\Omega\)-spectrum condition [Eq.~\eqref{eqn:Omega_spectrum}] implies that the bulk-boundary correspondence defined by pumping completely captures all multi-tone phases. In particular, iteratively applying Eq.~\eqref{eqn:Omega_spectrum}, the classification of driven localized phases in any dimension is reduced to either the classification of static QCA, or a driven classification of zero-dimensional QCA,
    \begin{equation}
        \pi_n(\QCA_d) \cong
        \left\{
        \begin{array}{l l}
            \pi_0(\QCA_{d-n}) & \quad\text{for } d\geq n, \\
            \pi_{n-d}(\QCA_0) & \quad\text{for } d < n.
        \end{array}
        \right.
    \end{equation}

    The connected components \(\pi_0(\QCA_{d})\) are known (for spin systems) for \(d= 1\)~\cite{Gross2012QCA} and \(d=2\)~\cite{Freedman2020higherd}. There are sound conjectures for \(d=3\)~\cite{Haah20233d,Haah2021clifford,Shirley2022semionQCA}.
    The space \(\QCA_0\) is a stabilization of the projective unitary group by stacking ancillae. We recall existing classifications and compute the homotopy groups of \(\QCA_0\) in Secs.~\ref{sec:connected_components} and \ref{sec:zero_dim} respectively.
    
    Thus, we have all the homotopy groups \(\pi_n(\QCA_d)\) for spin systems with \(d-n \leq 2\), and can make a strong conjecture for \(d-n = 3\). The results are listed in \autoref{tab:homotopy_groups}, and the corresponding classification of localized phases (excluding permutation QCA) is listed in \autoref{tab:ALT_phases}. Comparing to the classification of invertible states shows that \emph{all} the nontrivial phases in the tables are ALT phases.
    
    \renewcommand{\arraystretch}{1.2}
    \begin{table}[t]
        \begin{center}
        \begin{tabularx}{\linewidth}{|C | s s s s s s s s s|} 
         \hline
         \(q\) & \(3\) & \(2\) & \(1\) & \(0\) & \(-1\) & \(-2\) & \(-3\) & \(-4\) & \(-5\) \\ \hline
         \(h_\QCA^q\) & Witt & \(0\) & \(\log \Q^\times\) & \(0\) & \(\Q/\Z\) & \(0\) & \(\Q\) & \(0\) & \(\Q\) \\ 
         \(h_{\inv}^q\) & \(0\) & \(\Z\) & \(0\) & \(0\) & \(0\) & \(\Z\) & \(0\) & \(0\) & \(0\) \\ 
         \hline
        \end{tabularx}
        \end{center}
        \caption{The homotopy groups \(\pi_n(\QCA_d) =: h_\QCA^q\) are a function of \(q = d-n\), as is \(\pi_n(\inv_d) =: h_{\inv}^q\). For spin systems \(h_\QCA^3\) is conjectured to be the Witt group of modular tensor categories~\cite{Haah20233d,Haah2021clifford,Shirley2022semionQCA,Haah2023invertible,Fidkowski2023pumping}; \(h_\QCA^2\) is trivial~\cite{Freedman2020higherd}; \(h_\QCA^1\) is isomorphic to \(\log\Q^\times\)~\cite{Gross2012QCA}; and \(h_\QCA^q\) for \(q \leq 0\) is trivial for even \(q\), \(\Q/\Z\) for \(q=-1\), and \(\Q\) for odd \(q \leq -3\).
        In all groups in the table, the group operation is addition.}
        \label{tab:homotopy_groups}
    \end{table}
    \renewcommand{\arraystretch}{1.0}
    \begin{table}[t]
        \begin{center}
        \begin{tabularx}{\linewidth}{|C | C C C C C C|} 
         \hline
         \diagbox[height=1.5\line]{\(d\;\;\)}{\(\;\;n\)} & \(0\) & \(1\) & \(2\) & \(3\) & \(4\) & \(5\) \\ \hline
         \(0\) & \(0\) & \(0\) & \(0\) & \(\Q\) & \(0\) & \(\Q\) \\ 
         \(1\) & \(0\) & \(0\) & \(\Q/\Z\) & \(0\) & \(\Q\) & \(0\) \\ 
         \(2\) & \(0\) & \(\log \Q^\times\) & \(0\) & \(\Q/\Z\) & \(0\) & \(\Q\) \\ 
         \(3\) & Witt & \(0\) & \(\log \Q^\times\) & \(0\) & \(\Q/\Z\) & \(0\) \\ 
         \hline
        \end{tabularx}
        \end{center}
        \caption{The classification of \(n\)-tone driven \(d\)-dimensional localized phases of spins with short-range entanglement.
        The classifying group differs from \(\pi_n(\QCA_d)\) for \((d,n) \in \{(0,1), (1,0)\}\) due to the non-uniqueness of the parametrization by QCA (Appendix~\ref{subapp:nonunique}). 
        \emph{All} nontrivial elements of this table correspond to ALT phases.}
        \label{tab:ALT_phases}
    \end{table}

    \subsubsection{Connected components}
    \label{sec:connected_components}

    In general, the connected components \(\pi_0(\QCA_{d})\) are not known. However, these groups have been calculated for \(d \in \{1,2\}\) (and \(d=0\), which is trivial), and the groups are known to be nontrivial in some higher dimensions. Note that while \(\pi_0(X)\) has no canonical group structure for general spaces \(X\), \(\pi_0(\QCA_{d})\) inherits a group structure from either composition of QCA or tensor product. The resulting group operations are the same.

    For one-dimensional QCA, the complete classification of \(\pi_0(\QCA_1)\) was obtained by Ref.~\cite{Gross2012QCA}, which also explicitly defined an invariant which diagnoses which connected component a given QCA belongs to. All one-dimensional QCA are path connected to a \emph{shift}---a QCA which acts as a translation on some subset of the degrees of freedom. For example, in a chain of qubits, each QCA is connected to a QCA which translates all qubits by \(r\) sites either right or left. For qubit chains, QCA are classified by the group \(\Z\).

    However, we \emph{stabilize} QCA by allowing ancillae of any Hilbert space dimension to be added to any site by tensor product. Thus, we must assume that the chain also has qutrit (three-state) degrees of freedom, and other higher level qudits. While a 4-level qudit can be decomposed into two qubits, and so does not give new invariants compared to the qubit system, qutrits and other prime qudits all produce new invariants. A shift of qutrits cannot be connected to a shift of qubits.

    The resulting classifying group is succinctly presented as
    \begin{equation}
        \pi_0(\QCA_1) \cong \log \Q^\times,
    \end{equation}
    where \(\Q^\times\) is the set of positive rationals with multiplication as the group operation, and the group operation in \(\log\Q^\times\) is addition. Connecting to our discussion, this can be written as
    \begin{equation}
        \log\Q^\times \cong \bigoplus_{p \text{ prime}} \Z \log p.
    \end{equation}
    Each prime Hilbert space dimension is labeled by \(\log p\), and the integer coefficients give the shift index for that qudit species.

    The classification of two-dimensional QCA on spheres was shown to be trivial in Ref.~\cite{Freedman2020higherd},
    \begin{equation}
        \pi_0(\QCA_2) \cong 0.
    \end{equation}
    There is some subtlety in this statement. For QCA on a general two-dimensional manifold---that is, the lattice is obtained from a triangulation of a manifold---the classification may be nontrivial. However, all invariants relate to one-dimensional cycles in the manifold, and so belong to a lower dimensional part of the classification. For appropriate choices of topology\footnote{This includes the topology used in Appendix~\ref{app:Omega_spectrum}.} on \(\QCA_2\) (where the manifold is \(\R^2\)) this allows all two-dimensional QCA on the plane to be connected by continuous paths.

    The classification of three-dimensional QCA is still open. It is conjectured that \(\pi_0(\QCA_3)\) is isomorphic to the Witt group of modular tensor categories~\cite{Haah20233d,Haah2021clifford,Shirley2022semionQCA,Haah2023invertible}---that is, it has the same group structure as stacking two-dimensional topological orders, modulo an equivalence relation which identifies phases which may share a gapped edge. Presently, Clifford QCA (which map Pauli operators to Pauli operators) have been classified~\cite{Haah2021clifford}, and the classification is consistent with the Witt group conjecture for \(\pi_0(\QCA_3)\), but it has not been formally verified that all these phases of Clifford QCA remain nontrivial as non-Clifford QCA. Nonetheless, the existence of nontrivial elements of \(\pi_0(\QCA_3)\) is strongly supported by the constructions of Refs.~\cite{Haah20233d,Shirley2022semionQCA}. The models of those references can, indeed, be identified with two-dimensional topologically ordered phases. In fact, the MBL Hamiltonians corresponding to those models exhibit that topological order (specifically, the associated anyonic excitations) on their two dimensional surfaces~\cite{Burnell20143fermi,Haah20233d,Shirley2022semionQCA,Haah2023invertible}.

    \subsubsection{Zero dimensions}
    \label{sec:zero_dim}

    The space of zero-dimensional QCA with finite Hilbert space dimension, say \(N\), is naturally identified with the projective unitary group,
    \begin{equation}
        \QCA_0(N) \cong PU(N).
    \end{equation}
    The complication in finding the homotopy groups of \(\QCA_0\) is the stabilization by adding ancillae. That is, the \(N \to \infty\) limit.

    \emph{Unstable invariants.}---
    We first recall the homotopy structure of projective unitaries for fixed \(N\). The homotopy groups of \(PU(N)\) for large but finite \(N\) are all known,
    \begin{equation}
        \pi_n(PU(N)) \cong
        \left\{
        \begin{array}{l l}
            \Z_N & \quad \text{for }n=1, \\
             0 & \quad \text{for }n<2N\text{ and even,} \\
            \Z & \quad \text{for }n<2N\text{ and odd.}
        \end{array}
        \right.
    \end{equation}
    There are even integral formulas for winding number invariants indexing the elements of these groups. These formulas involve first making a lift from projective unitaries to linear unitaries. Indeed, given any continuous map \(V:S^n \to PU(N)\) (where \(S^n\) is the \(n\)-dimensional sphere) with \(n\neq 2\) it is possible to find a lift \(\tilde{V}: S^n \to U(N)\) to the usual unitary group. 
    (For \(n=2\), we already have that \(\pi_2(PU(N)) \cong 0\), so we do not need any formula to compute the homotopy class.)
    That is, if \(q: U(N) \to PU(N)\) is the quotient map from the unitary group to the projective unitaries, we have \(q\tilde{V} = V\).\footnote{In general, the existence of such a lift is guaranteed by \(V^*(\omega) \in H^2[S^n,\Z]\) being trivial, where \(\omega \in H^2[PU(N),\Z]\) is the Euler class for \(q: U(N) \to PU(N)\) and \(V^*\) is the map induced by \(V\) on singular cohomology. But \(H^2[S^n,\Z] \cong 0\) for \(n\neq 2\), so the map always exists for our cases of interest.}
    Defining \(A = (\mathrm{d} \tilde{V}) \tilde{V}^\dagger\), where \(\mathrm{d}\) is the exterior derivative on \(S^n\), the winding number invariants are given by~\cite{Long2021ALTP,Yao2017winding}
    \begin{equation}
        W_n[\tilde{V}] = C_n \int_{S^n} \tr{A^{\wedge n}},
        \label{eqn:winding_def}
    \end{equation}
    where \(C_n\) is a constant independent of \(N\) and \(A^{\wedge n} = A \wedge \cdots \wedge A\) is the \(n\)-fold wedge product of \(A\) with itself. For even \(n\) we have \(W_n = 0\), while for odd \(n \geq 3\) the winding number is an integer which only depends on the homotopy class of \(V\), so we can denote them \(W_n[V]\). For \(n=1\), only
    \begin{equation}
        w_1[V] = W_1[\tilde{V}] \bmod N \in \Z_N
    \end{equation}
    acts as an invariant for \(V\). Indeed, redefining \(\tilde V \to e^{i \theta} \tilde{V}\) does not affect \(V\), but takes \(W_1 \mapsto W_1 + N\).

    \emph{Stable invariants.}---
    Moving to larger \(N\), the complication is that these winding numbers are invariant under the direct sum with an ancilla, \emph{not} the tensor product. Indeed, denoting  \(\mathbbm{1}_M\) as the identity in \(PU(M)\), we have from Eq.~\eqref{eqn:winding_def}
    \begin{equation}
        W_n[V \otimes \mathbbm{1}_M] = M W_n[V].
    \end{equation}
    More generally, for \(Q \in PU(M)\),
    \begin{equation}
        W_n[V \otimes Q] = M W_n[V] + N W_n[Q].
        \label{eqn:unstable_Wadd}
    \end{equation}
    To find invariants which are stable under the addition of ancillae, we simply need to rescale the winding numbers as
    \begin{equation}
        \widetilde{W}_n[V] = C_n \int_{S^n} \frac{\tr{A^{\wedge n}}}{\tr{\mathbbm{1}}}.
    \end{equation}
    These are no longer integer invariants, but rather \emph{rational} invariants (modulo 1 for \(n=1\)). Then Eq.~\eqref{eqn:unstable_Wadd} shows that \(\widetilde{W}_n[V\otimes Q] = \widetilde{W}_n[V] + \widetilde{W}_n[Q]\).

    \emph{Classification.}---
    At a more formal level, for all \(N,M \in \N\) we have injective homomorphisms
    \begin{align}\label{eqn:nosymm_directed}
        \iota_{N \to NM} : PU(N) &\to PU(NM) \\
        V &\mapsto V \otimes \mathbbm{1}_M \nonumber
    \end{align}
    which define a directed system of topological groups. \(\QCA_0\) is the direct limit
    \begin{equation}
        \QCA_0 = \varinjlim PU(N)
        \label{eqn:QCA0_lim}
    \end{equation}
    with respect to this system. (We require that the multiplicity \(n_p\) of every prime \(p\) in the factorization \(N = 2^{n_2} 3^{n_3} \ldots p^{n_p} \ldots\) approaches infinity, \(n_p \to \infty\).) The homotopy groups of \(\QCA_0\) can then be computed from the induced system
    \begin{align}\label{eqn:dir_pi}
        \iota^*_{N \to NM} : \pi_n(PU(N)) &\to \pi_n(PU(NM)) \\
        W_n[V] &\mapsto M W_n[V]. \nonumber
    \end{align}
    By rescaling the winding numbers, we find that
    \begin{align}
        \pi_n(\QCA_0) &\cong \varinjlim \pi_n(PU(N)) \\
        &\cong \left\{
        \begin{array}{l l}
            \Q/\Z & \quad\text{for }n=1, \\
            0 & \quad\text{for }n\text{ even,} \\
            \Q & \quad\text{for }n\text{ odd.}
        \end{array}
        \right.
    \end{align}
    The group operation is given by the usual addition of rational numbers, and corresponds to the tensor product of QCA.
    
    \emph{Limited ancillae.}---
    We obtained a classification by rationals because we allowed ancillae of any dimension to be appended to the system. Thus, all Hilbert space dimensions can occur in the denominator of \(\widetilde{W}_n\). If, instead, we considered only a limited collection of Hilbert space dimensions for the ancillae, we would obtain a different classification. In fact, this classification will be a useful warm-up for the case of QCA with symmetry.
        
    Say our available ancillae dimensions are given by
    \begin{equation}
        \mathcal{R} = \{\rho \in \N\,:\, \text{ancilla of dimension }\rho\text{ exists}\},
        \label{eqn:univ_triv}
    \end{equation}
    and the system is built out of degrees of freedom with these dimensions. Then \(\mathcal{R}\) forms a submonoid (similar to a group, except without inverses) of \(\N\) under multiplication, as if \(\mathbbm{1}_\rho\) and \(\mathbbm{1}_\sigma\) can be ancillae, then so can \(\mathbbm{1}_{\rho \sigma} = \mathbbm{1}_\rho \otimes \mathbbm{1}_\sigma\). \(\mathcal{R}\) also contains \(1\), which corresponds to adding no ancillae.

    A similar directed system to Eqs.~\eqref{eqn:nosymm_directed} and \eqref{eqn:QCA0_lim} defines the limit \(\QCA_0^{\mathcal{R}}\) describing the stabilization of QCA by the ancillae in \(\mathcal{R}\). The only difference is that now the system proceeds through increasing \(\rho \in \mathcal{R}\), rather than \(N \in \N\). Thus, not all denominators may occur in the rescaled winding numbers. Defining the \emph{algebraic localization} (not to be confused with localization in physical space)
    \begin{equation}
        \mathcal{R}^{-1} \Z := \left\{\frac{p}{q} \in \Q \,:\,p \in \Z, q \in \mathcal{R}\right\},
    \end{equation}
    then \(\mathcal{R}^{-1} \Z\) is an additive group (indeed, a ring) and we have
    \begin{equation}
        \pi_n(\QCA^{\mathcal{R}}_0) \cong \left\{
        \begin{array}{l l}
            (\mathcal{R}^{-1} \Z)/\Z & \quad\text{for }n=1, \\
            0 & \quad\text{for }n\text{ even,} \\
            \mathcal{R}^{-1} \Z & \quad\text{for }n\text{ odd.}
        \end{array}
        \right.
    \end{equation}

\section{Symmetry enrichment}
\label{sec:SymmetryEnrichment}

    Physical systems frequently express symmetries which are crucial to their phenomenology, such as \(U(1)\)-symmetry leading to charge conservation. It is well known in many contexts that the presence of symmetry can affect topological classifications. Hamiltonians that were trivial may become nontrivial if some symmetry is imposed (called \emph{symmetry protection}), or more generally phases without symmetry may fracture into several distinct phases if symmetry is introduced (called \emph{symmetry enrichment})~\cite{Chen20112DSPT,Levin2012spt,Wen2017review}.

    These notions are equally important for localized phases. We say two systems belong to the same symmetric localized phase if they can be deformed into one another without breaking some symmetry or delocalizing. If the short-ranged entangled MBL Hamiltonian \(H\) commutes with an on-site unitary Abelian symmetry, then it can still have stable and complete LIOMs~\cite{Chen20112DSPT,Levin2012spt,Zhang2022SPTentangler,Potter2016_nonabsymm,Protopopov2017su2}, and the correspondence with QCA continues to hold. Thus, \emph{symmetry-enriched ALT phases} (SALT phases, symmetric localized phases without eigenstate order) can be classified by symmetric QCA. That is, QCA which commute with the symmetry action. In fact, at the level of QCA, there is no need to restrict to Abelian groups, and for the rest of this section we will consider the symmetry action to be a continuous unitary on-site representation of any topological group.

    The classification of SALT phases, and symmetry-enriched localized phases more generally, inherits progress from the classification of symmetric QCA, just as in the case without symmetry. The most significant progress has been in one dimension~\cite{Gong2020MPU,Zhang2021U1,Zhang2023note}. Of course, many symmetric QCA (including finite depth circuits) are also known in dimensions higher than one, and some progress has been made on the classification here as well~\cite{Zhang2022SPTentangler}.

    In this section, we unify and generalize this progress in several directions. First, we argue that symmetric QCA also form an \(\Omega\)-spectrum, and thus that the classification of driven phases can be deduced from the classification of static phases, or zero-dimensional driven phases (\autoref{sec:LoopsWithSymmetry}). We then make a complete classification of symmetric QCA in one dimension, generalizing the classifications of Refs.~\cite{Gong2020MPU,Zhang2021U1,Zhang2023note} to arbitrary groups (\autoref{sec:SymmetryOneDimension}). The classification in zero dimensions is analogous to the case without symmetry, and is deferred to Appendix~\ref{app:homotopy_symmetry}. Together with the \(\Omega\)-spectrum conjecture, this gives a classification of symmetric QCA (and subsequently SALT phases) with \(d-n \leq 1\).

    Before proceeding to technical details, we summarize several important points regarding the classification of symmetric QCA.
    We remarked in \autoref{sec:OmegaSpectrum} that the classification of \(d\)-dimensional invertible states with symmetry group \(G\) is believed to be given by \(h^d_{\inv}(BG)\), the \(d\)th generalized cohomolgy group of the classifying space \(BG\) with coefficients in the space of invertible states [as defined in Eq.~\eqref{eqn:GeneralizedCohomology}]~\cite{Kitaev2015SPT,Xiong2018minimalist,Gaiotto2019gencohomology}. The corresponding conjecture for QCA is that \(G\)-symmetric QCA are classified by \(h_\QCA^d(BG)\).
    Indeed, given the \(\Omega\)-spectrum condition, this conjecture can be motivated by a general analysis of symmetry defects, which is relatively insensitive to details regarding whether the objects being classified are states or QCA~\cite{Xiong2018minimalist,Gaiotto2019gencohomology}.
    
    However, the conjecture that the symmetric classification is given by \(h_\QCA^d(BG)\) must be carefully interpreted.
    An important observation of previous literature is that the classification of symmetric QCA depends not only on the abstract symmetry group \(G\), but also on the representation of the symmetry action \(\rho\), even when that representation is on-site~\cite{Gong2020MPU,Zhang2021U1,Zhang2023note}. 
    Thus, to understand the general classification of symmetric QCA, the symmetry representation---and the representations of any ancillae used in the stabilization procedure---must be specified, and not just the group \(G\).

    In the analysis of this section, we will not assume that ancillae carrying any representation of \(G\) are necessarily available. Instead, we define a countable set of representations which occur in the ancillae,
    \begin{equation}
        \mathcal{R} := \{ \rho \in \mathrm{Rep}(G) \,:\, \text{some ancillae carry }\rho \},
        \label{eqn:univ_R}
    \end{equation}
    where \(\mathrm{Rep}(G)\) is the tensor category of unitary equivalence classes of finite-dimensional unitary representations of \(G\). (This can be replaced with projective representations if appropriate.)

    The set \(\mathcal{R}\) defines the stabilization procedure for symmetric QCA. We say that two QCA \(V_1\) and \(V_2\) which carry a group action by \(G\) are \(\mathcal{R}\)-equivalent if ancillae from \(\mathcal{R}\) can be appended to each of them such that they become equal, and carry the same representation. Denote the space of symmetric QCA in \(d\) dimensions modulo \(\mathcal{R}\)-equivalence by \(\QCA_d^{\mathcal{R}}\). In the literature, two QCA have been called \emph{strongly equivalent} if they are connected by a symmetric path modulo \(\mathcal{R}\)-equivalence~\cite{Gong2020MPU,Zhang2023note}, which we see is just the statement that they belong to the same path component of \(\QCA_d^{\mathcal{R}}\).
    
    The set \(\mathcal{R}\) is a multiplicative submonoid of \(\mathrm{Rep}(G)\) under tensor product. If \(\rho\) and \(\sigma\) are representations carried by some ancillae, then \(\rho \otimes \sigma\) is the representation of the ancilla formed by stacking both ancillae. \(\mathcal{R}\) also contains the trivial one-dimensional representation \(1\), which corresponds to appending no ancillae. This definition of \(\mathcal{R}\) generalizes Eq.~\eqref{eqn:univ_triv}, for which we can regard each ancilla as a representation of the trivial group.\footnote{One can compare \(\mathcal{R}\) to the closely related notion of a \emph{\(G\)-universe}~\cite{Greenlees1995equivariant}, which is defined similarly, except that a \(G\)-universe is usually required to be closed under direct sum, rather than tensor product.}
    
    Indeed, the ingredient which allows us to complete the classification of symmetric QCA in one-dimension is to recognize that the elements of \(\mathcal{R}\) can be treated analogously to the dimensions of available qudits.
    We can define a shift index as for one-dimensional QCA without symmetry, only rather than considering ratios of dimensions we deal with formal ratios of representations.
    The resulting classification is similar to that in \autoref{tab:homotopy_groups}, except that rational numbers \(p/q\) are replaced by such formal ratios, \(\rho/\sigma\). In particular, if ancillae carrying any finite dimensional representation can be added to the system, one-dimensional symmetric QCA are completely classified by the \emph{Grothendieck group} of projective representations of \(G\). 
    
    We summarize a few examples. The classification of QCA without symmetry~\cite{Gross2012QCA} can be regarded as a special case where \(G=0\) is the trivial group. In this case, projective representations are just (projective) vector spaces classified by their (nonzero) dimension \(N \in \N^\times\), which multiply under tensor product. The Grothendieck group consists of formal ratios of dimensions, and is just the group of positive rationals with multiplication, \(\Q^\times\). In \autoref{tab:homotopy_groups}, we express this as \(\log \Q^\times\) so that the group operation is addition. 
    Number-conserving, that is \(U(1)\)-symmetric, QCA were found to be classified by rational polynomials in Ref.~\cite{Zhang2021U1}. This also follows from the observation that projective representations of \(U(1)\) are classified by Laurent polynomials with positive integer coefficients, and that the Grothendieck group thereof precisely consists of rational polynomials. 
    Finally, the classification with finite \(G\) and all ancillae available collapses to the classification of SPT states, \(H^2[G;U(1)]\), times the shift index classification, \(\log \Q^\times\)~\cite{Gong2020MPU,Zhang2023note}. This occurs because the representation ring of a finite group has zero divisors, which causes many elements to become identified in the Grothendieck group.

    Finally returning to the conjecture for the general classification, we suspect that taking \(\mathcal{R} = \mathrm{Rep}(G)\) to include all representations of \(G\) will give
    \begin{equation}\label{eqn:BGclassification}
        \pi_{n}(\QCA_d^{\mathrm{Rep}(G)}) \cong h^{d-n}_{\QCA}(BG).
    \end{equation}
    While \(h^{d-n}_{\QCA}(BG)\) is difficult to compute in general, this conjecture appears to be compatible with the classification for \(d-n \leq 1\). 

    \subsection{Loops with symmetry}
    \label{sec:LoopsWithSymmetry}

    In this section, we extend the calculations of \autoref{sec:Swindle} to the case of symmetric QCA, and so justify the conjecture that the \(\Omega\)-spectrum condition
    \begin{equation}
        \QCA^{\mathcal{R}}_{d-1} \simeq \Omega\QCA^{\mathcal{R}}_d
    \end{equation}
    is satisfied.
    Indeed, we will show that if the swindle map \(S: \QCA_{d-1} \to \Omega\QCA_d\) is in fact a homotopy equivalence, then it also provides the homotopy equivalence for the case with symmetry.
    Together with \autoref{sec:SymmetryOneDimension} and Appendix~\ref{app:homotopy_symmetry}, the \(\Omega\)-spectrum condition gives us the classification of \(\pi_n(\QCA^{\mathcal{R}}_{d})\) for \(d-n \leq 1\).

    We show that the swindle map can be constructed so as to respect any on-site symmetry \(\rho\), and so we get a map
    \begin{equation}
        S^\rho : \QCA^{\mathcal{R}}_{d-1} \to \Omega\QCA^{\mathcal{R}}_d
    \end{equation}
    between symmetric QCA in \(d-1\) dimensions and the loop space of symmetric QCA in \(d\) dimensions.
    
    The swindle map is constructed in terms of a path \(\mathrm{SWAP}(t)\) from the identity map to a swap gate between two \((d-1)\)-dimensional layers with the same local Hilbert space dimension. Thus, in the context of QCA with an on-site \(G\)-action, we can repeat all the same arguments we made with the swindle map, provided we can find a path \(\mathrm{SWAP}^\rho(t)\) which commutes with the on-site representation \(\rho\).

    Consider two sites with Hilbert space \(\mathcal{H}_j \otimes \mathcal{H}_k\) and a representation \(\rho \otimes \rho\) of the group \(G\). It is a well known fact that the representation \(\rho \otimes \rho\) decomposes into a symmetric and alternating subspace. Indeed, we have
    \begin{equation}
        \mathrm{swap}_{jk}(1) \rho_g \otimes \rho_g = \rho_g \otimes \rho_g \mathrm{swap}_{jk}(1)
    \end{equation}
    for all \(g \in G\). Thus, all \(\rho_g \otimes \rho_g\) become block diagonal in an eigenbasis of the swap gate, \(\mathrm{swap}_{jk}(1)\). Decomposing
    \begin{equation}
        \mathcal{H}_j \otimes \mathcal{H}_k \cong \mathrm{Sym}(\mathcal{H}_j \otimes \mathcal{H}_k) \oplus \mathrm{Alt}(\mathcal{H}_j \otimes \mathcal{H}_k),
    \end{equation}
    where \(\mathrm{Sym}(\mathcal{H}_j \otimes \mathcal{H}_k)\) is the \((+1)\)-eigenspace of \(\mathrm{swap}_{jk}(1)\) and \(\mathrm{Alt}(\mathcal{H}_j \otimes \mathcal{H}_k)\) its \((-1)\)-eigenspace, we have
    \begin{equation}
        \rho\otimes \rho \sim \rho_{\mathrm{Sym}} \oplus \rho_{\mathrm{Alt}},
    \end{equation}
    where ``\(\sim\)'' is unitary equivalence.
    Then the path of operators
    \begin{equation}
        \mathrm{swap}^\rho_{jk}(t) = \mathbbm{1}_{\mathrm{Sym}} \oplus (e^{i \pi t} \mathbbm{1}_{\mathrm{Alt}})
    \end{equation}
    is continuous, connects \(\mathbbm{1}\) to the swap gate, and commutes with \(\rho_g\otimes \rho_g\) for all \(g \in G\) and \(t \in [0,1]\).

    Then, given some on-site group action \(\rho\) on \(\QCA_{d-1} \ni v\), we can construct a loop of QCA on a doubled Hilbert space
    \begin{equation}
        u^\rho_{x,x+1}(v,t) := 
        v_x  \mathrm{SWAP}^\rho_{x,x+1}(t) v_x^\dagger \mathrm{SWAP}^\rho_{x,x+1}(t)^\dagger,
    \end{equation}
    where \(\mathrm{SWAP}^\rho_{x,x+1}(t)\) is a product of \(\mathrm{swap}^\rho_{jk}(t)\) paths implementing a swap of the entire lattice. Multiplying these together gives an instance of the swindle map
    \begin{equation}
        S^\rho(v,t) = \left(\prod_{x \in 2\Z+1} u^\rho_{x,x+1}(v,t)\right) \left(\prod_{x \in 2\Z} u^\rho_{x,x+1}(v,t)\right),
    \end{equation}
    which respects the group action in the sense that
    \begin{equation}
        \left(\prod_x \rho_x\right) S^\rho(v,t) \left(\prod_x \rho^\dagger_x\right) = S^\rho(\rho v \rho^\dagger,t).
    \end{equation}
    That is, the group action on the \(d\)-dimensional loop agrees with the action on the \((d-1)\)-dimensional QCA.
    
    The pumping argument showing that \(S^\rho\) is a homotopy equivalence can also be applied without alteration, as it is still possible to restrict \(S^\rho\) to open boundaries. Indeed, \(S^\rho\) is composed of symmetric gates, so just removing gates which cross some interface in the lattice produces such a restriction.
    (The more formal argument in Appendix~\ref{app:Omega_spectrum} proceeded exclusively through interleaving finite depth circuits with swap gates, which can simply be replaced by symmetric swaps.)
    Thus, \(S^\rho\) gives a \(G\)-homotopy equivalence---a homotopy equivalence which respects the \(G\) action.

    In mathematical language, this makes \((\QCA_d)_{d\in\Z}\) a \emph{\(G\)-spectrum}, or more specifically a (naive) \(\Omega\)-spectrum with \(G\)-action. This defines a (Bredon) equivariant cohomology theory. On general grounds, it is known that the fixed point subspaces of the spectrum---QCA which commute with the group action---themselves form an \(\Omega\)-spectrum~\cite{Greenlees1995equivariant}. (An equivariant map must map fixed points to fixed points.)  The fixed point spaces are \(\QCA_d^{\mathcal{R}}\). We may also define \(\QCA_d^{\mathcal{R}}\) for negative \(d\) by \(\QCA_d^{\mathcal{R}} = \Omega^{-d}\QCA_0^{\mathcal{R}}\). Then we have
    \begin{equation}
        \pi_n(\QCA^{\mathcal{R}}_d) \cong \pi_0(\QCA^{\mathcal{R}}_{d-n}).
    \end{equation}
    More generally, there is a generalized cohomology theory
    \begin{equation}
        h^d_{\mathcal{R}}(X) := \ho{X}{ \QCA^{\mathcal{R}}_d}
    \end{equation}
    which classifies parametrized families of symmetric QCA.

    \subsection{Connected components in one dimension}
    \label{sec:SymmetryOneDimension}

    The study of the connected components of symmetric QCA has been considered in Refs.~\cite{Gong2020MPU,Zhang2021U1,Zhang2023note}, where complete classifications were found for the symmetry group \(U(1)\) and finite cyclic groups of prime order. In this section, we use some technology of representation theory to generalize these classifications to arbitrary topological groups \(G\) and countable collections of ancillae representations \(\mathcal{R}\).

    \subsubsection{Structure of one-dimensional symmetric QCA}

    Two one-dimensional QCA \(V_1\) and \(V_2\) are path equivalent if and only if \(V_1 V_2^\dagger\) is a finite depth circuit~\cite{Gross2012QCA,Farrelly2020qca}. Leveraging this fact simplifies the definition of invariants and proofs of classifications. Indeed, all one-dimensional QCA can be expressed as circuits of isometries~\cite{Cirac2017canform}. That is, possibly after coarse graining by blocking sites together, any QCA \(V\) on a one-dimensional chain with sites \(x\) and local Hilbert space dimensions \(N_x\) is equal to a product
    \begin{equation}
        V = \left( \prod_{x \in 2\Z+1} u_{x,x+1}^\dagger \right) \left( \prod_{x \in 2\Z} u_{x,x+1} \right)
    \end{equation}
    where
    \begin{equation}
        u_{x,x+1} \in U(N_x N_{x+1} \to N'_x N'_{x+1}).
    \end{equation}
    That is, \(V\) can be written as a finite depth circuit where the internal legs of the circuit do not necessarily have the same dimension as the external legs---the gates of the circuit are isometries, rather than unitaries. However, for the entire circuit to be unitary, we must have \(N_x N_{x+1} = N'_x N'_{x+1}\).
    
    Then \(V\) commutes with the on-site symmetry \(\prod_x \rho_x\) if and only if
    \begin{equation}
        u_{x,x+1} \rho_x(g) \rho_{x+1}(g) = v_x(g) v_{x+1}(g) u_{x,x+1}
        \label{eqn:symm_V}
    \end{equation}
    for all \(g \in G\) and \(x \in \Z\), where \(v_x(g) \in U(N'_x)\) is a unitary on the internal leg~\cite{Cirac2017canform}.
    
    The product of two consecutive \(v_x v_{x+1}\) is, by Eq.~\eqref{eqn:symm_V}, unitarily equivalent to the representation \(\rho_x \rho_{x+1}\). However, this only implies that \(v_x\) and \(v_{x+1}\) are \emph{projective} representations of \(G\). The classifications of Refs.~\cite{Gong2020MPU,Zhang2021U1,Zhang2023note} rely on the following observation: \(V\) is equal to a finite depth circuit with symmetric unitary gates (which implies they are connected by a symmetric path) and the same circuit geometry if and only if there are functions \(\phi_x:G \to U(1)\) such that \(e^{i \phi_x} v_x\) is unitarily equivalent to \(\rho_x\) for all \(x\). That is, the projective representation \(v_x\) can be lifted to an ordinary unitary representation, and that representation is precisely \(\rho_x\). A more concise statement of the condition is that \(v_x\) and \(\rho_x\) are isomorphic when both are regarded as projective representations.
    
    Indeed, suppose that there are unitaries \(w_x\) such that
    \begin{equation}
        e^{i \phi_x(g)}w_x v_x(g) w_x^\dagger =  \rho_x(g) \quad\text{for all }g \in G.
    \end{equation}
    From the fact that \(v_x v_{x+1}\) is unitarily equivalent to \(\rho_x \rho_{x+1}\), we conclude that \(\phi_x = -\phi_{x+1}\).
    Further, \(N_x = N'_x\) as \(w_x\) is unitary, so we can write \(V\) as a circuit of unitary gates
    \begin{equation}
        V = \left( \prod_{x \in 2\Z+1} u'^\dagger_{x,x+1} \right) \left( \prod_{x \in 2\Z} u'_{x,x+1} \right)
        \label{eqn:symm_decomp}
    \end{equation}
    with
    \begin{multline}
        u'_{x,x+1} = w_x w_{x+1} u_{x,x+1} \\
        = (e^{i \phi_x} w_x)(e^{i \phi_{x+1}} w_{x+1}) u_{x,x+1}
    \end{multline}
    such that
    \begin{align}
        u'_{x,x+1} \rho_x \rho_{x+1} &= w_x w_{x+1} v_x v_{x+1} u_{x,x+1} \\
        &= (e^{i \phi_x} w_x v_x w_x^\dagger) (e^{i \phi_{x+1}} w_{x+1} v_{x+1} w_{x+1}^\dagger)  \n
        &\qquad\qquad\qquad \times w_x w_{x+1} u_{x,x+1} \\
        &= \rho_x \rho_{x+1} u'_{x,x+1}
    \end{align}
    is symmetric.
    
    Conversely, if \(V\) can be written as a circuit with symmetric unitary gates in the same circuit geometry, as in Eq.~\eqref{eqn:symm_decomp}, then we must have that
    \begin{equation}
        \left( \prod_{x \in 2\Z+1} u'_{x,x+1} u^\dagger_{x,x+1} \right) \left( \prod_{x \in 2\Z} u_{x,x+1} u^{\prime \dagger}_{x,x+1} \right) = \mathbbm{1}.
    \end{equation}
    We see by, for instance, the fact that the operator entanglement across the \((x,x+1)\) bond in this product must vanish, that
    \begin{equation}
         u_{x,x+1} u^{\prime \dagger}_{x,x+1} = w_x w_{x+1}
    \end{equation}
    must be a tensor product of isometries \(w_x \in U(N_x \to N_x')\). Further, as the whole circuit is the identity, we have \(w_x^\dagger w_x = \mathbbm{1}\) and thus \(N_x' \geq N_x\) for all \(x\). However, \(N_x N_{x+1} = N'_x N'_{x+1}\), and these conditions can only be simultaneously satisfied if \(N_x = N'_x\) for all \(x\), and thus \(w_x\) is a unitary gate. Note that we did not reference symmetry here: two brickwork decompositions of \(V\) necessarily only differ by acting by unitaries on the internal legs, regardless of whether \(V\) is symmetric or the gates in either decomposition are symmetric.
    
    However, when \(u'\) is symmetric, we have
    \begin{align}
        v_x v_{x+1} u_{x,x+1} &= u_{x,x+1} \rho_x \rho_{x+1} \\
        &= w_x w_{x+1} u'_{x,x+1} \rho_x \rho_{x+1} \\
        &= w_x w_{x+1} \rho_x \rho_{x+1} u'_{x,x+1} \\
        &= (w_x \rho_x w^\dagger_x) (w_{x+1} \rho_{x+1} w^\dagger_{x+1}) u_{x,x+1},
    \end{align}
    so that \(v_x = e^{-i \phi_x} w_x \rho_x w^\dagger_x\), as we needed.

    Thus, an invariant for symmetric QCA in one dimension must somehow measure the difference between \(v_x\) and \(\rho_x\). We wish to use a formal ratio \(v_x / \rho_x\) as such an invariant. The next subsection explains this procedure.

    \subsubsection{Invariants}

    We aim to construct an invariant which measures the difference between \(v_x\) and \(\rho_x\). This is easier if both are regarded as projective representations, and so we will proceed with the construction treating them as such.

    Denote the set of unitary equivalence classes of finite-dimensional projective representations of \(G\) by \(\mathrm{Rep}_{\mathrm{proj}}(G)\). This becomes a commutative monoid when equipped with the tensor product operation.
    The result of the previous section is that the decomposition of \(V\) in Eq.~\eqref{eqn:symm_decomp} is a symmetric circuit if and only if
    \begin{equation}
        v_x = \rho_x
    \end{equation}
    as elements of \(\mathrm{Rep}_{\mathrm{proj}}(G)\) for all \(x\).

    However, \(v_x \in \mathrm{Rep}_{\mathrm{proj}}(G)\) cannot be used as an invariant: it is unstable to stacking ancillae, and presumably depends on position \(x\) and on re-partitioning the circuit geometry. These issues are dealt with through an a construction known as the \emph{algebraic localization} (which already appeared in a limited form in \autoref{sec:zero_dim}).
 
    The localization of \(\mathrm{Rep}_{\mathrm{proj}}(G)\) away from the submonoid \(\mathcal{R}\) is defined as the quotient
    \begin{multline}
        \mathcal{R}^{-1} \mathrm{Rep}_{\mathrm{proj}}(G) = \mathrm{Rep}_{\mathrm{proj}}(G) \times \mathcal{R}/\sim \\
        \text{where } (v_1, \rho_1) \sim (v_2, \rho_2) \iff \rho_2 v_1 \sigma = \rho_1 v_2 \sigma
    \end{multline}
    for some \(\sigma \in \mathcal{R}\). Denote the equivalence class of \((v, \rho)\) by \(v/\rho\). The composition rule \((v_1/\rho_1)(v_2/\rho_2) = v_1 v_2 / \rho_1 \rho_2\) makes \(\mathcal{R}^{-1} \mathrm{Rep}_{\mathrm{proj}}(G)\) a commutative monoid.
    This gives a precise meaning to the notion of a formal ratio.
    
    We propose that
    \begin{equation}\label{eqn:symm_invariant_1d}
        v_0/\rho_0 \in \mathcal{R}^{-1} \mathrm{Rep}_{\mathrm{proj}}(G),
    \end{equation}
    is a complete stable invariant for \(V \in \QCA_1^{\mathcal{R}}\). 
    
    Note that the invariant is well defined if we restrict to a particular circuit geometry. Choosing different circuit representations with the same geometry just conjugates each \(v_x\) by some unitary as \(w_x^\dagger v_x w_x\), as we saw in the last section. This, by definition, does not change the unitary equivalence class of \(v_x\).
    
    Next, we check stability to stacking ancillae. When appending additional ancillae to site \(x\) with representation \(\sigma_x\), \(V\otimes \mathbbm{1}\) acts trivially on the ancilla, so \(v_x\) just becomes \(v_x \otimes \sigma_x\). Then \(v_x \sigma_x/ \rho_x \sigma_x = v_x/\rho_x\) is unchanged. More generally, if we stack two QCAs with the same circuit geometry (but not necessarily the same dimensions of internal legs), we have that their putative invariants multiply. Thus, the map from QCAs to the proposed invariant is a homomorphism.
    
    Further, because \(v_x v_{x+1} = \rho_x \rho_{x+1} \in \mathcal{R}\), we have
    \begin{equation}
        \frac{v_x}{\rho_x} \frac{v_{x+1}}{\rho_{x+1}} = 1.
    \end{equation}
    So the invariant for \(x+1\) is just the inverse of that for \(x\), and has no new information. [In particular, the invariant always has an inverse for an actual QCA, even though \(\mathcal{R}^{-1} \mathrm{Rep}(G)\) is in general only a monoid.] Thus, we can just focus on \(x=0\), and henceforth drop the \(x\) index. Similarly, blocking three sites together (which maintains the brickwork circuit structure) does not change the invariant. This further implies that the invariant does not depend on the choice of brickwork geometry to express the circuit for \(V\), provided that it is nearest-neighbour. Indeed, given any two brickwork circuits for \(V\), just block sites until they have the same geometry. This does not change the invariant associated to either circuit, and once they are on the same geometry, we know that the invariants must agree.

    Now we must show that Eq.~\eqref{eqn:symm_invariant_1d} is, in fact, an invariant. That is, if we consider a symmetric path of QCA, \(V_t\), then the invariant is constant all along the path. Assume the QCA along the path have a maximum range, \(r\). Then we will write \(V_t\) as a circuit in a fixed geometry where the gates \(u_{x,x+1}(t)\) evolve continuously. Once we have done this, we have that
    \begin{equation}
        v_{xt} v_{(x+1)t} = u_{x,x+1}(t) \rho_x \rho_{x+1} u^\dagger_{x,x+1}(t)
    \end{equation}
    also evolves continuously, which gives that each \(v_{xt}\) is a continuous family of projective representations (that is, \(v_{xt}(g) \in PU(N_x)\) is continuous as a function of \((t,g) \in [0,1] \times G\) with the product topology). This evolution cannot change the unitary equivalence class of \(v_{xt}\)---one can construct a unitary relating different \(t\) by adiabatic continuation, which is always successful in a finite dimensional Hilbert space (even with accidental degeneracies~\cite{Kato1980}). Thus, the \(v_{x}/\rho_x\) invariant is, in fact, invariant along the path.
    
    To construct such a circuit decomposition of \(V_t\), first consider the Hamiltonian \( (i\partial_t V_t) V_t^\dagger = H_t\). As \(V_t\) has range \(r\), \(H_t\) can be written as a sum of terms supported on a set of diameter at most \(4r\). To see this, use that \(\|i\partial_t[V_t^\dagger a V_t, b]\| = \|[[a,H_t], V_t b V_t^\dagger]\| = 0\) for operators \(a\) and \(b\) further than \(r\) apart, and that all operators \(c\) further than \(2r\) from \(a\) can be written as \(V_t b V_t^\dagger\) for some \(b = V_t^\dagger c V_t\) at least \(r\) from \(a\). Then \([[a,H_t], c] = 0\) for all \(a\) and \(c\) at least \(2r\) apart. We can extract all terms in the Hamiltonian with support on site \(x\) as 
    \begin{align}
        H_{xt} &= H_t - \mathrm{tr}_x(H_t) \\
        &= \int \mathrm{d}U_x\, [H_t,U_x]U_x^\dagger
    \end{align}
    where \(\mathrm{tr}_x\) is the normalized partial trace, which we expressed as an integral over the normalized Haar measure on unitary operators on site \(x\). By directly substituting the integral formula, one finds that \([H_{xt}, c] = 0\) for all \(c\) supported at least \(2r\) from \(x\), so \(H_{xt}\) is supported in a ball of radius \(2r\) around \(x\), which has diameter at most \(4r\). Repeat this to iteratively define \(H_{x_nt} = \mathrm{tr}_{X_{n-1}}(H_t) - \mathrm{tr}_{X_n}(H_t)\) where \(X_n = \{x_i\}_{i=1}^n\) for some enumeration \((x_n)\) of the sites of the lattice with \(x_1 = x\). Then \(H_t\) is expressed as a telescoping sum
    \begin{equation}
        H_t = \sum_{n = 1}^\infty H_{x_n t},
    \end{equation}
    where each term is supported on a ball of radius \(2r\) around \(x_n\).

    Now we use \(H_t = \sum_{x \in \Z} H_{x t}\) to construct the circuit representation of \(V_t\). Block \(V_0\) into a brickwork circuit with blocks much larger than \(4r\), and only overlapping near their boundaries. Let \(H_{bt}\) be the sum of those \(H_{xt}\) which are supported entirely within a single block \(b\) in the first layer of the circuit. The isometry \(V_{bt}\) defined by solving  \(i \partial_t V_{bt} = H_{bt}V_{bt}\) with initial condition \(V_{b0}\) (a block of \(V_0\)) reproduces the action of \(V_t\) on operators \(a\) in the middle of the block, as we have the differential equation for \(a(t) = V_t a V_t^\dagger \),
    \begin{equation}
        i \partial_t a(t)  = [H_t, a(t)] = [H_{bt}, a(t)]
    \end{equation}
    where we used that no other term in the Hamiltonian has support in a radius \(r\) ball around the support of \(a\). Both \(a(t) = V_t a V_t^\dagger\) and \(a(t) = V_{bt} a V_{bt}^\dagger\) solve this equation with the same initial condition, \(a(0) = V_0 a V_0^\dagger = V_{b0} a V_{b0}^\dagger\), so \(a(t)\) is the same whether we evolve it with \(H_t\) or just \(H_{bt}\). We also see that evolution under \(H_{bt}\) does not couple opposite ends of the block---for instance, when evolving \(a\) supported near one boundary of \(b\), we can ignore any Hamiltonian terms in the middle of the block because \(H_{bt}\) reproduces \(V_t\) there and \(V_t\) does not couple the middle of the block to the distant boundary. In one dimension, this means that the evolved operator \(a\) cannot cross the region where we could drop the Hamiltonian terms. We then conclude that
    \begin{equation}
        V_t = \left( V_t \prod_{b} V_{bt}^\dagger \right) \left( \prod_{b} V_{bt} \right)
    \end{equation}
    is a brickwork circuit decomposition of \(V_t\), where the product over \(b\) is over the lower layer of blocks in the brickwork for \(V_0\). (Note that \(V^\dagger_{bt} V_{bt} = \mathbbm{1}\) when including both internal legs of the isometry.) Indeed, the \(\prod_b V_{bt}\) factor is clearly a product of disjoint isometries, each supported on the same blocks as the circuit for \(V_0\). The other factor has no effect in the middle of each block \(b\), and has a range much smaller than the size of \(b\). Thus, it is also a product of disjoint isometries supported near the boundaries of the blocks. This is the decomposition of \(V_t\) we sought.

    Finally, we show that the invariant is complete---two QCA \(V\) and \(V'\) with the same value of the invariant can be joined by a path of symmetric QCA. Because the values of the invariants achieved by some QCA form a group, it is sufficient to show that \(v/\rho = 1\) implies that \(V\) can be expressed as a unitary circuit with symmetric gates. By definition, \(v/\rho = 1\) if and only if there is a \(\sigma \in \mathcal{R}\) and unitary \(w\) such that
    \begin{equation}
        v \otimes \sigma = w^\dagger (\rho  \otimes \sigma) w.
    \end{equation}
    This is exactly the condition we showed resulted in \(V\) being a unitary circuit composed of symmetric gates in the last section, with the addition of an ancilla from \(\mathcal{R}\). So \(V\) is indeed a unitary circuit with symmetric gates.

    \subsubsection{Classification}

    We have an invariant valued in \(\mathcal{R}^{-1} \mathrm{Rep}_{\mathrm{proj}}(G)\).
    However, not all values of \(v/\rho\)
    can actually be achieved with a fixed \(\mathcal{R}\). Indeed, \(\mathcal{R}^{-1}\mathrm{Rep}_{\mathrm{proj}}(G)\) is not generally a group, but the connected components of symmetric QCA do form a group under tensor product. In the remainder of this section we characterize the isomorphism class of the classifying group in more detail.
    
    We observe that the only numerators \(v\) which may occur in the invariant \(v/\rho\) are those which occur as tensor factors in \(\sigma\) for some \(\sigma\in \mathcal{R}\). That is, the numerators belong to the \emph{saturation}
    \begin{multline}
        \hat{\mathcal{R}} = \{v \in \mathrm{Rep}_{\mathrm{proj}}(G)\,:\, \exists \sigma \in \mathrm{Rep}_{\mathrm{proj}}(G) \\
        \text{such that } v \sigma \in \mathcal{R}\}.
    \end{multline}
    Indeed, for \(v = v_0\), this element is \(\sigma = v_1\), which gives us \(\sigma v = \rho_0 \rho_1 \in \mathcal{R}\). The unitary similarity transformation relating these is \(u_{0,1}\).
    Note also that \(\hat{\mathcal{R}}\) is closed under multiplication by \(\mathcal{R}\), so the localization \(\mathcal{R}^{-1}\hat{\mathcal{R}}\) is well defined. Happily, \(\mathcal{R}^{-1}\hat{\mathcal{R}}\) is actually an Abelian group, with \((v/\rho)^{-1} = \rho \sigma/ v \sigma\), where the denominator is in \(\mathcal{R}\) and the numerator is in \(\hat{\mathcal{R}}\) [\(v\) demonstrates that there is an element of \(\mathrm{Rep}_{\mathrm{proj}}(G)\) such that \(v \rho \sigma \in \mathcal{R}\)].
    
    All the elements of \(\mathcal{R}^{-1}\hat{\mathcal{R}}\) may occur as invariants of some QCA. Suppose \(v_0/\rho_0 \in \mathcal{R}^{-1}\hat{\mathcal{R}}\). Then, as \(v_0\) belongs to \(\hat{\mathcal{R}}\), there are representations \(v_1 \in \mathrm{Rep}_{\mathrm{proj}}(G)\) and \(\rho_1 \in \mathcal{R}\) and a unitary \(u_{0,1}'\) such that
    \begin{equation}
        v_0 \otimes v_1 u_{0,1}' = u_{0,1}' \rho_1.
    \end{equation}
    We tensor in \(\rho_0\) to get
    \begin{equation}
        v_0 \otimes (v_1\otimes \rho_0) u_{0,1} = u_{0,1} \rho_0 \otimes \rho_1.
        \label{eqn:u01}
    \end{equation}
    Then we define a QCA \(V\) which commutes with the representation \(\prod_{x \in 2\Z} \rho_{0x} \otimes \rho_{1({x+1)}}\) by
    \begin{equation}
        V = \left( \prod_{x \in 2\Z+1} u_{x,x+1}^\dagger \right) \left( \prod_{x \in 2\Z} u_{x,x+1} \right),
    \end{equation}
    where \(u_{x,x+1}\) is a translation of \(u_{0,1}\) from Eq.~\eqref{eqn:u01}. 
    The invariant for the QCA \(V\) is, by construction, \(v_0/\rho_0\), or its inverse \(v_1 \rho_0/\rho_1\).

    Thus, the full classifying group is
    \begin{equation}
        \pi_0(\QCA_1^{\mathcal{R}}) \cong \log \mathcal{R}^{-1}\hat{\mathcal{R}},
    \end{equation}
    where we took a logarithm just to formally make the group operation addition. If \(\mathcal{R} = \mathrm{Rep}_{\mathrm{proj}}(G)\) consists of all projective representations of \(G\) this becomes the Grothendieck group of \(\mathrm{Rep}_{\mathrm{proj}}(G)\), which is by definition
    \begin{equation}\label{eqn:Grothendieck}
        \pi_0(\QCA_1^{\mathrm{Rep}_{\mathrm{proj}}(G)}) \cong \log \mathrm{Rep}_{\mathrm{proj}}(G)^{-1}\mathrm{Rep}_{\mathrm{proj}}(G).
    \end{equation}

    This classifying group can be realized as an extension of a projective part by a linear part. The linear part is a localization of the submonoid of \(\mathrm{Rep}_{\mathrm{proj}}(G)\) consisting of projective representations that admit lifts to linear representations. This submonoid is isomorphic to \(\mathrm{Rep}(G)/X_G\), where \(X_G\) is the group of units (invertible elements) in \(\mathrm{Rep}(G)\)---that is, one-dimensional representations---which all become identified as projective representations~\cite{Zhang2023note}.
    The projective part is classified by \(H^2[G;U(1)]\), as is well known.
    The induced homomorphism from evaluation on a symmetric product state, \(e_0^*\), just corresponds to sending a projective representation to its classifying cohomology class in \(H^2[G;U(1)]\), and thus reproduces the group cohomology classification of invertible states in one dimension~\cite{Chen20111dspt,Gong2020MPU}.

    The case of \(G = U(1)\) serves as a simple example~\cite{Zhang2021U1,Nathan2021hierarchy}. We take \(\mathcal{R} = \mathrm{Rep}_{\mathrm{proj}}(U(1)) \cong \N[x,x^{-1}]^\times/X_{U(1)}\) to consist of all representations of \(U(1)\), where the group of units \(X_{U(1)}\) consists of all monomials \(x^m\). (The quotient identifies representations which differ by a global phase.) Then \(\hat{\mathcal{R}} = \mathrm{Rep}_{\mathrm{proj}}(U(1))\) and the Grothendieck group is isomorphic to the multiplicative group of rational polynomials with positive coefficients, quotient by the group of units,
    \begin{equation}
        \mathcal{R}^{-1} \hat{\mathcal{R}} \cong \mathrm{Frac}\,\N[x]^\times/X_{U(1)}.
    \end{equation}
    Expressing the quotient in the additive notation, this is
    \begin{equation}
        \log \mathcal{R}^{-1} \hat{\mathcal{R}} \cong (\log \mathrm{Frac}\,\N[x]^\times)/\Z\log x.
    \end{equation}
    This is the complete classification of \(U(1)\)-symmetric QCA in one dimension~\cite{Zhang2021U1}.

    It has also been observed that including the regular representation for a finite group \(G\) in \(\mathcal{R}\), \(\rho_{\mathrm{reg}} \in \mathcal{R}\),  reduces the classification to just \(H^2[G;U(1)] \oplus \log \Q^\times\)~\cite{Gong2020MPU}. Indeed, in the current presentation it is clear that this is because \(\mathrm{Rep}(G)\) is not cancellative when \(G\) is a finite group. If \(\rho_{\mathrm{reg}}\) is the regular representation, then in \(\mathrm{Rep}(G)\) we have \(\rho_{\mathrm{reg}} \sigma = \mathrm{dim}(\sigma) \rho_{\mathrm{reg}}\), where \(\dim(\sigma)\) is the trivial representation with dimension \(\dim(\sigma)\). Then
    \begin{equation}
        \rho_{\mathrm{reg}} \sigma \mathrm{dim}(\sigma') =  \rho_{\mathrm{reg}} \sigma' \mathrm{dim}(\sigma),
    \end{equation}
    for any \(\sigma, \sigma' \in \mathrm{Rep}(G)\).
    Thus, \(\sigma/\sigma' =\mathrm{dim}(\sigma)/\mathrm{dim}(\sigma') \in \mathcal{R}^{-1}\hat{\mathcal{R}}\), and any two linear representations of the same dimension define the same element of the algebraic localization, which is then isomorphic to an extension of the projective part \(H^2[G;U(1)]\) by ratios of positive integers. The extension can be seen to split by observing that there is an \(H^2[G;U(1)]\) subgroup formed by ratios of a projective representation and a linear representation of the same dimension. Thus, we have \(\log \mathcal{R}^{-1} \hat{\mathcal{R}} \cong H^2[G;U(1)] \oplus \log \Q^\times\). [More correctly, \(\Q^\times\) here should itself be a localization of the natural numbers, \(\mathrm{dim}(\mathcal{R})^{-1} \mathrm{dim}(\hat{\mathcal{R}})\).]

\section{Models}
\label{sec:Models}

    The swindle map \(S(v,t)\) has been used until now as a theoretical tool. In fact, it also provides a useful construction for explicit models of \(d\)-dimensional driven (S)ALT phases, given a QCA (or family of QCA) \(v\) in one dimension lower. If \(v\) is nearest-neighbor, so is \(S(v,t)\) (in the \(\ell_\infty\) distance---it may involve diagonal couplings in the square lattice). Even better, a slightly modified model only involves coupling two qubits at a time in one dimension. Thus, these models are well suited to experimental realization in a variety of quantum simulation platforms~\cite{Eckardt2017,Gross2017,Kjaergaard2020}.

    Further, the swindle models are exactly solvable, in the sense that we can predict all observables for all times. Thus, they can explicitly demonstrate what, precisely, is anomalous about ALT phases. Namely, these models demonstrate anomalous edge dynamics---impossible in a short-range entangled MBL model of the same dimension as the edge---as a consequence of pumping nontrivial QCA. 
    
    That the swindle construction produces all \(n\)-tone phases relies on the conjectured \(\Omega\)-spectrum property, Eq.~\eqref{eqn:Omega_spectrum}, but the constructed phases being nontrivial and distinct does not.

    While the swindle map does not provide models of static ALT phases, we observe that Ref.~\cite{Haah2023invertible} implies that the edges of static ALT phases also cannot correspond to short-range entangled MBL models. Thus, anomalous edge dynamics are a generic feature of all ALT phases.

    In \autoref{sec:ChiralFloquet} we briefly describe a simple class of examples, building on the \emph{anomalous Floquet phases} of Refs.~\cite{Rudner2013,Po2016QCA}. These examples are built from swindles applied to a translation QCA, and give a simple picture illustrating the anomalous edge dynamics of ALT phases. Section~\ref{subapp:modified_swindle} constructs a more experimentally viable implementation of the swindle map, and \autoref{subsec:QP_pump} applies it to a very detailed investigation of a one-dimensional SALT phase with two drives and \(U(1)\)-symmetry, the \emph{quasiperiodic energy pump} (QP pump)~\cite{Long2021ALTP,Nathan2021pump,Long2022layer}. We include explicit expressions for the Hamiltonian and quantized observables which reveal the anomalous edge dynamics. The model we obtain through the swindle construction is relatively simple, at least in its operator content, paving the way for experiments on this phase.

    \subsection{Anomalous edge dynamics}
    \label{sec:ChiralFloquet}
    
    As a basic example, taking \(v\) in the swindle \(S(v,t)\) to be a one-dimensional translation---or shift---we construct a two-dimensional Floquet phase from a finite depth circuit composed of four layers of \(\mathrm{SWAP}\) gates. [A \(\mathrm{SWAP}\) gate conjugated by a translation as in Eq.~\eqref{eqn:uxxp1} is still a \(\mathrm{SWAP}\) gate.] This is precisely the model of the anomalous Floquet phase constructed in Refs.~\cite{Rudner2013,Po2016QCA}, which has circulating bulk orbitals and chiral edge dynamics---each Floquet cycle implements a unidirectional translation \(v\) on the left edge (\autoref{fig:ALT-QCA}). This behavior is impossible in a localized model in one dimension, both because particles are clearly not localized under this dynamics, and more fundamentally because no Hamiltonian evolution can implement a unidirectional translation.
    
    Iterating the swindle construction on the translation \(v\) produces a previously-unidentified phase of three-dimensional two-tone-driven MBL systems, with an explicit finite depth circuit model \(S(S(v,t),\nu t)\). This model pumps anomalous Floquet unitaries \(S(v,t)\) every period \(1/\nu\). To understand this, it is instructive to consider \(\nu \ll 1\), which allows us to consider \(\nu t\) as a slow adiabatic parameter deforming the period-1 evolution of \(S(v,t)\). For \(\nu t \approx 0\), the evolution everywhere in the lattice is a small-amplitude perturbation of a trivially localized model. The evolution deforms as \(\nu t\) grows, but remains localized. By \(\nu t \approx 1\) the evolution in the bulk is again close to trivial, but the evolution of the boundary is now by the anomalous Floquet unitary \(S(v,t)\). By \(\nu t \approx 2\) the boundary evolution is by \(S(v,t)^2\)---a Floquet evolution with twice the index of \(S(v,t)\) in \(\pi_1(\QCA_2)\), as witnessed by the circulation of orbitals on the surface~\cite{Titum2016,Po2016QCA,Nathan2021hierarchy}. The circulation at the two-dimensional surface continues to grow with time. As the circulation is an invariant of two-dimensional localized Floquet models, this edge evolution is impossible in a strictly two-dimensional localized model.

    \subsection{Modified swindle}
    \label{subapp:modified_swindle}

    If \(v \in \QCA_{d-1}\) has a range \(r=1\), then \(S(v,t)\) also has range \(1\). However, say in one dimension, it typically couples a site \(j\) to both \(j-1\) and \(j+1\) simultaneously. This makes experimental implementation of the swindle models in some quantum simulation platforms more complicated. Further, the swindle model is translationally invariant, and so does not enjoy the robustness of disordered systems to perturbations.

    We address both of these issues by defining a modified swindle map \(\tilde{S}(v,t)\). We split the loop defining the swindle unitary into three segments, \(t \in [0,\tfrac{1}{3})\), \(t \in [\tfrac{1}{3},\tfrac{2}{3})\), and \(t \in [\tfrac{2}{3},1]\). Within the first segment, sites are coupled along even numbered bonds along the first dimension of the lattice, while odd numbered bonds are coupled in the second segment. The third segment introduces on-site disorder to the model.

    \begin{figure}
        \centering
        \includegraphics{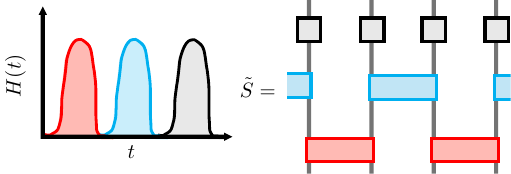}
        \caption{Sketch of the pulse schedule for the models defined by the modified swindle map \(\tilde{S}\). Sites are coupled along even bonds (red, \(t \in [0,\tfrac{1}{3})\)) and then along odd bonds (blue, \(t \in [\tfrac{1}{3},\tfrac{2}{3})\)). Finally, an on-site disorder potential may be applied to promote localization (gray, \(t \in [\tfrac{2}{3},1]\)).}
        \label{fig:modified_swindle}
    \end{figure}
    
    We pick arbitrary smooth functions \(\eta_{1,2,3}(t)\) which respectively map \(t \in [0,\tfrac{1}{3}]\), \([\tfrac{1}{3},\tfrac{2}{3}]\), and \([\tfrac{2}{3},1]\) to \(\eta \in [0,1]\). Sensible choices for these functions are monotonically increasing. We also choose disordered \((d-1)\)-dimensional evolution operators \(u_j^{\mathrm{dis}}(\eta)\), which are required to commute with any desired symmetry of the model. Then, on open boundary conditions with \(L \in 2\Z\) sites along the first dimension, we have
    \begin{equation}
        \tilde{S}(v,t) = \prod_{j=0}^{L/2-1} u_{2j,2j+1}(v, \eta_1(t))
    \end{equation}
    [where \(u_{j,j+1}\) is defined as in Eq.~\eqref{eqn:uxxp1}] for \(t \in [0,\tfrac{1}{3})\),
    \begin{equation}
        \tilde{S}(v,t) = v_0 \left(\prod_{j=1}^{L/2-1} u_{2j-1,2j}(v, \eta_2(t)) v^\dagger_{2j-1} v_{2j} \right) v^\dagger_L
    \end{equation}
    for \(t \in [\tfrac{1}{3},\tfrac{2}{3})\), and
    \begin{equation}
        \tilde{S}(v,t) = u_j^{\mathrm{dis}}(\eta_3(t)) v_0 v^\dagger_L
    \end{equation}
    for \(t \in [\tfrac{2}{3},1]\) (\autoref{fig:modified_swindle}).

    Away from the boundaries, \(\tilde{S}\) acts as the identity at \(t = \tfrac{2}{3}\). The LIOMs are then determined by \(u_j^{\mathrm{dis}}\). At the edges, \(\tilde{S}\) does not act as the identity, and instead acts as \(v\) or \(v^\dagger\).

    The Hamiltonian associated to the unitary \(\tilde{S}\) can be obtained by differentiation,
    \begin{equation}
        H(t) = [i \partial_t \tilde{S}(v,t)] \tilde{S}(v,t)^\dagger.
    \end{equation}

    \subsection{QP pump}
    \label{subsec:QP_pump}

    We carry through the modified swindle construction explicitly for the QP pump. This phase is of particular interest, as its edge states can be used to prepare highly excited non-classical states in quantum cavities~\cite{Long2022boosting}. Thus, an experimental realization of this model is especially desirable.

    Our model is based on a one-dimensional chain of qubits with open boundary conditions and length \(L \in 2\Z\). The construction for odd numbers of qubits is essentially the same. The resulting model will be qualitatively similar to that in Refs.~\cite{Kolodrubetz2018pump,Nathan2021pump}.

    The model has a \(U(1)\) symmetry generated by 
    \begin{equation}
        Q = \sum_j Q_j = \sum_j \tfrac{1}{2}(\mathbbm{1}+\sigma^z_j),
    \end{equation}
    so we choose a path to the swap gate which preserves this symmetry. Indeed, we may take
    \begin{equation}
        \mathrm{swap}_{j,j+1}(t) = \exp\left[-i \frac{2\pi}{8} t\, \vec{\sigma}_j \cdot \vec{\sigma}_{j+1}  \right]
    \end{equation}
    up to a global phase, where \(\vec{\sigma}_j \cdot \vec{\sigma}_{j+1} = \sigma^x_j \sigma^x_{j+1} + \sigma^y_j \sigma^y_{j+1} + \sigma^z_j \sigma^z_{j+1}\).

    The lower dimensional QCA to which the swindle map will be applied will actually be a loop of QCA in zero dimensions. That is, a loop of unitary matrices,
    \begin{equation}
        v_j(\theta) = \exp\left[ -i \frac{W}{2} \theta \sigma^z_j \right],
    \end{equation}
    where \(W \in \Z\) is the winding number and \(\theta \in [0,2\pi)\) parameterizes the loop. All \(W \in \Z\) give distinct winding numbers with \(U(1)\) symmetry. Without this symmetry, all odd (even) winding numbers would be connected by homotopies (Appendix~\ref{subapp:QPpump_invariant}).

    For the model to be quasiperiodically driven, we will take \(\theta(t) = \omega t \bmod 2\pi\) to be a linear function of \(t\) with \(\omega \not\in \Q\) irrational. 

    We take the disorder in the model to be on-site and diagonal in the \(\sigma^z\) basis,
    \begin{equation}
        u^\mathrm{dis}_j(t) = \exp\left[ -i t \delta_j \sigma^z_j \right],
    \end{equation}
    where \(\delta_j\) is, say, uniformly random in \([-\Delta, \Delta]\).

    The final elements of the modified swindle model we need are the functions \(\eta_{1,2,3}(t)\). These are essentially the pulse shapes which implement each unitary as a function of \(t\), and their specific choice should be tailored to the experimental platform in mind. We leave them unspecified, beyond requiring that they are smooth, monotonic, and their derivatives vanish at their end points. This ensures that there are no discontinuities in the pulses, which also assists in stabilizing localization~\cite{Else2019longlived,Long2022QPMBL}.

    The Hamiltonian is then defined piecewise as
    \begin{equation}
        H(t, \theta) = 
        \left\{
        \begin{array}{l l}
            \omega h^{1}(\eta_1(t),\theta) + \dot{\eta}_1(t) h^{2}(\eta_1(t), \theta), & t \in [0,\tfrac{1}{3}), \\
            \omega h^{3}(\eta_2(t),\theta) + \dot{\eta}_2(t) h^{4}(\eta_2(t),\theta), & t \in [\tfrac{1}{3},\tfrac{2}{3}), \\
            \omega h^5 + \dot{\eta}_3(t) h^{6}, & t \in [\tfrac{2}{3},1],
        \end{array}
        \right.
    \end{equation}
    where \(\dot{\eta}_k  = \partial_t \eta_k\) and \(H(t,\theta)\) has period \(T=1\) in \(t\) and period \(2\pi\) in \(\theta\).
    Defining \(\sigma^\pm_j = \sigma_j^x \pm i \sigma_j^y\), the individual terms for the first pulse are
    \begin{align}
        h^1(\eta_1, \theta) &= \sum_{j=0}^{L/2-1} h^1_{2j,2j+1}(\eta_1,\theta) \\
        h^2(\eta_1,\theta) &= \sum_{j=0}^{L/2-1} h^2_{2j,2j+1}(\eta_1,\theta),
    \end{align}
    where
    \begin{widetext}
    \begin{align}
        h_{j,j+1}^1(\eta_1, \theta) &= \frac{W}{4} \left[(1-\cos(\pi \eta_1))(\sigma^z_{j} - \sigma^z_{j+1}) 
        - \frac{i}{2} \sin(\pi \eta_1) e^{-i W \theta} \sigma_{j}^+ \sigma_{j+1}^{-} + \text{H.c.} \right], \text{ and} \\
        h_{j,j+1}^2(\eta_1,\theta) &= \frac{2\pi}{8} \left[ 
        \sin(W \theta) \sin(\pi \eta_1) (\sigma^z_{2j} - \sigma^z_{2j+1})
        + \frac{e^{-iW \theta}}{2}(1-\cos(W\theta) - i \sin(W\theta)\cos(\pi \eta_1)) \sigma_{j}^+ \sigma_{j+1}^{-} + \text{H.c.}\right].
    \end{align}
    For the second pulse, we have
    \begin{equation}
        h^3(\eta_2, \theta) = \frac{W}{2}(\sigma^z_0 - \sigma^z_{L-1}) +\sum_{j=1}^{L/2-1} h^3_{2j-1,2j}(\eta_2,\theta), 
        \quad\text{and}\quad
        h^4(\eta_2,\theta) = \sum_{j=1}^{L/2-1} h^2_{2j-1,2j}(\eta_2,\theta),
    \end{equation}
    where
    \begin{multline}
        h_{j,j+1}^3(\eta_2, \theta) = \frac{W}{4}\bigg[[1+\cos(W\theta) + \cos(2\pi \eta_2)(1-\cos(W\theta))]
        (\sigma^z_{j} - \sigma^z_{j+1}) \\
        - \frac{i}{2}e^{-i W\theta} [\sin(\pi \eta_2) (1-2 i \sin(W\theta)) + \sin(2\pi \eta_2)(1-\cos(W\theta))]
         \sigma_{j}^+ \sigma_{j+1}^{-} + \text{H.c.} \bigg].
    \end{multline}
    \end{widetext}
    Finally, the disorder pulse has
    \begin{equation}
        h^5 = \frac{W}{2}(\sigma^z_0 - \sigma^z_{L-1}), 
        \quad\text{and}\quad
        h^6 = \sum_{j=0}^{L-1} \delta_j \sigma^z_j.
    \end{equation}

    Additionally, rotating frame transformations generated by combinations of \(\sigma^z_j\) can be used to eliminate some of the complex time dependence of the coupling terms. For instance, making the transformation
    \begin{equation}
        \tilde{S}(t) \mapsto e^{i \omega t \frac{W}{2} \sum_j j \sigma^z_j} \tilde{S}(t)
    \end{equation}
    eliminates the \(e^{-i W\theta}\) factors in all the expressions above, at the cost of introducing a time-independent term 
    \begin{equation}
        H' = -\omega \frac{W}{2} \sum_{j=0}^{L-1} j \sigma^z_j
    \end{equation}
    to the Hamiltonian.
    That is, the time-dependent vector potential in the model can be traded for a linear scalar potential. On the other hand, if implementing time-dependent detunings \(\sigma^z_j - \sigma^z_{j+1}\) is more challenging in some architecture, moving to a frame co-rotating with those terms of the Hamiltonian moves this time dependence to the \(\sigma^+_j \sigma^-_{j+1}\) exchange term. 

    The main interesting feature of the QP pump is its edge states, which transfer energy from one drive to the other. 
    To compute the pumped power in the edge state, we evaluate the expectation value of 
    \begin{equation}
        P(t,\theta) = -\omega \partial_\theta H(t, \theta),
    \end{equation}
    which is the work done by the qubit chain on the drive with phase \(\theta\).
    We evaluate this operator in a mixture of quasienergy states, with the initial density matrix being
    \begin{equation}
        \rho_{0\pm, (L-1) \pm} = \frac{1}{2^L}(\mathbbm{1} \pm \sigma^z_0) (\mathbbm{1} \pm \sigma^z_{L-1}).
        \label{eqn:pump_state}
    \end{equation}
    This is maximally mixed in the bulk LIOM states, but has a definite state at both edges. The expectation value of the power operator is thus
    \begin{equation}
        \cexp{P(t,\theta)}_{\pm \pm} = \tr{P\tilde{S} \rho_{0\pm, (L-1) \pm} \tilde{S}^\dagger}
        \label{eqn:power_state}
    \end{equation}
    where we suppressed the dependence on \(t\) and \(\theta\) in the right hand side.

    The model is constructed so that \(\tau_j(t=0) = \sigma^z_j\), and the Heisenberg picture operator \(\sigma^{z\mathrm{H}}_j(t)\) is a function of \(\Bt\). We can thus find the edge LIOMs within a period by ``reverse Heisenberg evolution''~\cite{Else2019longlived} of the \(\sigma^z_0\) operator. We have, for \(t \in [0,\tfrac{1}{3})\)
    \begin{equation}
        \tau^z_0(t, \theta) = u_{0,1}(v(\theta),\eta_1(t)) \sigma^z_0 u_{0,1}^\dagger(v(\theta),\eta_1(t))
    \end{equation}
    and \(\tau^z_0 = \sigma^z_0\) otherwise. Equation~\eqref{eqn:power_state} then gives that 
    \begin{equation}
        \cexp{P(t,\theta)}_{\pm \pm} = \frac{1}{2^L} \tr{P (\mathbbm{1} \pm \tau^z_0) (\mathbbm{1} \pm \tau^z_{L-1})}.
        \label{eqn:power_LIOM}
    \end{equation}
    Further using that the power is a sum of traceless local operators, \(P = \sum_j P_{j,j+1}\), and that \(\tau^z_{0,L-1}\) have disjoint support, we can expand Eq.~\eqref{eqn:power_LIOM} as
    \begin{equation}
        \cexp{P(t,\theta)}_{\pm \pm} = \pm \frac{1}{2^L} \tr{P \tau^z_0} \pm \frac{1}{2^L} \tr{P \tau^z_{L-1}}.
    \end{equation}
    
    Calculating the LIOMs from the definition of \(u_{0,1}\) gives
    \begin{widetext}
    \begin{multline}
        u_{0,1}(v(\theta),\eta_1) \sigma^z_0 u_{0,1}^\dagger(v(\theta),\eta_1) 
        = \frac{1}{2}\bigg[ \sigma^z_0 + \sigma^z_1 + \frac{1}{2}[1+ \cos(W \theta) + \cos(2 \pi \eta_1)(1-\cos(W\theta))] (\sigma^z_0 - \sigma^z_1) \\
        +\frac{e^{-iW \theta}}{4}[2\sin(\pi \eta_1) \sin(W\theta_1) + i \sin(2\pi \eta_1) (1-\cos(W\theta))]
        \sigma^+_0 \sigma^-_1 + \text{H.c.} \bigg].
    \end{multline}
    Similarly, \(\tau^z_{L-1}(t,\theta)\) is given by \(\sigma^z_{L-1}\) except during the first pulse, which is determined by
    \begin{multline}
        u_{L-2,L-1}(v(\theta),\eta_1) \sigma^z_{L-1} u_{L-2,L-1}^\dagger(v(\theta),\eta_1) 
        = \frac{1}{2}\bigg[ \sigma^z_{L-2} + \sigma^z_{L-1} - \frac{1}{2}[1+ \cos(W \theta) + \cos(2 \pi \eta_1)(1-\cos(W\theta))] (\sigma^z_{L-2} - \sigma^z_{L-1}) \\
        -\frac{e^{-iW \theta}}{4}[2\sin(\pi \eta_1) \sin(W\theta_1) + i \sin(2\pi \eta_1) (1-\cos(W\theta))]
        \sigma^+_0 \sigma^-_1 + \text{H.c.} \bigg].
    \end{multline}
    The power pumped by the left (smaller \(j\)) edge state is
    \begin{multline}
        \frac{1}{2^L} \tr{P \tau^z_0} = -\frac{\omega}{2^L} \tr{(\omega \partial_\theta h^1_{0,1} + \dot{\eta}_1 \partial_\theta h^2_{0,1})\tau^z_0}
        = \frac{\omega W}{4}\bigg[ \frac{\omega W}{2} \sin(W\theta) (1-\cos(2 \pi \eta_1 ))  \\
        + \pi \dot{\eta}_1 (\sin(2\pi \eta_1)(1-\cos(W\theta))-\sin(\pi \eta_1))
        \bigg],
    \end{multline}
    for \(t\in [0,\tfrac{1}{3})\), and \(\tr{P \tau^z_0} = 0\) otherwise. Further, \(\tr{P \tau^z_0} = -\tr{P \tau^z_{L-1}}\), so the pumped power is nonzero only when the two edge states are opposite. We will subsequently let \(\pm\) refer only to the occupation of the LIOM at site \(j=0\), and assume the other edge LIOM is in the opposite state. Then the total pumped power is
    \begin{equation}
        \cexp{P(t,\theta)}_\pm = \pm \frac{\omega W}{2}\bigg[ \frac{\omega W}{2} \sin(W\theta) \left(1-\cos(2 \pi \eta_1(t) )\right) 
        + \pi \dot{\eta}_1(t) \left(\sin(2\pi \eta_1(t))(1-\cos(W\theta))-\sin(\pi \eta_1(t))\right)
        \bigg]
    \end{equation}
    \end{widetext}
    for \(t\in [0,\tfrac{1}{3})\). This function is plotted as a function of \(t\) in \autoref{fig:QPpump}.

    \begin{figure}
        \centering
        \includegraphics[width=\linewidth]{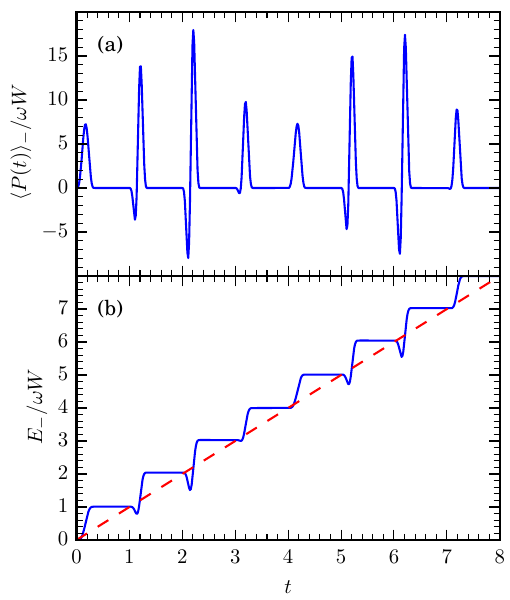}
        \caption{(a)~The power pumped into the \(\theta\) drive in the state \(\rho_{0-,(L-1)+}\)~\eqref{eqn:pump_state}. The average value is quantized to \(\omega W\), where \(W\) is the winding number of the unitary used to construct the QP pump model and \(\omega\) is the frequency of the \(\theta\) drive. The power is only nonzero when \(t \bmod 1 \in [0,\tfrac{1}{3})\). (b)~The quantized power appears as an average slope (red dashed) in the energy absorbed by the \(\theta\) drive (blue). In fact, the modified swindle model has \(E_\pm(t) = \mp \omega W t\) at integer times. \emph{Parameters:} \(\omega = (1+\sqrt{5})/2\), \(W=1\), \(\eta_1 = \sin^2(3\pi t/2)\).}
        \label{fig:QPpump}
    \end{figure}

    We can compute the average pumped power by integrating \(\cexp{P(t,\theta)}_\pm\) over \(t\) and \(\theta\). This gives
    \begin{equation}
        \bar{P}_{\pm} = \frac{1}{2\pi}\int_0^{2\pi} \d\theta \int_0^1 \d t\, \cexp{P(t,\theta)}_\pm = \mp \omega W.
    \end{equation}
    That is, \(W\) photons of frequency \(\omega\) are pumped into (or out of, if \(W< 0\)) the \(\theta\) drive per period of \(t\) (which we have set to 1) when \(\cexp{\tau^z_0} = -1\) and \(\cexp{\tau^z_{L-1}} = +1\).

\section{Discussion}
\label{sec:Discussion}

    In the context of exactly soluble commuting models and driven phases of localized systems, authors have repeatedly expressed confusion regarding the position of their models within existing classifications~\cite{Po2016QCA,Friedman2022qpedge,Zhang2021U1,Fidkowski2020beyond}. Most of these are classifications of states~\cite{Chen20111dspt,Else2016cohomology,Else2019longlived}, while localized models (including any commuting projector model) impose nontrivial structure on the entire spectrum. This structure need not be captured by any single state, making such classifications of limited applicability. This point is implicit in the literature~\cite{Po2016QCA,Fidkowski2020beyond,Long2021ALTP,Lapierre2022TLI}, but the lack of a systematic and general study seems to have allowed confusion to persist.
    
    Indeed, both the study of ground state phases and localized phases are rich mathematical problems with important implications for disparate physical systems.
    While the two problems are related (by \(e_0^*\), with short-ranged entanglement), neither can be regarded as a special case of the other.
    
    Our work demonstrates a general relationship between ALT phases and QCA. This extends both to models with on-site symmetries---forming symmetry-enriched ALT phases, or SALT phases---and to fermionic systems~\cite{Haah20233d} essentially without modification. 
    The \(\Omega\)-spectrum property we have proposed also allows for the complete classification of all homotopy groups of one-dimensional QCA and SALT phases with any on-site symmetry, reproducing and extending the classifications of Refs.~\cite{Gong2020MPU,Zhang2021U1,Zhang2023note}.
    
    We expect that the \(\Omega\)-spectrum structure can also be exploited to classify SALT phases with lattice symmetries~\cite{Thorngren2018crystal,Xiong2018minimalist,Gaiotto2019gencohomology} (Appendix~\ref{app:manifolds}). The more challenging direction of generalization is to include topological order (long-ranged entanglement), where the correspondence with QCA breaks down~\cite{Wahl2020mblto}. 
    This case is of particular interest given recent proposals of using driven topologically ordered localized models to perform quantum computation~\cite{Po2017radical,Hastings2021dynamically,Aasen2022hampath,Paetznick2023majorana,Aasen2023measurement}.

    Beyond the construction of a rigorous proof of the \(\Omega\)-spectrum conjecture, our work raises several other mathematical questions. For instance, an important extension of our work would be to include a rigorous treatment of \emph{tails} of LIOMs, as generically arise in any disordered system. This requires working with QCAs which can map local operators to quasilocal operators~\cite{Haah2023invertible,Ranard2022converseLRbound}.
    We expect that the classification is not altered in any essential way by including tails, but the technical tools used in any proof would be quite different.
    
    The models constructed through the swindle map are finite depth circuits, and are in immediate reach of current quantum simulation experiments~\cite{Eckardt2017,Gross2017,Kjaergaard2020}. 
    While the three-dimensional geometry of the phase shown in \autoref{fig:ALT-QCA} is difficult to achieve, the model of the QP pump in \autoref{subsec:QP_pump} should be straightforward to realize in most popular platforms for quantum simulation.
    Applying the swindle construction once more to the QP pump model produces a two-dimensional model with three-tone quasiperiodic driving, which has received little prior study~\cite{Long2022layer}, in part because exact numerical simulations of many-body quantum dynamics in two dimensions are very challenging. The swindle model provides a solvable point for the associated topological phase, but experimental simulation away from this fixed point may outperform achievable numerics.

\begin{acknowledgements}

    The authors are particularly grateful to Ryohei Kobayashi for collaboration during the early stages of this project. The authors also thank Maissam Barkeshli, Anushya Chandran, Lukasz Fidkowski, Jeongwan Haah, Salvatore Pace, Ryan Thorngren, Yi-Ting Tu, Brayden Ware, and Carolyn Zhang for useful discussions. This work is supported by the Laboratory for Physical Sciences, a Stanford Q-FARM Bloch Fellowship (DML), and a Packard Fellowship in Science and Engineering (DML, PI: Vedika Khemani). Research
    at Perimeter Institute is supported in part by the Government of Canada through the
    Department of Innovation, Science and Economic Development, and by the Province of
    Ontario through the Ministry of Colleges and Universities (DVE).

\end{acknowledgements}

\appendix

\section{From ALT phases to QCA}
    \label{app:ALTtoQCA}

    Our primary tool for studying localized topological phases is a mapping from quantum cellular automata (QCA) to many-body localized (MBL) Hamiltonians with short-ranged entanglement. Explicitly, we define a locality preserving unitary which maps physical on-site degrees of freedom to the conserved local integrals of motion (LIOMs).

    We formalize our definitions of ``MBL Hamiltonian'' and ``short-ranged entanglement'' thereof in Appendix~\ref{subapp:setup}. The main point of this section and Appendix~\ref{subapp:existence} is that these physical notions reproduce the formal requirements of \cite[Theorem II.4]{Haah20233d}. Thus, we realize abstract restrictions on commuting projector models as physical properties of localized systems. \cite[Theorem II.4]{Haah20233d} then states that the LIOMs of the MBL Hamiltonian can be prepared by a QCA. We make an equivalent proof in Appendix~\ref{subapp:existence}. The QCA in question is not unique, and we characterize the redundancy in Appendix~\ref{subapp:nonunique}.
    We also discuss localized phases of driven systems, which we also characterize (non-uniquely) by parametrized families of QCA.

     While our discussion is phrased for spin systems, the results we review all hold for fermions as well~\cite{Haah20233d}.

    \subsection{Setup}
        \label{subapp:setup}

        While all classification results in this work are for spin systems on the integer lattice \(\Z^d\), the correspondence between localized phases and QCA holds for either spins or fermions and any locally finite set of sites. That is, any subset of space with finite diameter contains finitely many degrees of freedom. We will call the set of points in space supporting some degree of freedom a lattice, though it does not need to have any further regularity.
        
        Formally, we consider a metric space \(X\) and a lattice \(K \subseteq X\) (which may include the same point multiple times) with associated finite dimensional local Hilbert spaces \(\mathcal{H}_k\). This specializes our discussion to spin systems, though a parallel discussion holds for systems with fermions~\cite{Haah20233d}. The total Hilbert space is the tensor product of the local Hilbert spaces,
        \begin{equation}
            \mathcal{H} = \bigotimes_{k \in K} \mathcal{H}_k.
        \end{equation}
        (We will not need to be concerned with precise definitions of the infinite tensor product.)
        
        A Hamiltonian on \(X\) is a Hermitian operator \(H\) on \(\mathcal{H}\). We will consider \emph{fully localized} Hamiltonians with strictly local LIOMs. The generic case is for the LIOMs to be quasilocal (with exponential tails), but for the purpose of mathematical proofs it is convenient to require that LIOMs have finite support on the lattice.

        \begin{definition}\label{def:MBL}
            The Hamiltonian \(H\) is range-\(r\) fully many-body localized if there is a lattice of points \(J \subseteq X\) (points may occur more than once) and Hermitian operators \(\{\tau^z_j\}_{j \in J}\), called LIOMs, of the form
            \begin{equation}
                \tau^z_j \in \mathbbm{1} \otimes \bigotimes_{k \in B_r(j) \cap K} \mathrm{End}(\mathcal{H}_k),
            \end{equation}
            [where \(B_r(j)\) is the ball of radius \(r\) around \(j\)] such that \([\tau^z_j, H] = [\tau^z_j, \tau^z_{j'}] = 0\). Further, the set of LIOMs must be complete, such that a local operator \(A\) commutes with all LIOMs only if \(A \in \cexp{\tau^z_j}_{j\in J}\).
        \end{definition}

        This definition should be compared to \cite[Definition II.1]{Haah20233d}.

        We further say that \(H\) is short-range entangled if all its excitations can be created and destroyed locally. This excludes models with nontrivial anyons, where there are local excitations which cannot be removed except by annihilating them with other, potentially distant, excitations. It also excludes symmetry breaking, where domain walls also cannot be destroyed locally. We formalize this notion with the following convenient definition.

        \begin{definition}\label{def:no_topo}
            The Hamiltonian \(H\) is range-\(r\) fully many-body localized with short-ranged entanglement if it is range-\(r\) fully many-body localized and additionally there are unitary operators
            \begin{equation}
                \tau^x_j \in \mathbbm{1} \otimes \bigotimes_{k \in B_r(j) \cap K} \mathrm{End}(\mathcal{H}_k),
            \end{equation}
            such that \([\tau^x_j, \tau^z_{j'}] = [\tau^x_j, \tau^x_{j'}] = 0\) for \(j \neq j'\) and \(\tau^x_j\) cyclically permutes the eigenspaces of \(\tau^z_j\). That is, denoting the ordered set of distinct eigenvalues of \(\tau^z_j\) as \((\lambda^j_0, \ldots \lambda^j_{N_j-1})\) and the associated eigenspaces \(\mathrm{eig}(\lambda^j_l,\tau^z_j)\), we have
            \begin{equation}
                \tau^x_j \mathrm{eig}(\lambda^j_l,  \tau^z_j) = \mathrm{eig}(\lambda^{j}_{l+1 \bmod N_j}, \tau^z_j).
            \end{equation}
        \end{definition}

        This definition should be compared to \cite[Definition II.3]{Haah20233d}. One can also compare this notion of ``short-ranged entanglement'' for Hamiltonians to the usual notion of short-ranged entanglement for states. A state \(\ket{\psi}\) may be said to be short-range entangled if, for any other state \(\ket{\psi'}\) which differs from \(\ket{\psi}\) only in some ball \(B_r(x)\) [\(\qexp{A}{\psi} = \qexp{A}{\psi'}\) for any local \(A\) not supported in \(B_r(x)\)] there is a local unitary \(\tau^x_{\psi \psi'}\) supported in \(B_r(x)\) such that \(\tau^x_{\psi \psi'} \ket{\psi'} = \ket{\psi}\). Any eigenstate of an \(H\) satisfying Definition~\ref{def:no_topo} is short-range entangled in this sense (the required \(\tau^x_{\psi \psi'}\) can be constructed from the \(\tau^x_j\) operators), but the converse---that \(\tau^x_j\) operators exist for any MBL Hamiltonian \(H\) whose eigenstates are all short-range entangled---is not obvious.

        Note that a commuting model is characterized less by the Hamiltonian \(H\), and more by the algebra of the operators \(\{\tau^z_j, \tau^x_j\}_{j \in J}\). Indeed, focusing on LIOM operator algebras will be much more useful than the Hamiltonian itself in this context. In this way, we also avoid needing to be specific regarding the resolution of degeneracies in the Hamiltonian. A classification more directly based on the topology of the space of MBL Hamiltonians would need a definition of MBL which excludes highly degenerate Hamiltonians, such as the zero Hamiltonian, from being MBL. Otherwise all MBL Hamiltonians \(H_0\) and \(H_1\) would be connected by a continuous path with \(H_t = (1-2t)H_0\) for \(t \in [0,\tfrac{1}{2})\) and \(H_t = (2t-1)H_1\) for \(t \in [\tfrac{1}{2},1]\). Focusing on LIOMs captures the continuous change of local properties of the model while allowing for accidental degeneracies, so we will not address how to formulate a definition of MBL which avoids this issue.
        
        For convenience, we define some abbreviated notation for operator algebras. For \(S \subseteq X\) define
        \begin{equation}
            \mathcal{O}_S = \mathbbm{1}\otimes \bigotimes_{k \in S\cap K} \mathrm{End}(\mathcal{H}_k) \cong \bigotimes_{k \in S \cap K} \mathrm{Mat}_{M_k}(\C),
        \end{equation}
        where \(M_k = \mathrm{dim}(\mathcal{H}_k)\) and \(\mathrm{Mat}_{M_k}(\C)\) is the algebra of \(M_k \times M_k\) matrices over the complex numbers. Similarly define the algebra generated by the \(\tau\) operators within some region as
        \begin{equation}
            \mathcal{T}_S = \cexp{ \tau^z_j, \tau^x_j}_{j \in S \cap J}.
        \end{equation}

        \begin{definition}\label{def:QCA}
            A unitary operator \(V\) on \(\mathcal{H}\) is a range-\(r\) quantum cellular automaton (QCA) if
            \begin{equation}
                V \mathcal{O}_S V^\dagger \subseteq \mathcal{O}_{B_r(S)}
            \end{equation}
            for all \(S \subseteq X\).
        \end{definition}

        Here, \(B_r(S)\) is the ball of radius \(r\) around the set \(S\). Properly, a QCA should be a projective unitary, but the distinction will not be important here.
        Reference~\cite{Hastings2013torus} defines a commuting model to be ``without intrinsic topological order'' when the LIOMs \(\tau^z_j\) can be prepared from single-site operators
        \begin{equation}
            \sigma^z_k \in \mathbbm{1} \otimes \mathrm{End}(\mathcal{H}_k)
        \end{equation}
        by a QCA \(V\). Any such Hamiltonian also clearly has short-ranged entanglement by our Definition~\ref{def:no_topo}, as we can construct \(\tau_k^x = V \sigma^x_k V^\dagger\) from a matrix \(\sigma^x_k \in \mathrm{End}(\mathcal{H}_k)\) which acts as a cyclic permutation on the eigenvectors of \(\sigma^z_k\). The result of Appendix~\ref{subapp:existence} is that the QCA \(V\) always exists given Definition~\ref{def:no_topo} as well, and as such the two notions short-ranged entanglement and lack of intrinsic topological order are equivalent.

    \subsection{Existence}
        \label{subapp:existence}

        In this appendix, we prove the following theorem.

        \begin{theorem}\label{thm:QCA_exists}
            If \(H\) is range-\(r\) many-body localized with short-ranged entanglement, then there is a range-\(2r\) QCA \(V\) such that each \(V^\dagger \tau^z_j V\) is a sum of products of commuting single-site operators \(\sigma^z_k\).
        \end{theorem}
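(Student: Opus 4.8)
The plan is to reproduce the structure theorem of \cite[Theorem II.4]{Haah20233d} in the language of Appendix~\ref{subapp:setup}: first check that Definitions~\ref{def:MBL} and~\ref{def:no_topo} supply precisely the algebraic data that theorem needs, then extract from that data a range-$2r$ QCA which disentangles the LIOM algebra onto the physical on-site algebra. The central object is the inclusion of quantum lattice systems $\mathcal{T}\hookrightarrow\mathcal{O}$. For each $j$, unitarity of $\tau^x_j$ together with its cyclic action forces all eigenspaces of $\tau^z_j$ to have equal dimension, so there is a tensor factorization $\mathcal{H}_{B_r(j)}\cong \mathbb{C}^{N_j}\otimes(\text{rest})$ on which $\tau^z_j$ is the clock — an $N_j$-level ``virtual qudit'' at $j$ — and the cross-commutation relations $[\tau^z_j,\tau^z_{j'}]=[\tau^z_j,\tau^x_{j'}]=[\tau^x_j,\tau^x_{j'}]=0$ for $j\neq j'$ make these virtual qudits into commuting, locality-compatible tensor factors, i.e. they identify $\mathcal{T}=\langle\tau^z_j,\tau^x_j\rangle_{j\in J}$ with a locally-finite quantum lattice on $J$ sitting range-preservingly inside $\mathcal{O}$. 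Establishing this cleanly (the ``independence'' of the local algebras $\langle\tau^z_j,\tau^x_j\rangle$) is the translation of Haah's Definitions~II.1 and~II.3 into the present setting.

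The first substantive step is to show the relative commutant of $\mathcal{T}$ in $\mathcal{O}$ is trivial at large scales. Fix a large region $B$ and an operator $C$ supported at distance $>2r$ from $\partial B$ commuting with every LIOM whose support lies in $B$. Any $\tau^z_j$ whose ball $B_r(j)$ meets $\mathrm{supp}(C)$ has $j$ at distance $>r$ from $\partial B$, hence $B_r(j)\subseteq B$, so $C$ commutes with \emph{every} LIOM on the lattice; maximality (Definition~\ref{def:MBL}) then forces $C\in\langle\tau^z_j\rangle_{j\in J}$, and locality of this commuting generated algebra places $C$ among the $\tau^z_j$ overlapping its support, hence inside $\mathcal{T}_B$, where it is central and therefore scalar. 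By the double commutant theorem this says that after coarse-graining both lattices to a common blocking of diameter $O(r)$, every block algebra of $\mathcal{O}$ is generated by the $\mathcal{T}$-algebras in a bounded neighborhood of it; in particular $\prod_k M_k$ and $\prod_j N_j$ match block by block, and every on-site operator is a sum of products of $\tau$'s of range $O(r)$.

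Given the matched lattices, I would build $V$ blockwise. Within a coarse block the images of the clocks $\{c_j\}_{j\in\text{block}}$ are a commuting self-adjoint family on a space of dimension $\prod_k M_k=\prod_j N_j$; pick a simultaneous eigenbasis, declare it to be a physical product basis $\bigotimes_k(\text{basis of }\mathbb{C}^{M_k})$, and let $V$ (acting within that block plus a bounded fringe) be the corresponding change of basis, composed with the redistribution of the leftover virtual factors that expresses each $N_j$-level system through the physical on-site dimensions — the step which genuinely uses short-ranged entanglement, since without the $\tau^x_j$ one cannot certify that this rearrangement is unobstructed (the toric code is the obstruction). Conjugating an on-site operator at $k$ involves only blocks within $B_{2r}(k)$, so $V\mathcal{O}_SV^\dagger\subseteq\mathcal{O}_{B_{2r}(S)}$ and $V$ is a range-$2r$ QCA. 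Finally $V^\dagger\tau^z_jV$ is the $j$-th clock expressed in the product basis, hence diagonal there, hence a sum of products of commuting single-site operators $\sigma^z_k$, as claimed.

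The hard part will be Step~3's naive ``blockwise'' description: the algebras $\langle\tau^z_j,\tau^x_j\rangle_{j\in\text{block}}$ are supported in $B_r(\text{block})$ and overlap neighboring blocks, so the inclusion $\mathcal{T}\hookrightarrow\mathcal{O}$ is not literally a product over disjoint blocks, and one must choose a coarse-graining that simultaneously reconciles the physical lattice $K$ with the LIOM lattice $J$, show the per-block identifications glue consistently across overlaps, and track how the $N_j$-dimensional virtual qudits are apportioned among physical sites so that nothing accumulates over long distances. This bookkeeping — not any single conceptual point — is the substance of \cite[Theorem II.4]{Haah20233d}, and the proof here amounts to carrying it out with the MBL definitions of Appendix~\ref{subapp:setup} in place of Haah's abstract hypotheses.
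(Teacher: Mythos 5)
Your proposal gets the preliminary algebraic structure right and correctly identifies the role of the trivial relative commutant (your Step~2 is essentially the content of Lemma~\ref{lem:matrix_iso} and the divisibility counting inside the proof of Lemma~\ref{lem:matching}), but it stops exactly where the paper's proof does its real work, and explicitly concedes the gap: ``This bookkeeping --- not any single conceptual point --- is the substance of [Theorem II.4].'' The gap is not bookkeeping; it requires a specific combinatorial idea you have not supplied. The blockwise construction you sketch in Step~3 does not work as described, because the algebras \(\mathcal{T}_j\) with \(j\) near a block boundary are supported on \(B_r(j)\), which straddles two or more blocks, so there is no way to partition the \(\mathcal{T}_j\)'s among disjoint physical blocks and perform independent per-block basis changes; any attempt to ``glue across overlaps'' is precisely the unsolved part. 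Moreover, your commutant argument only delivers a local divisibility statement (\(\prod_{j\in S} N_j \mid \prod_{k\in B_r(S)} M_k\)), not an actual bijection between LIOM degrees of freedom and physical degrees of freedom respecting locality, which is what you need to define \(V\).

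The paper resolves this with a step your proposal omits entirely: first refine both lattices \(J\) and \(K\) so that every \(N_j\) and every \(M_k\) is prime (by locally splitting \(\mathcal{T}_j\cong\bigotimes_n\mathrm{Mat}_{p_n}(\C)\) and choosing new LIOMs compatible with the old one), and then invoke the infinite version of Hall's marriage theorem on the bipartite graph with edges \((j,k)\) when \(N_j=M_k\) and \(\mathrm{dist}(j,k)\le r\). The marriage condition \(|S|\le|\partial S|\) in both directions is exactly what your commutant/divisibility observations furnish once everything is prime, and the resulting perfect matching \(k\mapsto j(k)\) gives a genuine locality-preserving identification \(\mathcal{O}_k\cong\mathcal{T}_{j(k)}\) site by site, from which \(V\) is built as a tensor product of single-site algebra isomorphisms --- no blocking, no gluing. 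Without the prime refinement and the marriage theorem, a ``choose a blocking and change basis'' argument cannot be made to close, and so the proposal as written is incomplete.
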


        In fact, after a redefinition of LIOMs to another equivalent set, we will have that \(V^\dagger \tau^z_j V\) is a single-site operator.

        The proof is essentially the same as \cite[Theorem II.4]{Haah20233d}. The main insight here is that the abstract requirements of that theorem correspond to natural physical requirements for an MBL Hamiltonian.

        The strategy will be to show that each of the algebras
        \begin{equation}
            \mathcal{T}_j = \cexp{\tau^z_j, \tau^x_j}
        \end{equation}
        is isomorphic to a matrix algebra. This implies that there is an isomorphism between \(\mathcal{T}_j\) and \(\mathcal{O}_k\) whenever they have the same dimension. [All isomorphisms will be \(*\)-isomorphisms---that is an isomorphism \(\alpha\) such that \(\alpha(A^\dagger) = \alpha(A)^\dagger\) for all operators \(A\). All such automorphisms of this kind between matrix algebras can be expressed as a conjugation by a unitary.] To find a QCA \(V\) which implements this isomorphism simultaneously for all sites, we need to match the algebras \(\mathcal{O}_{k}\) to \(\mathcal{T}_{j(k)}\) locally and bijectively. This can be phrased as a graph problem, and Hall's marriage theorem~\cite{Hall1935marriage,Hall1998combinatorial} implies that this problem always has a solution.

        \begin{lemma}\label{lem:matrix_iso}
            There is an isomorphism
            \begin{equation}
                \mathcal{T}_j \cong \mathrm{Mat}_{N_j}(\C)
            \end{equation}
            for each \(j \in J\).
        \end{lemma}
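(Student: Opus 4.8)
The plan is to realize $\mathcal{T}_j = \cexp{\tau^z_j,\tau^x_j}$ as the algebra generated by an explicit complete system of $N_j^2$ matrix units, so that it is at once an $N_j\times N_j$ matrix algebra and nothing larger. (One could equivalently invoke the structure theorem for finite-dimensional \(C^*\)-algebras to write \(\mathcal{T}_j\cong\bigoplus_\alpha\mathrm{Mat}_{n_\alpha}(\C)\) and then show that the sum has a single term, of size \(N_j\); the matrix-unit route is more hands-on and isolates the one delicate point described below.)

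First I would extract the matrix units from the spectral data of $\tau^z_j$ and powers of $\tau^x_j$. Let $P_l$ be the spectral projection onto $\mathrm{eig}(\lambda^j_l,\tau^z_j)$; by Lagrange interpolation $P_l$ is a polynomial in $\tau^z_j$, so $P_l\in\mathcal{T}_j$, and $\sum_l P_l=\mathbbm{1}$. Definition~\ref{def:no_topo} gives $\tau^x_j P_l(\tau^x_j)^\dagger=P_{l+1\bmod N_j}$, so $(\tau^x_j)^s$ carries $\mathrm{ran}\,P_0$ onto $\mathrm{ran}\,P_{s\bmod N_j}$ for every integer $s$. Setting $V_l:=(\tau^x_j)^l P_0$ and $E_{lm}:=V_l V_m^\dagger$ for $0\le l,m< N_j$, the identity $V_m^\dagger V_p=\delta_{mp}P_0$ (for $p\neq m$ the exponent $p-m$ is nonzero mod $N_j$, so $(\tau^x_j)^{p-m}$ moves $\mathrm{ran}\,P_0$ off itself) yields at once $E_{lm}E_{pq}=\delta_{mp}E_{lq}$, $E_{lm}^\dagger=E_{ml}$, $E_{lm}\neq0$ (since $E_{lm}^\dagger E_{lm}=P_m$), and $\sum_l E_{ll}=\sum_l P_l=\mathbbm{1}$. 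Hence the $E_{lm}$ form a complete system of $N_j^2$ matrix units, $\cexp{E_{lm}}\cong\mathrm{Mat}_{N_j}(\C)$ sits unitally inside $\mathcal{T}_j$, and it already contains $\tau^z_j=\sum_l\lambda^j_l E_{ll}$; it remains only to show $\tau^x_j\in\cexp{E_{lm}}$.

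This last containment is the main obstacle, because it can fail for the given $\tau^x_j$: one has $\tau^x_j P_l=E_{l+1,l}$ for $l<N_j-1$ but $\tau^x_j P_{N_j-1}=(\tau^x_j)^{N_j}E_{0,N_j-1}$, so $\tau^x_j=\sum_{l=0}^{N_j-2}E_{l+1,l}+w_j\,E_{0,N_j-1}$ with $w_j:=(\tau^x_j)^{N_j}$, and $w_j$ --- which fixes every $P_l$, hence commutes with $\tau^z_j$ --- is in general a nonscalar unitary, since Definition~\ref{def:no_topo} constrains only the action of $\tau^x_j$ on the eigenspaces of $\tau^z_j$, not its order. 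The resolution is that one may replace $\tau^x_j$ by an equivalent operator for which $w_j=\mathbbm{1}$: being a power of $\tau^x_j$, $w_j$ commutes with $\tau^z_{j'}$ and $\tau^x_{j'}$ for every $j'\neq j$ and is supported in $B_r(j)$, so $g:=w_j^{-1/N_j}$ --- an $N_j$-th root taken eigenvalue-by-eigenvalue in the finite-dimensional algebra $\mathcal{O}_{B_r(j)}$, hence a unitary built from $w_j$ by functional calculus --- lies in $\mathcal{O}_{B_r(j)}$, commutes with $\tau^x_j$, $\tau^z_j$, and all $\tau^z_{j'},\tau^x_{j'}$ ($j'\neq j$), and satisfies $g^{N_j}=w_j^{-1}$. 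Then $\tilde\tau^x_j:=\tau^x_j g$ has the same range, cyclically permutes the eigenspaces of $\tau^z_j$, obeys the commutation relations of Definition~\ref{def:no_topo}, and leaves every $\tau^z_j$ unchanged --- so it is a legitimate choice of the operators whose existence Definition~\ref{def:no_topo} asserts and does not affect the hypotheses of Theorem~\ref{thm:QCA_exists} --- while $(\tilde\tau^x_j)^{N_j}=(\tau^x_j)^{N_j}g^{N_j}=\mathbbm{1}$. Using $\tilde\tau^x_j$ in place of $\tau^x_j$ throughout, the wrap-around term drops out, $\tilde\tau^x_j=\sum_l E_{l+1\bmod N_j,\,l}\in\cexp{E_{lm}}$, and therefore $\mathcal{T}_j=\cexp{E_{lm}}\cong\mathrm{Mat}_{N_j}(\C)$. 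The only non-bookkeeping steps are recognizing that this re-gauging of $\tau^x_j$ is needed and verifying that $\tilde\tau^x_j$ still satisfies Definition~\ref{def:no_topo}; the rest is routine manipulation of the $E_{lm}$.
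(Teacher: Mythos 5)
Your matrix-unit construction is sound, and you are right that the only delicate point is the wrap-around factor $w_j=(\tau^x_j)^{N_j}$. But the handling of $w_j$ has a genuine gap: the re-gauging $\tau^x_j\mapsto\tilde\tau^x_j=\tau^x_j g$ changes the generated algebra, so what you end up proving is $\cexp{\tau^z_j,\tilde\tau^x_j}\cong\mathrm{Mat}_{N_j}(\C)$, a statement about a \emph{different} object. If $w_j$ really were a nonscalar unitary, then $\mathcal{T}_j=\cexp{\tau^z_j,\tau^x_j}$ would strictly contain $\cexp{\tau^z_j,\tilde\tau^x_j}$ (indeed $w_j\in\mathcal{T}_j$ commutes with every $E_{lm}$ yet is not scalar, so it lies outside that full matrix subalgebra), and the lemma would simply be \emph{false} for the original $\mathcal{T}_j$. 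The remark that the replacement ``does not affect the hypotheses of Theorem~\ref{thm:QCA_exists}'' at best restructures the downstream theorem; it does not establish the lemma as stated.

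What closes this is the maximality clause of Definition~\ref{def:MBL}, which you have assembled all the prerequisites for but never invoke. You already observe that $w_j$ is supported in $B_r(j)$ and commutes with every $\tau^z_{j'}$, including $\tau^z_j$; maximality then gives $w_j\in\cexp{\tau^z_l}_{l\in J}$, and since $w_j\in\mathcal{T}_j$ it is a polynomial in $\tau^z_j$ alone, $w_j=\sum_l a_l P_l$. From $\tau^{x\dagger}_j w_j\tau^{x}_j=w_j$ the coefficients $a_l$ are cyclically permuted and hence all equal, so $w_j$ is a scalar, your re-gauging is a global phase, and $\mathcal{T}_j=\cexp{E_{lm}}$ already. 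This is in fact the paper's proof in disguise: the paper applies exactly this maximality-plus-$\tau^x_j$-conjugation argument to an arbitrary element of $Z(\mathcal{T}_j)$, concludes the center is trivial, and invokes Wedderburn's theorem. Your matrix-unit route is more hands-on and produces the size $N_j$ directly, but it rests on the same crucial consequence of maximality that you left unproved.
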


        \begin{proof}
            Wedderburn's classification of finite dimensional semisimple algebras implies that \(\mathcal{T}_j\) is isomorphic to a full matrix algebra if and only if it is central. That is, \(A \in \mathcal{T}_j\) commutes with all of \(\mathcal{T}_j\) only if \(A \in \C \mathbbm{1}.\)

            Suppose \(A \in Z(\mathcal{T}_j)\) is in the center of \(\mathcal{T}_j\). Then it must commute with all the LIOMs---it commutes with \(\tau^z_j\) by assumption and it commutes with the other \(\tau^z_l\) because it is contained in \(\mathcal{T}_j\). Then, by Definition~\ref{def:MBL}, \(A\) is a sum of products of LIOMs. However, as \(A \in \mathcal{T}_j\), it must be a polynomial in \(\tau^z_j\). In particular,
            \begin{equation}
                A = \bigoplus_{l=1}^{N_j-1} a_l \mathbbm{1}_{\mathrm{eig}(\lambda^j_l,  \tau^z_j)}
            \end{equation}
            for some \(a_l \in \C\).
            Evaluating the commutator with \(\tau^x_j\), we have
            \begin{equation}
                \tau^{x\dagger}_j A \tau^x_j = \bigoplus_{l=1}^{N_j-1} a_{l+1\bmod N_j}\mathbbm{1}_{\mathrm{eig}(\lambda^j_{l},  \tau^z_j)} 
                = A,
            \end{equation}
            so all the \(a_l = a\) are equal and \(A \in \C\mathbbm{1}\).
        \end{proof}

        It is convenient to require that all the \(N_j\) and \(M_k = \mathrm{dim}(\mathcal{H}_k)\) are prime. This can always be achieved by redefining the lattices \(J\) and \(K\). Indeed, focusing on \(N_j\), we can make a prime factorization
        \begin{equation}
            N_j = p_0 p_1 \cdots p_{m_j-1}
        \end{equation}
        (not all necessarily distinct) and express
        \begin{equation}
            \mathcal{T}_j \cong \mathrm{Mat}_{N_j}(\C) \cong \bigotimes_{n=0}^{m_j-1} \mathrm{Mat}_{p_n}(\C).
        \end{equation}
        Then we have that \(\mathcal{T}_j = \prod_{n=0}^{m-1} \mathcal{T}_{j_n}\) with each tensor factor being of prime dimension. Further, we can make this factorization such that the eigenbasis of \(\tau^z_j\) becomes a product basis. Indeed, we have that any \(l \in \{0,\ldots, N_j-1\}\) can be uniquely expressed as a sum of integers
        \begin{equation}
            l = \sum_{n=0}^{m_j-1} \mu^l_n \prod_{q=0}^{n-1} p_q
        \end{equation}
        with \(\mu^l_n \in \{0,\ldots, p_n-1\}\). (Empty sums are taken to be 0 and empty products are 1.) Then we can define Hermitian projectors \(\pi^j_{\mu n}\) (with \(\mu \in \{0,\ldots, p_n-1\}\) and \(n \in \{0,\ldots, m_j-1\}\)) onto the direct sum of eigenspaces \(\mathrm{eig}(\lambda^j_l, \tau^z_j)\) such that \(\mu^l_n = \mu\). Note that all \(\pi^j_{\mu n}\) with the same \(n\) have the same rank, specifically \(N_j/p_n\). As \(\mathcal{T}_j\) is a full matrix algebra, we can find \(\tau^x_{j_n}\) such that 
        \begin{equation}
            \tau^x_{j_n} \pi^j_{\mu n} \tau^{x\dagger}_{j_n} = \pi^j_{(\mu+1\bmod p_n) n}
        \end{equation}
        and such that \(\tau^x_{j_n}\) commutes with all the other projectors. Indeed, \(\tau^x_{j_n}\) is defined by cyclically permuting the value of \(\mu^l_n\) in the decomposition of \(l\) for \(\mathrm{eig}(\lambda^j_l, \tau^z_j)\). By essentially the same calculation as Lemma~\ref{lem:matrix_iso}, \(\cexp{\tau^x_{j_n}, \pi^j_{\mu n}}_{\mu=0}^{p_n-1}\) is isomorphic to a full matrix algebra, and so occurs as a tensor factor in \(\mathcal{T}_j\). Then we make the decomposition of \(\mathcal{T}_j\) as
        \begin{equation}
            \mathcal{T}_j \cong \bigotimes_{n=0}^{m_j-1} \cexp{\tau^x_{j_n}, \pi^j_{\mu n}}_{\mu=0}^{p_n-1}.
        \end{equation}
        We pick new LIOMs \(\tau^z_{j_n} = \sum_{\mu=0}^{p_n-1} \lambda^{j_n}_\mu \pi^j_{\mu n}\) with \(\lambda^{j_n}_\mu \in \R\) all distinct. These commute with \(\tau^z_j\), and so are each actually functions of \(\tau^z_j\) by Definition~\ref{def:MBL}. On the other hand, they generate all diagonal operators in \(\mathcal{T}_j\) in the \(\tau^z_j\) eigenbasis, so \(\tau^z_j\) can also be recovered from these new LIOMs. Thus, the new LIOMs are equivalent to the old LIOM.
        
        Replace the lattice \(J\) with a new lattice \(J'\) which consists of the old lattice sites \(j\) each repeated \(m_j\) times, and associated LIOMs \(\tau^z_{j_n}\) and algebras \(\mathcal{T}_{j_n}\). This lattice has the properties we need. We henceforth drop primes, and just use \(J\). Similarly, we can ensure that each of the local Hilbert spaces \(\mathcal{H}_k\) is of prime dimension.

        As \(\mathcal{T}_j\) and \(\mathcal{O}_k\) are both full matrix algebras, there is an isomorphism between them whenever they have the same dimension. To build these individual isomorphisms into a QCA, we need to make a matching between \(\mathcal{T}_j\) and \(\mathcal{O}_k\) algebras which are supported near one another. Define the bipartite graph \(B=((J,K),E)\) with edges
        \begin{equation}
            (j,k) \in E \iff N_j = M_k \text{ and } \mathrm{dist}(j,k) \leq r.
        \end{equation}
        Then we are looking for a perfect matching in this graph. That is, a bijection \(k \mapsto j(k)\) such that \((j(k), k) \in E\) for all \(k \in K\). Hall's marriage theorem~\cite{Hall1935marriage} (specifically, the infinite version due to Hall~\cite{Hall1998combinatorial}---no relation) gives necessary and sufficient conditions for a perfect matching to exist. Define the neighborhood of \(S \subseteq B\) as
        \begin{equation}
            \partial S = \{b \in B\,:\, (s,b) \in E\text{ for some }s \in S\}.
        \end{equation}
        Then the marriage theorem states that if: (1)~every vertex has finite degree, (2)~for every finite \(S \subseteq J\) we have
        \begin{equation}
            |S| \leq |\partial S|, 
            \label{eqn:marriage}
        \end{equation}
        and similarly (3)~for each finite \(S \subseteq K\) we have \(|S| \leq |\partial S|\), 
        then the required perfect matching bijection exists. The criterion~\eqref{eqn:marriage} is called the \emph{marriage condition}.

        \begin{lemma}\label{lem:matching}
            The graph \(B\) has a perfect matching.
        \end{lemma}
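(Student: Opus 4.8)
The plan is to verify the three conditions of Hall's marriage theorem stated above for the graph \(B\). Condition~(1), finite degree, is immediate: a vertex \(j \in J\) is adjacent only to points \(k \in B_r(j)\cap K\), of which there are finitely many by local finiteness of \(K\), and symmetrically for \(k \in K\). Moreover, since an edge \((j,k)\) requires \(N_j = M_k\) and all these dimensions are prime, \(B\) is the disjoint union over primes \(p\) of subgraphs \(B_p\) on \(J_p := \{j\in J : N_j = p\}\) and \(K_p := \{k\in K : M_k = p\}\), with \((j,k)\) an edge of \(B_p\) iff \(\mathrm{dist}(j,k)\le r\). Vertex counts and neighbourhoods respect this decomposition, so it suffices to check the marriage condition within each \(B_p\), where \(\partial S\) is \(B_r(S)\cap K_p\) or \(B_r(S)\cap J_p\) as appropriate.

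The crux is a pair of algebra containments: for every finite region \(\Omega\subseteq X\),
\begin{equation}
    \bigotimes_{j\,:\,B_r(j)\subseteq\Omega}\mathcal{T}_j \ \subseteq\ \mathcal{O}_\Omega \ \subseteq\ \bigotimes_{j\in B_r(\Omega)\cap J}\mathcal{T}_j,
    \label{eqn:containments_plan}
\end{equation}
where \(B_r(\Omega)\cap J = \{j\in J : \mathrm{dist}(j,\Omega)\le r\}\). The left inclusion is just locality, since \(\mathcal{T}_j\subseteq\mathcal{O}_{B_r(j)}\). For the right inclusion I would first argue \(\bigotimes_{j\in J}\mathcal{T}_j = \mathcal{O}_X\): an operator commuting with every \(\tau^z_j\) is, by maximality of the LIOMs (Definition~\ref{def:MBL}), a function of the \(\tau^z_j\), and then commuting with every \(\tau^x_j\) forces it to be scalar by the argument of Lemma~\ref{lem:matrix_iso}; hence the commutant of \(\langle\tau^z_j,\tau^x_j\rangle_{j\in J}\) is trivial and the bicommutant theorem gives the claim. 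Writing \(\mathcal{O}_X = \bigl(\bigotimes_{j\in B_r(\Omega)\cap J}\mathcal{T}_j\bigr)\otimes\bigl(\bigotimes_{j\notin B_r(\Omega)}\mathcal{T}_j\bigr)\) and noting that every \(A\in\mathcal{O}_\Omega\) commutes with the second factor (whose tensorands are supported away from \(\Omega\)), \(A\) must lie in the first factor.

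With \eqref{eqn:containments_plan} in hand I would take dimensions. Each side is a full matrix algebra---the \(\mathcal{T}_j\) are mutually commuting central simple subalgebras, so the algebra they generate is their tensor product---and a unital embedding \(\mathrm{Mat}_a(\C)\hookrightarrow\mathrm{Mat}_b(\C)\) forces \(a\mid b\). Thus \(\prod_{B_r(j)\subseteq\Omega}N_j\) divides \(\prod_{k\in\Omega\cap K}M_k\), which divides \(\prod_{j\in B_r(\Omega)\cap J}N_j\); comparing the exponent of the prime \(p\) gives
\begin{equation}
    \#\{j\in J_p : B_r(j)\subseteq\Omega\} \ \le\ |\Omega\cap K_p| \ \le\ |B_r(\Omega)\cap J_p|.
\end{equation}
For finite \(S\subseteq J_p\), taking \(\Omega = B_r(S)\) gives \(S\subseteq\{j\in J_p : B_r(j)\subseteq B_r(S)\}\), hence \(|S|\le |B_r(S)\cap K_p| = |\partial S|\); for finite \(S\subseteq K_p\), taking \(\Omega = S\) gives \(|S| = |S\cap K_p| \le |B_r(S)\cap J_p| = |\partial S|\). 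Both marriage conditions hold, so Hall's theorem produces the perfect matching.

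I expect the main obstacle to be the right-hand inclusion in \eqref{eqn:containments_plan}---that is, genuinely converting completeness of the LIOMs into a counting bound---together with making the identity \(\bigotimes_{j\in J}\mathcal{T}_j = \mathcal{O}_X\) and the ``subalgebra \(\Rightarrow\) dimension divides'' step precise in infinite volume; as elsewhere in this appendix these are handled by passing to large finite subregions, where everything reduces to elementary facts about matrix algebras. The prime-by-prime reduction and the ball counting are then routine.
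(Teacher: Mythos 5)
Your proof is correct and follows essentially the same route as the paper: reduce prime by prime, show that the local \(\mathcal{T}\)-algebras embed as tensor factors into the corresponding \(\mathcal{O}\)-algebras (and vice versa, using maximality of the LIOMs to kill the commutant), compare dimensions, and invoke the infinite Hall theorem. The only difference is one of packaging — you derive both Hall conditions from a single two-sided chain of algebra containments and pass through the global identity \(\bigotimes_{j\in J}\mathcal{T}_j = \mathcal{O}_X\), whereas the paper treats the two directions of the marriage condition with separate, purely local arguments in \(\mathcal{O}_{B_r(S)}\) and \(\mathcal{O}_{B_{2r}(S)}\).
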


        \begin{proof}
            We verify the marriage condition, Eq.~\eqref{eqn:marriage}.
            
            Note that every vertex in \(B\) has finite degree by the assumption that \(K\) and \(J\) are locally finite. Further observe that the graph \(B\) splits into distinct connected components labeled by primes \(p\) such that \(N_j = M_k = p\). Thus, we can consider each component separately. Henceforth, we assume that we are working in a given component.
            
            Let \(S \subseteq J\) be a finite subset of LIOM sites. Then we have that \(\mathcal{T}_S\) is a subalgebra in \(\mathcal{O}_{B_r(S)}\). As both \(\mathcal{T}_S\) and \(\mathcal{O}_{B_r(S)}\) are full matrix algebras, \(\mathcal{O}_{B_r(S)}\) must be a tensor product of \(\mathcal{T}_S\) and its commutant in \(\mathcal{O}_{B_r(S)}\). Comparing dimensions then shows that \(\prod_{j \in S} N_j\) divides \(\prod_{k \in B_r(S)} M_k\). Then, all of the primes \(N_j = p\) must occur among the primes \(M_k\), repeated according to their multiplicity. The sites in \(B_r(S)\) with local Hilbert space dimension \(M_k = p\) are precisely the elements of \(\partial S\) in this component. Thus, we conclude \(|S| \leq |\partial S|\).

            In the other direction, let \(S \subseteq K\) be a finite subset of lattice sites. Then we also have that \(\mathcal{O}_S\) is a subalgebra of \(\mathcal{T}_{B_r(S)}\), whereupon the proof proceeds as in the previous paragraph. Indeed, suppose that \(A \in \mathcal{O}_S \subseteq \mathcal{O}_{B_{2r}(S)}\). As \(\mathcal{T}_{B_r(S)}\subseteq \mathcal{O}_{B_{2r}(S)}\) occurs as a tensor factor in \(\mathcal{O}_{B_{2r}(S)}\), \(A\) is a sum of products of operators in \(\mathcal{T}_{B_r(S)}\) and its commutant. We need to show that if \(A\) is in the commutant, then it is a scalar. Indeed, as their supports do not overlap, \(A\) must also be in the commutant of \(\mathcal{T}_{K\setminus B_r(S)}\). Thus, \(A\) commutes with all of the LIOMs, and by Definition~\ref{def:MBL} we must have \(A \in \cexp{\tau^z_j}_{j\in J}\). The only such operator which also commutes with all the \(\tau^x_j\) operators is a scalar, \(A = a \mathbbm{1}\).
        \end{proof}

        We can now prove Theorem~\ref{thm:QCA_exists}.

        \begin{proof}[Proof of Theorem~\ref{thm:QCA_exists}]
            Let \(j:K \to J\) be a perfect matching as in Lemma~\ref{lem:matching}. By Lemma~\ref{lem:matrix_iso} there is an isomorphism of algebras
            \begin{equation}
                \alpha_k: \mathcal{O}_k \to \mathcal{T}_{j(k)}.
            \end{equation}
            Then
            \begin{equation}
                \alpha = \bigotimes_{k\in K} \alpha_k
            \end{equation}
            is an algebra isomorphism between \(\bigotimes_{k\in K} \mathcal{O}_k\) and \(\bigotimes_{j \in J}\mathcal{T}_{j} \). It is bijective because the matching \(k(j)\) is perfect, and it is a homomorphism because all the algebras \(\mathcal{T}_{j} \) commute with one another.

            Further, \(\alpha\) is \(2r\)-locality preserving: we have \(\mathrm{dist}(j(k),k) \leq r\) by the definition of the graph \(B\), and then \(\alpha(\mathcal{O}_k) = \mathcal{T}_{j(k)} \subseteq \mathcal{O}_{B_{2r}(k)}\) by the fact that \(\mathcal{T}_{j(k)}\) is supported in \(\mathcal{O}_{B_{r}(j(k))}\) and the triangle inequality.

            All such complex matrix algebra homomorphisms are of the form \(\alpha(A) = V A V^\dagger\) for some unitary \(V\), which is the QCA we need. It maps all LIOMs to single-site operators, thus meeting the requirements of the theorem.
        \end{proof}

        Define the single-site operators \(\sigma^z_{k} = V^\dagger \tau^z_{j(k)} V\). Then we have that
        \begin{equation}
            V^\dagger H V = \sum_k h_k \sigma^z_{k} + \sum_{kl} h_{kl} \sigma^z_{k} \sigma^z_{l} + \cdots
        \end{equation}
        is a sum of products of the commuting single-site operators \(\sigma^z_k\) (by Definition~\ref{def:MBL}).
        We note that this is just a diagonalization of \(H\) in the simultaneous eigenbasis of all the \(\sigma^z_j\). The nontrivial conclusion of this section is that this diagonalization can be made locally.

        This characterization of localized system with short-ranged entanglement by QCA motivates a similar definition for driven systems. We consider an \(n\)-tone-driven system to be localized with short-ranged entanglement when it has a generalized Floquet decomposition~\cite{Else2019longlived,Long2022QPMBL}
        \begin{equation}
            U(t) = V(\Bt_t) e^{-i t H_F} V(\Bt_0)^\dagger,
            \label{eqn:FB_app}
        \end{equation}
        where the static Floquet Hamiltonian
        \begin{equation}
            H_F = \sum_k h_k \sigma^z_{k} + \sum_{kl} h_{kl} \sigma^z_{k} \sigma^z_{l} + \cdots
        \end{equation}
        is a sum of products of the fixed \(\sigma^z_k\) operators and the \emph{micromotion operator} \(V(\Bt)\) is a continuous parametrized family of QCA. By a similar proof to the above, we see that this definition is equivalent to the existence of a maximal set of LIOMs \(\tau^z_j(\Bt)\) with explicit continuous \(\Bt\) dependence and a family of conjugate \(\tau^x_j(\Bt)\) operators also with continuous \(\Bt\) dependence.
        Alternatively, the definition is also equivalent to the \emph{quasienergy operator}~\cite{Sambe1973,Ho1983,Blekher1992,Verdeny2016,Martin2017} having a complete set of LIOMs~\cite{Long2022QPMBL} and conjugate operators which create or destroy excitations locally.
        
        The requirement of the family \(\tau^x_j(\Bt)\) being continuous is nontrivial. For instance, even for 2-tone-driven qubits we may construct continuous families of LIOMs such that each family of LIOM eigenstates has a nonzero Chern number~\cite{Martin2017,Crowley2019qptop}. However, there is an obstruction to continuously defining a family of conjugate operators, as the eigenstates cannot be given a smooth gauge. We do not consider such systems to be localized as they have continuous spectrum~\cite{Crowley2019qptop}, and so do not enjoy the same stability to generic perturbations we expect of localized systems~\cite{Long2022QPMBL}. Indeed, in a \emph{frequency lattice} picture, such systems are delocalized~\cite{Crowley2019qptop}.

        We emphasize that Floquet localized phases with short-ranged entanglement correspond to loops of QCA---\emph{not} loops in the space of QCA modulo generalized permutations, even if QCA which differ by generalized permutations correspond to the same LIOMs. Indeed, a loop of QCA modulo permutations could correspond to a path from the identity QCA to a permutation. This path of QCA does not define a legitimate micromotion operator \(V(\Bt)\).

    \subsection{Non-uniqueness}
        \label{subapp:nonunique}

        A QCA which maps \(H\) to a sum of product of \(\sigma^z_k\) operators always exists, but this QCA is not unique.
        Here, we characterize this redundancy, and show that it does not affect the classification of driven short-range entangled MBL phases unless \((d,n) = (0,1)\).

        Indeed, diagonalization of the Hamiltonian \(H = V H_{\mathrm{triv}} V^\dagger\) (where \(H_{\mathrm{triv}}\) is a sum of products of \(\sigma^z_k\) operators) only determines \(V\) up to right-multiplication by a \emph{generalized permutation} QCA.
        We define a QCA \(P\) to be a generalized permutation exactly when it maps \(\sigma^z_k\) operators to sums of products of other  \(\sigma^z_k\)s.
        This name is motivated as follows.
        In a finite dimensional Hilbert space, \(V\) is a unitary operator which diagonalizes \(H\) in the simultaneous eigenbasis of \(\{\sigma^z_{k}\}_{k\in K}\). Even if \(H\) is nondegenerate, such unitaries are only unique up to reordering of eigenvalues and multiplying eigenstates by arbitrary phases. Specifically, if \(P\) is a unitary generalized permutation matrix (a unitary matrix with exactly one unit modulus complex number in each row and column) in the eigenbasis of \(\{\sigma^z_{k}\}_{k\in K}\), then \((VP)^\dagger H (VP)\) is still diagonal. We extend this ``generalized permutation'' language to the infinite dimensional case via the property of preserving the algebra of \(\sigma^z_k\) operators.

        Translations form an important class of QCA permutation operators~\cite{Gross2012QCA,Freedman2020higherd}. In one dimension, all QCA are given by translation operators, up to a finite depth circuit. Thus, while one-dimensional QCA are classified by \(\log \Q^\times\)~\cite{Gross2012QCA}, all these nontrivial QCA land within the redundancy of the description of localized phases. As such, all one-dimensional localized phases of spins without symmetry are trivial. The same is not true of fermions, where the Kitaev chain~\cite{Kitaev2001majorana} represents a nontrivial localized phase which can be prepared by a QCA~\cite{Po2017radical}. It has invertible eigenstate order, and so is not an ALT phase.

        Mathematically, the classification of localized phases with short-ranged entanglement is given by a quotient of \(\pi_0(\QCA_d)\). Denoting the set of generalized permutation QCA by \(\mathrm{GP}_d\), there is a canonical inclusion map \(\iota: \mathrm{GP}_d \to \QCA_d\), and an induced homomorphism \(\iota^*: \pi_0(\mathrm{GP}_d) \to \pi_0(\QCA_d)\). We write the image of this homomorphism as \(I_{d,0} = \iota^*(\pi_0(\mathrm{GP}_d))\). Then the quotient which classifies localized topological phases is 
        \begin{equation}
            \pi_0(\QCA_d)/I_{d,0}.
        \end{equation}
        This group is isomorphic to \(\pi_0(\QCA_d/\mathrm{GP}_d)\). Indeed, there is a fibration sequence
        \begin{equation}
            \mathrm{GP}_d \xrightarrow{\iota} \QCA_d \xrightarrow{q} \QCA_d/\mathrm{GP}_d
        \end{equation}
        (where \(q\) is the quotient map which sends a QCA \(V\) to its coset \(V \mathrm{GP}_d\)) with associated long exact sequence of homotopy groups
        \begin{equation}
            \cdots \to \pi_0(\mathrm{GP}_d) \xrightarrow{\iota^*} \pi_0(\QCA_d) \xrightarrow{q^*} \pi_0(\QCA_d/\mathrm{GP}_d) \to 0,
        \end{equation}
        and \(\pi_0(\QCA_d/\mathrm{GP}_d) \cong \pi_0(\QCA_d)/I_{d,0}\) follows from the first isomorphism theorem applied to \(q^*\).
    
        In the driven case, the redundancy is captured by a family of generalized permutation matrices \(P(\Bt)\) with elements
        \begin{equation}
            P_{\alpha \beta}(\Bt) = \delta_{\pi(\alpha) \beta} e^{-i \Bm_\beta \cdot \Bt - i \phi_\beta},
        \end{equation}
        where \(\pi\) is a permutation of the simultaneous eigenstates of the \(\{\sigma^z_k\}_{k\in K}\) operators and \(\Bm_\alpha \in \Z^n\) are vectors of integers. [Such \(P(\Bt)\) are generalized permutation matrices in a frequency lattice picture~\cite{Long2021ALTP,Long2022QPMBL}.] Any \(P\) of this form defines an identical unitary evolution~\eqref{eqn:FB_app} with
        \begin{align}
            V'(\Bt) &= V(\Bt) P(\Bt), \\
            H_F' &= H_F - i [\partial_t P(\Bt_t)] P(\Bt_0)^\dagger.
        \end{align}
        Note that \(i [\partial_t P(\Bt_t)] P(\Bt_t)^\dagger\) is a static diagonal operator with elements
        \begin{equation}
            (i [\partial_t P(\Bt_t)] P(\Bt_t)^\dagger)_{\alpha \beta} =\Bm_{\pi^{-1}(\alpha)} \cdot \Bw \, \delta_{\alpha \beta},
        \end{equation}
        where \(\Bw = (\omega_1, \ldots, \omega_n)\) is the vector of drive frequencies.
        If \(P(\Bt)\) is also a QCA for all \(\Bt\), then \(V'(\Bt)\) also defines an acceptable generalized Floquet decomposition.

        The only example for which this redundancy affects the classification of driven localized phases for \(d\leq 3\) is the classification of Floquet (\(n=1\)) phases in zero dimensions. 
        From \autoref{sec:Classification}, we have that \(\pi_1(\QCA_0) \cong \Q/\Z\), and that a complete index is given by the winding number \(\widetilde{W}_1[V]\). In zero dimensions, all unitaries are QCA, so the diagonal unitary with elements
        \begin{equation}
            P_m(\theta)_{\alpha \beta} = e^{i m \theta \delta_{\alpha 0}} \delta_{\alpha \beta}
        \end{equation}
        defines an acceptable \(V'(\theta) = V(\theta) P_m(\theta)\). However, we can compute that
        \begin{equation}
            \widetilde{W}_1[V P_m] = \widetilde{W}_1[V] + \frac{m}{N} \mod 1,
        \end{equation}
        where \(N\) is the Hilbert space dimension. Thus, multiplication by \(P_m(\theta)\) can take \(\widetilde{W}_1[V']\) to any allowed value, and we conclude that all loops of zero-dimensional QCA (loops of projective unitaries) define trivial localized phases.

        To see that the redundancy by generalized permutations does not affect the driven classification for other value of \((d,n)\), we must be more systematic. We can classify driven localized phases by a quotient of the group \(\ho{\T^n}{\QCA_d}\), where \(\T^n = [\R/(2\pi\Z)]^n\) is the \(n\)-torus. The generalized permutations \(P(\Bt)\) are particular maps from \(\T^n\) to \(\QCA_d\), and we denote the subgroup of \(\ho{\T^n}{\QCA_d}\) that includes such maps by \(I_{d,n}\). Then the classification of \(n\)-tone driven localized phases in \(d\)-dimensions with short-ranged entanglement is given by
        \begin{equation}
            \ho{\T^n}{\QCA_d}/I_{d,n}.
        \end{equation}
        Note that this group is \emph{not} isomorphic to \(\ho{\T^n}{\QCA_d/\mathrm{GP}_d}\) in general. Taking, for instance, \(n=1\), there are loops in \(\ho{\T^1}{\QCA_d/\mathrm{GP}_d} \cong \pi_1(\QCA_d/\mathrm{GP}_d)\) which correspond to paths from the identity to a locally generated permutation. This is not a legitimate micromotion operator \(V(\theta)\), as it is not periodic as a (projective) unitary.

        Assuming that a generalized permutation \(P(\Bt)\) acts as a family of finite depth circuits, we show that the homotopy class of \(P(\Bt)\) is trivial unless \((d,n) = (0,1)\).
        This implies that \(I_{d,n} \leq \pi_0(\QCA_d)\) for \((d,n) \neq (0,1)\), and we suspect that \(I_{d,n} \cong 0\) unless \(d+n =1\), so that the only redundancy in the description of MBL phases by QCA phases is for \((d,n) \in \{(0,1),(1,0)\}\).
        
        As we assume \(P(\Bt)\) acts by finite depth circuits, we only need to consider \(n \geq 1\)
        In fact, Ref.~\cite{Long2021ALTP} already showed that \(I_{0,n} \cong 0\) for \(n \neq 1\), so we can also restrict to \(d \geq 1\).
        In this case, we argue that \(P(\Bt)\) can always be constructed as a depth-two circuit of \((d-1)\)-dimensional generalized permutations. 
        Such a circuit does not pump any nontrivial QCA to its boundary on open boundaries, so if the bulk-boundary correspondence discussed in \autoref{sec:Classification} holds, then all such circuits are trivial. More constructively, the deformation discussed in Appendix~\ref{subapp:deformation} can be applied to such a circuit to explicitly give a homotopy to the identity.

        To arrive at this depth-two circuit, we observe that, by definition, any generalized permutation \(P(\Bt)\) is generated by time evolution under a time-independent Hamiltonian composed of \(\sigma^z\) LIOMs. That is, there are trivial LIOM Hamiltonians \(H_j\) such that
        \begin{equation}
            [i \partial_{\theta_j} P(\Bt)] P(\Bt)^\dagger = H_j.
        \end{equation}
        All the \(H_j\) operators commute, so we can solve this equation as \(P(\Bt) = e^{-i \Bt \cdot \vec{H}} P(0)\), where \(\vec{H} = (H_1, ..., H_n)\).
        Split \(P(0)\) into a brickwork circuit of \((d-1)\)-dimensional blocks
        \begin{equation}
            P(0) = \left( \prod_{i \in 2\Z +1} b_i^\dagger \right) \left( \prod_{i \in 2\Z} b_i \right),
        \end{equation}
        where each \(b_i\) only overlaps in support with \(b_{i\pm 1}\) near their boundaries, and such that the width of these blocks are much larger than the support of any term in \(H_j\). 
        We want to restrict the Hamiltonians \(H_j\) to \(H^{(2i)}_j\) with support contained within that of \(b_{2i}\), such that \(H^{(2i)}_j\) agrees with \(H_j\) near the center of the block, and such that \(e^{-i \Bt \cdot \vec{H}^{(2i)}} \) is still periodic.
        To construct \(H^{(2i)}_j\), first define \(\tilde{H}^{(2i)}_j\) to include every term in \(H_j\) with support overlapping the \(b_{2i}\) block. Then pick an eigenvalue \(\lambda_k\) of each \(\sigma^z_k\) with \(k\) not inside the \(b_{2i}\) block, and make a replacement
        \begin{equation}
            \sigma^z_{k_1} \cdots \sigma^z_{k_l} \otimes A \mapsto \lambda_{k_1} \cdots \lambda_{k_l} \mathbbm{1} \otimes A
        \end{equation}
        for every term in \(\tilde{H}^{(2i)}_j\) which involves these \(\sigma^z_k\)s. The result of this replacement is \(H_j^{(2i)}\). It is certainly supported within the relevant block, as every term in \(H_j\) (and hence \(\tilde{H}^{(2i)}_j\)) is a sum of products of \(\sigma^z_k\) LIOMs, and every factor not supported in the block gets replaced by a scalar. To see that \(e^{-i 2\pi H_j^{(2i)}} = \mathbbm{1}\), observe that \(H_j^{(2i)}\) acts on the \(b_{2i}\) block as \(H_j\) would if we projected every site \(k\) outside the block into the \(\lambda_k\) eigenstate of \(\sigma^z_k\) (up to a constant). But \(e^{-i 2\pi H_j} = \mathbbm{1}\), so projecting into any subspace still results in the identity action, and thus \(e^{-i 2\pi H_j^{(2i)}} = \mathbbm{1}\).\footnote{This argument can be expressed more succinctly and precisely using the notion of projected algebras in Ref.~\cite{Tu2025anomalies}.}
        Then the even-indexed lower-dimensional blocks for the generalized permutation \(P(\Bt)\) will be \(e^{-i \Bt \cdot \vec{H}^{(2i)}} b_{2i}\). We observe that the product
        \begin{equation}
            P(\Bt) \left( \prod_{i \in 2 \Z} b_{i}^\dagger e^{i \Bt \cdot \vec{H}^{(i)}} \right) = \prod_{i \in 2\Z + 1} b_i^\dagger(\Bt)
        \end{equation}
        must split into a tensor product of disjoint odd-indexed blocks, as \(P(\Bt)\) and \(e^{-i \Bt \cdot \vec{H}^{(2i)}} b_{2i}\) agree deep inside the even-indexed blocks. As products of generalized permutaions, these blocks are also generalized permutations.
        Thus, we have the desired brickwork circuit for \(P(\Bt)\), and conclude that \(P(\Bt)\) is homotopic to the identity as a map from \(\T^{n\geq1}\) to \(\QCA_{d \geq 1}\).

\section{The \texorpdfstring{\(\Omega\)}{Omega}-spectrum property}
    \label{app:Omega_spectrum}

    We conjecture that the spaces of quantum cellular automata on the integer lattice \(\Z^d\) stabilized by adding ancillae, \(\QCA_d\), form an \(\Omega\)-spectrum,
    \begin{equation}
        \QCA_{d-1} \simeq \Omega \QCA_{d}.
    \end{equation}
    For \(d < 0\), we define \(\QCA_d = \Omega^{-d} \QCA_0\), making the negative degrees of this statement trivial. In positive degree, we propose that the swindle map \(S\) (\autoref{fig:swindle}) provides the homotopy equivalence. 

    In this appendix, we provide further arguments in support of this conjecture. In Appendix~\ref{subapp:blending}, we recall that if loops of QCA can be put on open boundary conditions such that they act by QCA on their edges---known as \emph{blending with the identity}---then loops of QCA can all be parametrized as finite depth circuits of lower-dimensional QCA. We also argue that this condition holds for loops of one-dimensional QCA. Then, in Appendix~\ref{subapp:deformation}, we construct an explicit deformation retract from the space of circuits to the image of the swindle map. As the swindle map is injective, this shows
    \begin{equation}
        \QCA_{d-1} \cong \mathrm{im}(S) \simeq \Omega\QCA_{d}
        \label{eqn:deformation_result}
    \end{equation}
    under the assumptions of Appendix~\ref{subapp:blending}, where ``\(\cong\)'' is homeomorphism.

    Making these arguments rigorous requires being more specific about the topology on \(\QCA_d\), even in \(d=1\). Here, we implicitly use the \emph{strong topology} on bounded-range QCA---a sequence of QCA \(V_n\) converges if and only if it is of bounded range and \(V_n A V_n^\dagger\) converges for all local operators \(A\).\footnote{Being more technical, this means we are using the final topology on the direct limit \(\varinjlim \QCA_d(r)\), where \(\QCA_d(r)\) is the space of \(d\)-dimensional QCA with range bounded by \(r\), equipped with the strong topology, and the direct limit is taken with respect to the inclusion \(\QCA_d(r) \hookrightarrow \QCA_d(r')\) for \(r \leq r'\).}

    While it may be useful to keep spin systems in mind, the arguments we present in this appendix are agnostic to local operators being fermionic or spin operators. Provided that the additional ancillae used to stabilize a system may support fermionic operators, everything in the following sections also applies to systems with fermions.

    \subsection{Blending and circuits}
        \label{subapp:blending}

        A \emph{blend} between \(\alpha\) and \(\beta \in \QCA_d\) is a third QCA \(\gamma\) such that
        \begin{align}
            \gamma(A) &= \alpha(A) \quad\text{for }\mathrm{supp}(A) \subseteq (-\infty,x_{l}]\times \Z^{d-1}, \\
            \gamma(A) &= \beta(A) \quad\text{for }\mathrm{supp}(A) \subseteq [x_{r}, \infty)\times \Z^{d-1},
        \end{align}
        where \(\mathrm{supp}(A)\) is the smallest subset of \(\Z^d\) on which the operator \(A\) acts non-trivially, \(x_{l,r}\) are positions along the first axis in \(\Z^d\), and \(\alpha(\cdot)\) is the map on local operators defined by conjugation by the locality-preserving unitary defining \(\alpha\).
        That is, \(\gamma\) agrees with \(\alpha\) to the left of some cut, and agrees with \(\beta\) to the right of the cut. We can similarly define a blend \(\gamma(\Bt)\) between parametrized families of QCA \(\alpha(\Bt)\) and \(\beta(\Bt)\). Two (families of) QCA are said to blend into each other if a blend exists between them.

        Two QCA blend into one another if and only if they are related by a finite depth circuit of lower-dimensional QCAs~\cite{Freedman2022group}. That is, \(\alpha\) blends with \(\beta\) at any position of the cut \(x\) if and only if \(\alpha \beta^{-1}\) is equal to a finite depth unitary circuit. (Actually, it is sufficient for a blend to exist at any position~\cite{Haah2023invertible}.)

        Indeed, it is sufficient to consider a blending between \(\gamma = \alpha \beta^{-1}\) and the identity. Then \(\gamma\) should be equal to a finite depth circuit. 
        Let the blending at the cut across \(\{x\}\times \Z^{d-1}\) be \(\tilde{\gamma}_x\), and coarse grain such that \(\tilde{\gamma}\) agrees with \(\gamma\) for \(x'<x\) and acts as the identity for \(x'>x\).
        Then we have that \(\tilde{\gamma}_x \tilde{\gamma}_{x+1}^{-1}\) and \(\tilde{\gamma}^{-1}_x \tilde{\gamma}_{x+1}\) are QCA supported on \(\{x,x+1\}\times \Z^{d-1}\).
        
        We can split \(\gamma\) into a depth-two circuit by iteratively splitting off gates of the form \(\tilde{\gamma}_x \tilde{\gamma}_{x+1}^{-1}\) or \(\tilde{\gamma}^{-1}_x \tilde{\gamma}_{x+1}\). (The gates are themselves QCA in \(d-1\) dimensions.) First, observe that we can split \(\gamma\) into a product of two semi-infinite QCA
        \begin{equation}
            \gamma = \tilde{\gamma}_0 (\tilde{\gamma}_0^{-1} \gamma).
        \end{equation}
        Then we can express the \(\tilde{\gamma}_0\) factor as
        \begin{equation}
            \tilde{\gamma}_0 = (\tilde{\gamma}_0 \tilde{\gamma}_{-1}^{-1}) \tilde{\gamma}_{-1}.
        \end{equation}
        The first factor is the gate we need. We can proceed similarly with the second factor, decomposing \(\tilde{\gamma}_{-1} = \tilde{\gamma}_{-2} (\tilde{\gamma}^{-1}_{-2} \tilde{\gamma}_{-1})\) and so on, alternating whether we perform the multiplication by \(\tilde{\gamma}^{-1}_x \tilde{\gamma}_x\) on the right or \(\tilde{\gamma}_x \tilde{\gamma}^{-1}_x\) on the left. Mirroring the same procedure for positive \(x\), we obtain a depth-two circuit decomposition
        \begin{equation}
            \gamma = \left(\prod_{x\in 2\Z+1} w_{x,x+1} \right) \left(\prod_{x\in 2\Z} w_{x,x+1} \right).
        \end{equation}

        The converse direction for the equivalence of blending and circuit equivalence is straightforward. If \(\gamma\) is a finite depth circuit, we can simply delete all gates to one side of any cut to obtain a blending with the identity.

        It is intuitive that a blending should exist for any path of QCA beginning at the identity. Such a path of unitary operators \(\gamma(t)\) can be generated by a local Hamiltonian \(H(t)\) if it is differentiable. Then tracing over the sites to the right of any cut gives a new Hamiltonian \(\mathrm{Tr}_{(x,\infty)\times \Z^{d-1}}{H(t)}\) which, assuming it continues to generate a QCA, generates a blending between \(\gamma(t)\) and the identity. This is certainly true in one dimension, where any unitary evolution at the zero-dimensional edge is a QCA, but in higher dimensions the new Hamiltonian evolution may develop exponential tails along the edge. This is not much of a problem physically, but such tails are an obstacle to performing proofs.

        Nonetheless, we expect that the path-space \(P\QCA_{d}\) can be continuously parametrized in terms of a strictly local path of circuits in dimensions higher than one too. The loop space \(\Omega\QCA_d\) can then be parametrized in terms of circuits which begin and end at the identity. Note that this does not require that the individual gates must return to the identity, only that they cancel at the end of the path.

    \subsection{Deformation to the swindle}
        \label{subapp:deformation}

        We now argue that Eq.~\eqref{eqn:deformation_result} holds.
        
        First, we show that \(S: \QCA_{d-1} \to \Omega \QCA_{d}\) is injective, such that \(\mathrm{im}(S) \cong \QCA_{d-1}\) is homeomorphic to \(\QCA_{d-1}\). That is, \(S(v,t) = S(w,t)\) for all \(t\) only if \(v=w\). Indeed, multiplying \(S(v,t)\) on the left by the inverse of the top layer in \(S(w,t)\) and on the right by the inverse of the bottom layer (\autoref{fig:swindle}), we have
        \begin{multline}
            \left(\prod_{x\in 2\Z+1} u_{x,x+1}(w,t)^\dagger u_{x,x+1}(v,t) \right) \\
            \times\left(\prod_{x\in 2\Z}  u_{x,x+1}(v,t) u_{x,x+1}(w,t)^\dagger \right) = \mathbbm{1}.
        \end{multline}
        This equation can only be satisfied if, for all \(t \in [0,1]\),
        \begin{align}
            u_{2x,2x+1}(v,t) u_{2x,2x+1}(w,t)^\dagger &= q_{2x}(t) q_{2x+1}(t) \\
            u_{2x-1,2x}(w,t)^\dagger u_{2x-1,2x}(v,t) &= q_{2x-1}(t)^\dagger q_{2x}(t)^\dagger \label{eqn:inj_step}
        \end{align}
        for some sequence of \(q_x(t)\) supported on \(\{x\}\times \Z^{d-1}\). However, the gates \(u_{x,x+1}\) are translationally invariant, so considering \(x=0\) gives
        \begin{equation}
            u_{0,1}(v,t) u_{0,1}(w,t)^\dagger = q_{0}(t) q_{1}(t) = u_{0,1}(v,t)^\dagger u_{0,1}(w,t),
        \end{equation}
        where the second equality comes from translating Eq.~\eqref{eqn:inj_step} by one unit towards positive \(x\) and inverting the equation. We see that \(u_{0,1}(v,t)^2 = u_{0,1}(w,t)^2\). Making a spectral decomposition of both sides, we conclude that \(u_{0,1}(v,t) = u_{0,1}(w,t)\) up to multiplying eigenvalues by \(\pm1\). However, these paths both start at \(t=0\) with the same eigenvalues \(\lambda = +1\), which remain nonzero throughout all \(t\). Thus, all the discrete choices of \(\pm1\) must be \(+1\), and \(u_{0,1}(v,t) = u_{0,1}(w,t)\). Then comparing these at \(t=1\) gives \(v_0 v^\dagger_1 = w_0 w^\dagger_1\), and hence \(v = w\) up to a phase. We regard QCA as projective unitaries, so \(v=w\).
        
        We now assume that all loops of QCA \(\gamma(t)\) can be represented by paths of circuits (Appendix~\ref{subapp:blending})
        \begin{equation}
            \gamma(t) = \left(\prod_{x\in 2\Z+1} w_{x,x+1}(t) \right) \left(\prod_{x\in 2\Z} w_{x,x+1}(t) \right)
            \label{eqn:gamma_char}
        \end{equation}
        such that \(w_{x,x+1}(0) = \mathbbm{1}\) and the circuit is the identity at \(t=1\). This requires that \(w_{x,x+1}(1) = v^\gamma_x \otimes v^{\gamma\dagger}_{x+1}\) for some sequence \(v^\gamma_x \in \QCA_{d-1}\) which are not necessarily all equal.
        
        \begin{figure}
            \centering
            \includegraphics[width=\linewidth]{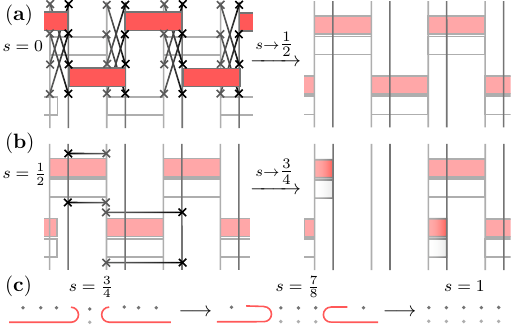}
            \caption{The loop of finite depth circuits \(\gamma(t) \otimes S(v_0^{\gamma},s)^\dagger\) is deformed to the constant loop at the identity as a function of \(s \in [0,1]\). (a)~At \(s=0\), the circuit consists of two layers. Between \(s=0\) and \(s=1/2\) the gates \(u^\dagger_{x,x+1}(v_0^{\gamma},t)\) in the top layer (red) are swapped to the bottom layer while having their order of multiplication reversed.
            (b)~The gate \(u^\dagger_{0,1}(v_0^{\gamma},t) w_{0,1}(t)\) [\(u^\dagger_{-1,0}(v_0^{\gamma},t) w_{-1,0}(t)\)] acts as the identity on site \(0\) for \(t\in\{0,1\}\). It can thus be swapped (over \(s \in [\tfrac{1}{2},\tfrac{3}{4}]\)) so that it acts on site \(1\) [\(-1\)] and its ancilla while ensuring that the whole circuit is still a loop. (c)~Between \(s=\tfrac{3}{4}\) and \(s=1\) the circuit (red) is ``rolled up'', such that it acts as the identity (gray dots) in any finite region.} 
            \label{fig:swindle_deformation}
        \end{figure}
        
        Then, up to stable equivalence, we have
        \begin{equation}
            \gamma(t) = \gamma(t) \otimes \mathbbm{1}
            = \gamma(t) \otimes S(v^\gamma_0,t)^\dagger S(v^\gamma_0,t).
            \label{eqn:gamma_stab}
        \end{equation}
        We now make a sequence of deformations which cancel the \(\gamma(t)\) and \(S(v_0,t)^\dagger\) parts of Eq.~\eqref{eqn:gamma_stab}, leaving just \(\mathbbm{1}\otimes S(v_0,t)\). That is, we define a continuous function
        \begin{align}
            E: \Omega \QCA_d \times [0,1] &\to \Omega\QCA_d \\
            (\gamma(t), s) &\mapsto E(\gamma, t, s)
        \end{align}
        such that \(E(\gamma,t,0) = \gamma(t) \otimes \mathbbm{1}\) and \(E(\gamma,t,1) = \mathbbm{1}\otimes S(v_0^\gamma,t)\). The deformations all take the form of swapping the positions of gates in the circuit representation of Eq.~\eqref{eqn:gamma_stab} (\autoref{fig:swindle_deformation}). The \(S(v_0^\gamma,t)\) factor is a spectator, and so we will focus on \(\gamma(t)\otimes S(v_0,t)^\dagger\).
        
        First, in \(s \in [0,\tfrac{1}{2}]\), we swap the gates \(u_{x',x'+1}^\dagger(v^\gamma_0,t)\) (we denote ancillae sites with a prime, \(x'\)) from the second tensor factor---the second \emph{layer}---to the first, such that the new gates in the first layer become
        \begin{equation}
            w'_{x,x+1}(t) = u^\dagger_{x,x+1}(v^\gamma_0,t) w_{x,x+1}(t).
        \end{equation}
        This also requires reversing the order of multiplication in
        \begin{equation}
            S^\dagger = \left(\prod_{x'\in2\Z} u^\dagger_{x',x'+1}\right) \left(\prod_{x'\in2\Z+1} u^\dagger_{x',x'+1}\right),
        \end{equation}
        so that it matches the circuit structure of \(\gamma(t)\). This can be done by adding the gates \(\mathrm{SWAP}_{x,x'}(2s)\) to the positions marked in \autoref{fig:swindle_deformation}(a).
        
        Observe that
        \begin{equation}
            w'_{0,1}(1) = \mathbbm{1}\otimes v^{\gamma}_0 v^{\gamma\dagger}_1 \text{ and } w'_{-1,0}(1) = v^{\gamma\dagger}_0 v^\gamma_{-1} \otimes \mathbbm{1}
        \end{equation}
        both act as the identity on site \(x=0\) at \(t=1\). Thus, in \(s \in[\tfrac{1}{2},\tfrac{3}{4}]\), we can introduce \(\mathrm{SWAP}(4(s-\tfrac{1}{2}))\) gates to the positions shown in \autoref{fig:swindle_deformation}(b) which ``roll up'' the circuit---the gates previously acting on site \(0\) now act on-site at \(x=\pm1\) between ancillae. Explicitly, for \(s \in [\tfrac{1}{2},\tfrac{3}{4})\), the gate in \(E(\gamma,t,s)\) acting across the \((00',11')\) edge is
        \begin{equation}
            \mathrm{SWAP}_{0,1'}(4(s-\tfrac{1}{2})) w'_{0,1}(t) \mathrm{SWAP}_{0,1'}(4(s-\tfrac{1}{2}))^\dagger.
        \end{equation}
        As \(w'_{0,1}(1)\) acts as the identity on site 0, the introduced swap gates cancel at \(t=1\), and the entire circuit remains a loop for all \(s\). By \(s=\tfrac{3}{4}\), the gate acts as the identity on site 0 for all \(t \in [0,1]\). We repeat an analogous deformation on the other gate with support on site \(0\). The part of the circuit which is supported on \(x=0\) is thus rolled up to act on the ancillae at sites \(x=\pm1\).
        
        This rolling up procedure can be iterated. At \(s = \frac{3}{4}\), we see that the gate \(w'_{1,2}(t) w'_{1',1}(t)\) is the identity on sites \(1\) and \(1'\) when \(t=1\). Thus, by adding ancillae to site \(2\), this gate can also be rolled up within \(s\in[\tfrac{3}{4},\tfrac{7}{8}]\) [\autoref{fig:swindle_deformation}(c)]. We continue rolling up sites \(x\) and \(-x\) within \(s\in[1-2^{-x-1},1-2^{-x-2}]\), such that, as \(s\to 1\), we have \(E(\gamma,t,s) \to \mathbbm{1} \otimes S(v_0^\gamma,t)\) on any finite patch of the lattice. Technically, this makes \(E(\gamma,t,s)\) continuous in the strong operator topology.
        
        The deformation is completed by adding a parallel layer of swap gates which takes \(\mathbbm{1} \otimes S(v_0^\gamma,t)\) to \(S(v_0^\gamma,t) \otimes \mathbbm{1} \in \mathrm{im}(S)\). If the circuit decomposition Eq.~\eqref{eqn:gamma_char} gives \(w_{x,x+1} = u_{x,x+1}(v)\) [as it does when using the \(\mathrm{Tr}_{(x,\infty)\times \Z^d}{H(t)}\) construction of Appendix~\ref{subapp:blending}] then \(E\) is a deformation retract, and Eq.~\eqref{eqn:deformation_result} holds.

\section{Higher homotopy groups of symmetric QCA}
    \label{app:homotopy_symmetry}

    The symmetric implementation of the swap gate constructed in \autoref{sec:LoopsWithSymmetry} allows all the arguments demonstrating the \(\Omega\)-spectrum property to be carried out identically for symmetric QCA. As in the case without symmetry, the homotopy groups \(\pi_n(\QCA^{\mathcal{R}}_d)\) with \(d\leq n\) are determined by \(\pi_{n-d}(\QCA^{\mathcal{R}}_0)\). Thus, we make significant progress on characterizing symmetric QCA in higher dimensions through an in-depth analysis of zero-dimensional QCA with symmetry. 

    We will assume that the symmetry acts linearly on the local degrees of freedom. Indeed, as we are concerned with the stable classification, we can always consider stacking with an ancilla representation such that the symmetry becomes linear.

    In Appendix~\ref{subapp:symm_zerod} we compute all the homotopy groups of symmetric QCA in zero dimensions. We verify a functorial property of the classification in Appendix~\ref{subapp:functor}, and apply the classification to the QP pump of \autoref{subsec:QP_pump} in Appendix~\ref{subapp:QPpump_invariant}.

    \subsection{Zero dimensions with symmetry}
        \label{subapp:symm_zerod}

        Our strategy to classify higher homotopy groups of QCA with symmetry is the same as the case without symmetry: we identify unstable invariants (at fixed Hilbert space dimension) and subsequently find a stable presentation of these invariants which does not change upon adding ancillae. Finally, we characterize the abstract classifying group.

        \subsubsection{Unstable invariants}

        We consider a Hilbert space \(\mathcal{H}\) with finite dimension \(N\) carrying a continuous unitary representation \(\rho: G \to U(N)\) of an arbitrary topological group \(G\). Then \(\rho\) is a direct sum of irreducible representations (irreps) \(r\) each with multiplicity \(n_r\),
        \begin{equation}
            \rho = \bigoplus_{r\text{ irrep of }G} r \otimes \mathbbm{1}_{n_r},
            \label{eqn:rho_irrep}
        \end{equation}
        and we have
        \begin{equation}
            N = \sum_{r\text{ irrep of }G} n_r d_r,
        \end{equation}
        where \(d_r\) is the dimension of the irrep \(r\).
        In particular, all but finitely many \(n_r\) are zero.

        We first describe the classification of symmetric unitaries, before explaining how this classification is modified for projective unitaries. Any family of unitaries \(\tilde{V}: S^n \to U(N)\) which commutes with \(\rho\) must decompose as
        \begin{equation}
            \tilde{V} = \bigoplus_{r\text{ irrep of }G} \mathbbm{1}_{d_r} \otimes \tilde{V}_r.
        \end{equation}
        Any deformation of this family which continues to commute with \(\rho\) must maintain this structure. Thus, the families \(\tilde{V}_1(\Bt)\) and \(\tilde{V}_2(\Bt)\) are homotopic through a symmetric homotopy if and only if \(\tilde{V}_{1r}(\Bt)\) and \(\tilde{V}_{2r}(\Bt)\) are homotopic for all \(r\). We have a complete set of invariants given by the winding numbers of each \(\tilde{V}_r\),
        \begin{equation}
            W^\rho_n[\tilde{V}] = (W_n[\tilde{V}_{r}])_{r\text{ irrep of }G}.
        \end{equation}
        
        With some foresight, we choose to identify \(W^\rho_n[\tilde{V}]\) with an element of the module
        \begin{multline}
            \mathrm{Rep}(G)[\pi_n(U(N))] = \Bigg\{ \sum_{\rho \in \mathrm{Rep}(G)} \rho \cdot W_n^\rho \\
            :\, W_n^\rho \in \pi_n(U(N)) \Bigg\},
        \end{multline}
        which consists of finite formal sums of winding numbers with coefficients in \(\mathrm{Rep}(G)\), regarded as a rig (a ri\emph{n}g without \emph{n}egatives). That is, addition in \(\mathrm{Rep}(G)\) is direct sum, multiplication is tensor product (which distributes over direct sums) and elements are regarded up to unitary equivalence of representations, so that addition and multiplication are commutative. We do not append the formal negatives used to define the representation ring \(R_G\). 
        However, as \(\pi_n(U(N)) \in \{0, \Z\}\) for sufficiently large \(N\), the module \(\mathrm{Rep}(G)[\pi_n(U(N))]\) is isomorphic to either \(0\) or \(R_G\).
        
        Then we define the complete invariant for symmetric unitaries
        \begin{equation}
            W^\rho_n[\tilde{V}] = \sum_{r\text{ irrep of }G} r \cdot W_n[\tilde{V}_r].
        \end{equation}
        
        QCA are properly projective unitaries, rather than unitaries. \(W^\rho_n[\tilde{V}]\) continues to function as a complete invariant for \(n \geq 2\), as in the case without symmetry. We may thus denote them \(W^\rho_n[V]\), where \(V: S^n \to PU(N)\) is a family of symmetric projective unitaries. However, there are symmetric projective unitaries \(V\) which cannot be expressed as symmetric proper unitaries (requiring a larger group of \(n=0\) invariants), and there are loops \(V(\theta)\) of symmetric proper unitaries which become equivalent when regarded as symmetric projective unitaries (requiring a smaller group of \(n=1\) invariants).

        We first recall that connected components of symmetric projective unitaries (\(n=0\)) are labeled by one-dimensional representations of \(G\). Given a symmetric projective unitary \(V\), we can arbitrarily choose a proper unitary \(\tilde{V}\) in the preimage of \(V\) under the quotient by global phases. The fact that \(V\) is symmetric only implies that \(\tilde{V}\) is symmetric up to a global phase,
        \begin{equation}
            \rho(g)^\dagger \tilde{V} \rho(g) = \chi[V](g) \tilde{V},
            \label{eqn:V_lift_proj}
        \end{equation}
        where we assume that \(\chi[V](g)\) is continuous as a function of \(g\).
        Note that \(\chi[V]\) is the same when \(\tilde{V}\) is multiplied by a global phase, justifying the notation \(\chi[V]\) rather than \(\chi[\tilde{V}]\). Using that \(\rho\) is a representation, we see that \(\chi[V]:G \to U(1)\) is a one-dimensional representation of \(G\). From Eq.~\eqref{eqn:rho_irrep}, we see that \(\tilde{V}\) must decompose as
        \begin{equation}
            \tilde{V} = \sum_{r\text{ irrep of }G} \sigma_{r \to \chi[V] r} \otimes \tilde{V}_r,
            \label{eqn:Vtilde_decomp}
        \end{equation}
        where \(\sigma_{r \to \chi[V] r}\) is a matrix with a single nonzero element equal to 1, such that \(\tilde{V}\) maps the irrep \(r\) block of \(\rho\) to the irrep \(\chi[V] r\) block.  Any path of projective unitaries \(V(t)\) lifts to a path of proper unitaries \(\tilde{V}(t)\) [as the quotient \(q : U(N) \to PU(N)\) is a fibration], and the space of permutations \(\sigma_{r \to \chi[V] r}\) is discrete, so the permutation label \(\chi[V]\) is an invariant for path components of projective unitaries. Any unitaries with the same \(\chi[V]\) in the decomposition Eq.~\eqref{eqn:Vtilde_decomp} can be connected by a path, simply by connecting each \(\tilde{V}_r\) by a path. Thus, \(\chi[V]\) is a complete invariant.
        
        However, not all \(\chi[V] \in \mathrm{Hom}(G \to U(1)) =: X_G\) can occur in a finite Hilbert space. For the \(\tilde{V}\) in Eq.~\eqref{eqn:Vtilde_decomp} to be unitary, we need that \(\sum_r \sigma_{r \to \chi[V] r}\) is a finite-dimensional permutation matrix, and so in particular \(\chi[V]\) has finite order: there is an \(n\) such that \(\chi[V](g)^n = 1\) for all \(g\). Thus, \(\chi[V]\) must belong to the torsion subgroup of \(X_G\) (the subgroup of elements of finite order, with the group operation in \(X_G\) defined pointwise), which we denote \(X_G^T\). Conversely, given any \(\chi[V] \in X_G^T\) with order \(n\), we can take
        \begin{equation}
            \rho = \bigoplus_{p=0}^{n-1} \chi[V]^p
        \end{equation}
        and then the unitary  \(\sum_{p=0}^{n-1}\sigma_{\chi[V]^p \to \chi[V]^{p+1}}\) achieves the invariant \(\chi[V]\).
        
        For \(n=1\), we can multiply a loop of proper unitaries \(\tilde{V}(\theta)\) by \(e^{i m \theta}\) and change all the winding numbers, while still forming the same loop of projective unitaries after the quotient by global phases. Indeed, we find
        \begin{align}
            W^\rho_1[e^{i m \theta} \tilde{V}] &= \sum_{r\text{ irrep of }G} r\cdot W_1[e^{i m \theta}\tilde{V}_r] \\
            &= \sum_{r\text{ irrep of }G} r \cdot (m n_r + W_1[\tilde{V}_r]) \\
            &= \left(\sum_{r\text{ irrep of }G} r\otimes\mathbbm{1}_{n_r} \right)\cdot m + W^\rho_1[\tilde{V}] \\
            &= \rho \cdot m + W^\rho_1[\tilde{V}],
        \end{align}
        where we used
        \begin{equation}
            r\cdot n_r = r\cdot(1+\cdots+1) = r+\cdots +r = r\otimes \mathbbm{1}_{n_r} \cdot 1
        \end{equation}
        for \(n_r \geq 0\) and the decomposition of \(\rho\) into irreps, \(\rho = \sum_r r \otimes \mathbbm{1}_{n_r}\). Thus, we must consider the coset
        \begin{equation}
            w^\rho_1[V] = W^\rho_1[\tilde{V}] + \rho \cdot \pi_1(U(N)) \cong W^\rho_1[\tilde{V}] + \rho \cdot \Z,
        \end{equation}
        which is now a well-defined invariant for loops of projective unitaries \(V(\theta)\).

        \subsubsection{Stable invariants}

        With the unstable invariants identified, we must now consider the stabilization process of appending ancillae. We have a directed system among symmetric projective unitary groups \(PU(N)^\rho\) [the subset of \(PU(N)\) which commutes with the representation \(\rho\)] given by
        \begin{align}
            \iota_{\rho \to \rho \sigma} : PU(N)^\rho &\to PU(NM)^{\rho \sigma}, \\
            V &\mapsto V \otimes \mathbbm{1} \nonumber
        \end{align}
        where we take \(\rho, \sigma \in \mathcal{R}\) (the monoid of ancillae), \(N = \mathrm{dim}(\rho)\), and \(M = \mathrm{dim}(\sigma)\). The stabilized space \(\QCA^\mathcal{R}_0\) is the direct limit
        \begin{equation}
            \QCA^{\mathcal{R}}_0 = \varinjlim PU(N)^\rho.
        \end{equation}
        Strictly, we require that the representation \(\rho\) grows through the directed system such that each representation in \(\mathcal{R}\) eventually occurs as a tensor factor in \(\rho\).

        Starting at \(n=0\), we find the invariant \(\chi[V] = e^{i \omega^\rho[V]}\) of \(\pi_0(PU(N)^\rho)\) is actually already stable to appending ancillae. With \(\omega^\rho[V] \in \R/(2\pi \Z) \cong U(1)\) defined modulo \(2\pi\), we can see from Eq.~\eqref{eqn:V_lift_proj} that
        \begin{equation}
            \iota_{\rho \to \rho \sigma}^*(\omega^\rho[V]) = \omega^{\rho\sigma}[V \otimes \mathbbm{1}] = \omega^\rho[V]
        \end{equation}
        and more generally
        \begin{equation}
            \omega^{\rho \sigma}[V \otimes Q] = \omega^\rho[V] + \omega^\sigma[Q].
        \end{equation}

        To compute the action of the induced maps on the remaining unstable invariants, \(\iota^*_{\rho \to \rho \sigma}(W_n^\rho[V]) = W_n^{\rho\sigma}[V\otimes \mathbbm{1}]\), we suppose that we have Clebsch-Gordan multiplicities \(C_{r r'}^k \in \Z\) such that, for \(r, r', k\) irreducible,
        \begin{equation}
            r r' = \sum_k C^{r r'}_k k \in \mathrm{Rep}(G),
        \end{equation}
        where \(C^{r r'}_k k = k + \cdots + k = \mathbbm{1}_{C^{r r'}_k} \otimes k\). Taking
        \begin{equation}
            \rho = \sum_r n^\rho_r r \quad\text{and}\quad
            \sigma = \sum_{r'} n^\sigma_{r'} r',
        \end{equation}
        we have, with summation implied,
        \begin{equation}
            \rho \sigma = n_r^\rho n_{r'}^\sigma C_k^{r r'} k.
        \end{equation}
        Then \(V \otimes \mathbbm{1}\) decomposes into the irreducible subspaces of \(\rho \sigma\) (possibly after multiplication by a permutation of the irrep subspaces) as
        \begin{equation}
            V \otimes \mathbbm{1} \cong \bigoplus_k \left[\mathbbm{1}_{d_k} \otimes \left(\bigoplus_{r, r'} V_r \otimes \mathbbm{1}_{n_{r'}^\sigma C^{r r'}_k}\right) \right],
        \end{equation}
        and the invariant becomes
        \begin{align}
            W^{\rho\sigma}_n[V\otimes \mathbbm{1}] &= \sum_{k, r, r'} k\cdot W_1[V_r \otimes \mathbbm{1}_{n_{r'}^\sigma C^{r r'}_k}] \\
            &= \sum_{k,r,r'} (n_{r'}^\sigma C^{r r'}_k k)\cdot W_1[V_r] \\
            &= \sum_{r} (\sigma r)\cdot W_1[V_r] \\
            &= \sigma \cdot W^{\rho}_n[V].
        \end{align}
        An analogous calculation produces the generalization
        \begin{equation}
            W^{\rho \sigma}_n[V\otimes Q] = \sigma \cdot W^\rho_n[V] + \rho\cdot W^\sigma_n[Q].
            \label{eqn:symm_Wadd}
        \end{equation}
        
        Then the induced directed system is
        \begin{align}\label{eqn:dir_pi_symm}
            \iota^*_{\rho \to \rho \sigma}: \mathrm{Rep}(G)[\pi_n(U(N))] &\to \mathrm{Rep}(G)[\pi_n(U(NM))] \n
            W^\rho_n[V] &\mapsto \sigma \cdot W^\rho_n[V]. 
        \end{align}
        We recover Eq.~\eqref{eqn:dir_pi} by taking \(G = 0\) to be the trivial group.

        Thus, we can identify a stable invariant by formally dividing \(W^\rho_n[V]\) by the representation \(\rho\),
        \begin{equation}
            \widetilde{W}^\rho_n[V] = \frac{W^\rho_n[V]}{\rho} \in \mathcal{R}^{-1} \mathrm{Rep}(G)[\pi_n(U(N))],
        \end{equation}
        which then obeys \(\widetilde{W}^{\rho\sigma}_n[V\otimes \mathbbm{1}] = \widetilde{W}^\rho_n[V]\).
        Here, \(\mathcal{R}^{-1} \mathrm{Rep}(G)[\pi_n(U(N))]\) is the algebraic localization of the module \(\mathrm{Rep}(G)[\pi_n(U(N))]\), defined in the next section.

         For \(n=1\), the invariant is a coset
        \begin{equation}
            \widetilde{W}^\rho_1[V] = \widetilde{W}^\rho_1[V] + 1\cdot \Z \in \mathcal{R}^{-1} \mathrm{Rep}(G)[\pi_1(U(N))]/(1\cdot \Z).
            \label{eqn:first_winding_symm}
        \end{equation}
        
        \subsubsection{Classification}

        If \(M\) is a module over \(R\) and \(S\subseteq R\) is a multiplicatively closed subset of \(R\), then the \emph{algebraic localization} of \(M\) is the set \(S^{-1} M := (M\times S)/\sim\), where the equivalence relation is
        \begin{multline}
            (m_1,s_1) \sim (m_2,s_2) \iff \exists \sigma \in S \text{ such that}\\
            \sigma\cdot (s_1\cdot m_2  - s_2 \cdot m_1) = 0 \in M.
            \label{eqn:alg_loc_def}
        \end{multline}
        We write the equivalence class of \((m,s)\) as \(m/s\) and equip this set with the addition operation and \(R\)-action
        \begin{align}
            \frac{m_1}{s_1} + \frac{m_2}{s_2} &= \frac{s_2 \cdot m_1 + s_1 \cdot m_2}{s_1 s_2}, \label{eqn:loc_add}\\
            \rho \cdot \frac{m}{s} &= \frac{\rho \cdot m}{s}.
        \end{align}

        The formal ratios \(\widetilde{W}^\rho_n[V]\) are properly defined as elements of the localization \(\mathcal{R}^{-1} \mathrm{Rep}(G)[\pi_n(U(N))]\), which from Eqs.~(\ref{eqn:symm_Wadd},~\ref{eqn:loc_add}) gives
        \begin{equation}
            \widetilde{W}^{\rho\sigma}_n[V\otimes Q] = \widetilde{W}^\rho_n[V] + \widetilde{W}^\sigma_n[Q].
        \end{equation}
        The one-dimensional stable symmetric winding number is an additive coset in
        \begin{equation}
            \widetilde{W}^\rho_1[V] \in \mathcal{R}^{-1} \mathrm{Rep}(G)[\pi_1(U(N))]/(1\cdot \pi_1(U(N))).
        \end{equation}
        Again, as \(\pi_n(U(N)) \in \{0,\Z\}\), these groups are always isomorphic to either the trivial group or \(\mathcal{R}^{-1} R_G\).

        Note that Eq.~\eqref{eqn:alg_loc_def} implies that \(\widetilde{W}^{\rho_1}_n[V] = \widetilde{W}^{\rho_2}_n[Q]\) if and only if there are representations \(\sigma \otimes \rho_1\) and \(\sigma \otimes \rho_2\) in \(\mathcal{R}\) such that the unstable winding numbers
        \begin{equation}
            W_n[V \otimes \mathbbm{1}_{\sigma \rho_2}] = \sigma \rho_2 \cdot W_n[V] = \sigma\rho_1 \cdot W_n[Q] = W_n[Q \otimes \mathbbm{1}_{\sigma \rho_1}]
        \end{equation}
        are equal. Thus, \(V\) and \(Q\) are strongly equivalent, exactly as we want.

        We have identified a complete set of stable invariants among the elements of the localization \(\mathcal{R}^{-1} R_G\), but not all elements in \(\mathcal{R}^{-1} R_G\) necessarily correspond to achievable invariants with a given \(\mathcal{R}\). The subset which can occur is that where the numerator involves irreps which also occur in the denominator. We define the additive subgroup of \(\mathcal{R}^{-1} R_G\)
        \begin{multline}
            \mathcal{K}^{\mathcal{R}} = \Bigg\{\frac{\sum_{r\text{ irrep}} n_r r}{\rho} \in \mathcal{R}^{-1} R_G \,:\, n_r \neq 0 \implies \\
            \rho = s + r \text{ for some }s \in \mathrm{Rep}(G)\Bigg\}.
        \end{multline}
        This is precisely the subgroup of winding number invariants which can occur.
        
        To conclude, the homotopy groups of symmetric zero-dimensional QCA are
        \begin{equation}
            \pi_n(\QCA_0^\mathcal{R}) \cong 
            \left\{
            \begin{array}{l l}
                \leq X^T_G & \quad n=0, \\
                \mathcal{K}^{\mathcal{R}}/\Z & \quad n=1, \\
                0 & \quad n\geq 2\text{ even,} \\
                \mathcal{K}^{\mathcal{R}} & \quad n \geq 3\text{ odd,}
            \end{array}
            \right.
            \label{eqn:zerodim_symm_class}
        \end{equation}
        where by \(\Z\) we mean the subgroup generated by \(1 \in \mathcal{K}^{\mathcal{R}}\). We have also not precisely characterized the restriction on \(\pi_0(\QCA_0^\mathcal{R}) \leq X_G^T\) imposed by the limited ancillae in \(\mathcal{R}\), as our focus is on the less familiar higher homotopy groups.

        As an example, we consider \(G = U(1)\). Then all irreducible representations are one-dimensional, and given by \(U(1) \ni e^{i \theta} \mapsto e^{i m \theta}\) (in multiplicative notation). Identifying this irrep with the polynomial \(x^m\), the representation ring is isomorphic to the ring of integer Laurent polynomials in \(x\),
        \begin{equation}
            R_{U(1)} \cong \Z[x, x^{-1}].
        \end{equation}
        We have that \(X_{U(1)} \cong \Z\), so \(X_{U(1)}^T \cong 0\) is trivial.
        Let us suppose that the monoid of ancillae \(\mathcal{R}\) contains all finite dimensional representations of \(U(1)\), so \(\mathcal{R}\) is the full representation rig and thus isomorphic to Laurent polynomials with non-negative coefficients,
        \begin{equation}
            \mathcal{R} \cong \N[x, x^{-1}].
        \end{equation}
        Then  \(\mathcal{K}^{\mathcal{R}}\) is the additive group of rational polynomials with positive denominators where all powers which occur in the numerator also occur in the denominator,
        \begin{multline}
            \mathcal{K}^{\mathcal{R}} \cong \bigg\{ \frac{p(x)}{q(x)} \in \mathrm{Frac}\, \Z[x] \,:\,
            q(x) \in \N[x], \\
            x^n \in p(x) \implies x^n \in q(x) \bigg\}.
            \label{eqn:U(1)_loopclass}
        \end{multline}
        We used that one can divide the numerator and denominator of a ratio of Laurent polynomials by the most negative power of \(x\) in either to obtain a ratio of usual polynomials. Then the homotopy groups of \(\QCA^{\mathcal{R}}_0\) with this set of ancillae is given by Eq.~\eqref{eqn:zerodim_symm_class}. We note that the rational polynomial structure which appeared in Ref.~\cite{Zhang2021U1} is again occurring here, as it is a consequence of the representation theory of \(U(1)\).

    \subsection{Functoriality}
        \label{subapp:functor}

        Any classification of symmetric objects should, ideally, be compatible with \emph{forgetting the symmetry}. That is, it should be possible to map the classification of symmetric objects into the classification of objects without symmetry in a consistent way. This is generalized by the notion of the classification being functorial~\cite{Xiong2018minimalist}.

        Suppose that we have a homomorphism of topological groups
        \begin{equation}
            \phi : H \to G,
        \end{equation}
        and a representation \(\rho: G \to U(N)\) of \(G\). Then \(\rho \phi\) is a representation of \(H\),
        \begin{equation}
            \rho\phi: H \xrightarrow{\phi} G \xrightarrow{\rho} U(N).
        \end{equation}
        Further, we have that \((\rho_1 \oplus \rho_2) \phi = \rho_1 \phi \oplus \rho_2 \phi\) and \((\rho_1 \otimes \rho_2) \phi = \rho_1 \phi \otimes \rho_2 \phi\), so that \(\rho \mapsto \rho \phi\) defines an induced rig homomorphism
        \begin{equation}
            \phi^*_{\mathrm{Rep}}: \mathrm{Rep}(G) \to \mathrm{Rep}(H).
        \end{equation}
        Note that the order of \(G\) and \(H\) are reversed. The map is contravariant.

        What is more, any symmetric \(V \in \QCA_d^{\mathcal{R}}\) can also be regarded as being symmetric with respect to the \(H\) action \(\rho\phi\). Thus, we have a continuous inclusion map
        \begin{equation}
            \phi^*_{\QCA} : \QCA_d^{\mathcal{R}} \to \QCA_d^{\mathcal{R}'}
        \end{equation}
        where \(\phi^*_{\mathrm{Rep}}(\mathcal{R}) \subseteq \mathcal{R}'\). In turn, this continuous map induces homomorphisms
        \begin{equation}
            \phi^*_{n} : \pi_n(\QCA_d^{\mathcal{R}}) \to \pi_n(\QCA_d^{\mathcal{R}'}).
        \end{equation}

        In Appendix~\ref{subapp:symm_zerod} and \autoref{sec:SymmetryOneDimension}, we classified homotopy groups of symmetric QCA with \(d-n\leq 1\). More concretely, we constructed isomorphisms between \(\pi_n(\QCA^{\mathcal{R}}_d)\) and certain groups defined in terms of \(\mathrm{Rep}(G)\). One should want these isomorphisms to be functorial---the induced map \(\phi^*_{n}\) should be reproduced by applying \(\phi^*_{\mathrm{Rep}}\) (or more correctly the induced map on the localization or quotient) to the invariant in \(\mathcal{R}^{-1}\mathrm{Rep}(G)\).

        It is not difficult to see that this is the case. Indeed, as the map \(\phi^*_{\QCA}\) does not change the QCA \(V\), only the representation, carrying through the calculation of the invariant just involves replacing all the representations \(\rho\) which occur with \(\rho \phi\), which is precisely the action of \(\phi^*_{\mathrm{Rep}}\).

        An important example is completely forgetting the symmetry, corresponding to the inclusion of the identity, \(\phi: 1 \to G\). Then we have that all representations reduce to the identity operator \(\rho \phi = \mathbbm{1}_{\mathrm{dim}(\rho)}\). In the rig \(\mathrm{Rep}(G)\), this corresponds to the dimension map,
        \begin{equation}
            \mathrm{dim}: \rho \mapsto \mathrm{dim}(\rho),
        \end{equation}
        where the right hand side is to be interpreted as the sum of \(\mathrm{dim}(\rho) \in \N\) copies of the trivial one-dimensional representation \(1\). For \(G = U(1)\), identifying \(\mathrm{Rep}(U(1))\) with \(\N[x,x^{-1}]\), the dimension map just corresponds to evaluating the rational polynomial invariant \(f(x)\) at \(x=1\)~\cite{Zhang2021U1},
        \begin{equation}
            \mathrm{dim}(f(x)) = f(1).
        \end{equation}

    \subsection{Application to the QP pump}
        \label{subapp:QPpump_invariant}

        Finally, we briefly explain where the instance of the QP pump of \autoref{subsec:QP_pump} sits in the classification of Appendix~\ref{subapp:symm_zerod}. Specifically, we find its associated invariant.

        The QP pump is \(U(1)\)-symmetric and has \(d=1\), \(n=2\). It thus has the same classifying group as loops of unitaries in \(d=0\) with \(U(1)\) symmetry. Indeed, this is the object from which our model of the QP pump was constructed. From Appendix~\ref{subapp:symm_zerod}, to get the classification of \(v(\theta)\) we must find its one-dimensional winding number in each symmetry sector. We have
        \begin{equation}
            v(\theta) = e^{-iW \theta/2}
            \begin{pmatrix}
                1 & 0 \\
                0 & e^{i W \theta}
            \end{pmatrix}, \quad
            Q_0 = 
            \begin{pmatrix}
                1 & 0 \\
                0 & 0
            \end{pmatrix}.
        \end{equation}
        Thus, up to a global phase, \(v(\theta)\) has winding number \(W\) in the charge 0 sector, and winding number \(0\) in the charge 1 sector.
    
        The invariant associated to \(v(\theta)\) is a coset [Eq.~\eqref{eqn:first_winding_symm}] of  formal ratios of representations with the numerator weighted by the winding numbers. We identify the charge 0 representation with 1 and the charge 1 representation with \(x \in \N[x,x^{-1}] \cong \mathrm{Rep}(U(1))\). Then the invariant is
        \begin{equation}
            \widetilde{W}_1^Q[v] = \frac{1\cdot W + x \cdot 0}{1 + x} + \Z = \frac{W}{1 + x} + \Z.
        \end{equation}
        We see that all distinct values of \(W\) indeed define inequivalent values of the invariant.
    
        We can also see the functoriality of Appendix~\ref{subapp:functor} in action. Suppose that we explicitly break the symmetry. Then the charge sector \(1 \leftrightarrow x\) should be mapped to having charge \(0\). The new invariant can be found by evaluating the rational function \(\widetilde{W}_1^Q[v]\) at \(x=1\). We find
        \begin{equation}
            \widetilde{W}_1^Q[v] \mapsto \frac{W}{2} + \Z \in \Q/\Z.
        \end{equation}
        Now, all even \(W\) are equivalent to the trivial phase. Interestingly, the odd \(W\) phases remain non-trivial. They no longer support a quantized edge current, as we can deform the model with \(W=+1\) to \(W=-1\), but there is still some nontrivial dynamics at the edge.
        
        Indeed, we believe the odd \(W\) phase without symmetry to be the two-tone-driven phase of Refs.~\cite{Friedman2022qpedge,Dumitrescu2022}. While the solvable limit of the model in Ref.~\cite{Friedman2022qpedge} is highly singular, which makes computing invariants from integral formulas inconvenient, the edge phenomenology described in Ref.~\cite{Friedman2022qpedge} is consistent with what we expect from having \(\widetilde{W}_1 \neq 0 \bmod 1\). Namely, the edge dynamics was argued to be anomalous in that it could not be achieved by a zero-dimensional localized [in the sense that a continuous micromotion operator exists as in Eq.~\eqref{eqn:FB_app}] system which did not occur at the edge of an extended one-dimensional system. As the model in Ref.~\cite{Friedman2022qpedge} is based around a version of the Affleck-Kennedy-Lieb-Tasaki (AKLT) model, which can be trivialized when stacked with itself, it is thus reasonable to suppose that \(\widetilde{W}_1 = 1/2 \bmod 1\) for that model, as this is the only element of \(\Q/\Z\) of order 2. Thus, the abstract classification reveals a relationship between the interacting QP pump and Refs.~\cite{Friedman2022qpedge,Dumitrescu2022} which was not clear previously. It also extends the arguments for stability in Ref.~\cite{Friedman2022qpedge} to the nonperturbative regime, where a picture of emergent symmetry (as employed in that reference) becomes inapplicable.

\section{QCA on manifolds or with spatial symmetries}
    \label{app:manifolds}

    In this appendix, we present conjectures regarding the classification of QCA on manifolds other than \(\R^d\) (as considered in the main text), or with lattice symmetries.
    
    The \(\Omega\)-spectrum property discussed in \autoref{sec:OmegaSpectrum} assumes that QCA are defined on the integer lattice \(\Z^d\). By coarse graining, any locally finite set of points in \(\R^d\) can be deformed to \(\Z^d\), so the QCA is more properly based on the manifold \(\R^d\). This distinction is important, as the classification of QCA based on the \(d\)-dimensional manifold \(M\), \(\QCA_M\), depends on the homology of \(M\)~\cite{Freedman2020higherd} and not just its dimension.

    It is natural to conjecture that QCA on an orientable manifold \(M\) can all be patched together from QCA defined on the open charts in an atlas for \(M\). That is, we construct a map
    \begin{equation}
        \QCA_M \to \map{M}{\QCA_d}
    \end{equation}
    (where \(\map{M}{\QCA_d}\) is the space of continuous functions from \(M\) to \(\QCA_d\)) by identifying the action of \(V \in \QCA_M\) near each point \(x\in M\) as a QCA on \(\R^d\) via a chart in \(M\) containing \(x\). We suspect that a homotopy equivalence \(\QCA_M \simeq \map{M}{\QCA_d}\) can be constructed in this way.
        
    This leads to the hypothesis that, for orientable \(M\)~\cite{Thorngren2018crystal},
    \begin{equation}
        \ho{X}{\QCA_M} \cong \ho{X\times M}{\QCA_d} \cong h^{d}(X \times M).
        \label{eqn:manifolds}
    \end{equation}

    While we leave the general verification of this conjecture to future work, we remark here that Eq.~\eqref{eqn:manifolds} reproduces the classification of QCA on two-dimensional orientable manifolds in Ref.~\cite{Freedman2020higherd}. For non-orientable manifolds, one must consider twisted generalized cohomology---that is, \(\QCA_M\) would be homotopy equivalent to spaces of sections in a \(\QCA_d\)-bundle over \(M\)~\cite{Xiong2018minimalist,Gaiotto2019gencohomology}.
    
    The conjecture of Eq.~\eqref{eqn:manifolds} can be extended to include systems with lattice symmetries.
    Indeed, the generalized cohomology classification of invertible states is also believed to capture lattice symmetries and antiunitary symmetries~\cite{Xiong2018minimalist,Gaiotto2019gencohomology,Thorngren2018crystal}, and this structure can be extended to QCA.

    Continuing to assume that a QCA on \(M\) can be identified by a map from \(M\) to \(\QCA_d\)
    \begin{equation}
        V\in \QCA_M \longleftrightarrow v: M \to \QCA_d,
    \end{equation}
    lattice symmetries can be treated as actions of the symmetry group \(G\) (\(x \mapsto g \cdot x\)) on the base manifold \(M\)~\cite{Thorngren2018crystal}. 
    Then, identifying the QCA \(V\) with the map \(v\), the QCA is symmetric if
    \begin{equation}
        v(x) = \rho_g v(g \cdot x) \rho_g^\dagger
    \end{equation}
    for all \(g \in G\) and \(x \in M\). Then QCA with lattice symmetries would be fixed points of the mapping space
    \begin{equation}
        \map{M}{\QCA_d}^{\mathcal{R}}
    \end{equation}
    under the simultaneous action of \(G\) on \(\QCA_d\) and \(M\). The action on \(\QCA_d\) will generically not be on-site. For instance, a reflection should also act to reflect the QCA \(v(x)\).

    Abstractly, we are lead to equivariant Bredon cohomology, which classifies such fixed points of mapping spaces when the target space is an \(\Omega\)-spectrum~\cite{Greenlees1995equivariant}. The cohomology may, in general, need to be twisted when either \(M\) is not orientable or, as would be more common for physics, the symmetry action is orientation reversing (involves reflections) or antiunitary.

\bibliography{ALTphase}

\end{document}